\definecolor{Red}{rgb}{0.7,0,0.1}
\definecolor{Green}{rgb}{0,0.7,0}
\def\url@leostyle{%
 \@ifundefined{selectfont}{\def\UrlFont{\sf}}{\def\UrlFont{\scriptsize\ttfamily}}} \makeatother\urlstyle{leo}
\newtheorem{theorem}{Theorem}[section]
\newtheorem{proposition}[theorem]{Proposition}
\newtheorem{lemma}{Lemma}[section]
\newtheorem{corollary}[theorem]{Corollary}
\theoremstyle{definition}
\newtheorem{definition}{Definition}[section]
\theoremstyle{remark}
\newtheorem{remark}[theorem]{Remark}
\newtheoremstyle{dotless}{}{}{\itshape}{}{\bfseries}{}{ }{}
\theoremstyle{dotless}
\newtheorem{assumption}{Assumption}
\def\cD{\mathcal{D}}
\def\cF{\mathcal{F}}
\def\cG{\mathcal{G}}
\def\cK{\mathcal{K}}
\def\cS{\mathcal{S}}
\def\cX{\mathcal{X}}
\def\bE{\mathbb{E}}
\def\bF{\mathbb{F}}
\def\bG{\mathbb{G}}
\def\bP{\mathbb{P}}
\def\bQ{\mathbb{Q}}
\def\bR{\mathbb{R}}
\def\bX{\mathbb{X}}
\def\1{\mathbbm{1}}
\title{Collateralized CVA Valuation with Rating Triggers and Credit Migrations}
\author{Tomasz R. Bielecki\footnote{Tomasz R. Bielecki and Igor Cialenco acknowledge support from the NSF grant DMS-0908099.}\\
\small{Department of Applied Mathematics,}\\[-0.3ex]
\small{Illinois Institute of Technology,}\\[-0.3ex]
\small{Chicago, 60616 IL, USA}\\[-0.3ex]
\url{bielecki@iit.edu}\\
\and
Igor Cialenco\footnotemark[\value{footnote}]\\[-0.3ex]
\small{Department of Applied Mathematics,}\\[-0.3ex]
\small{Illinois Institute of Technology,}\\[-0.3ex]
\small{Chicago, 60616 IL, USA}\\[-0.3ex]
\url{igor@math.iit.edu} \\
\and
Ismail Iyigunler \\
\small{Department of Applied Mathematics,}\\[-0.3ex]
\small{Illinois Institute of Technology,}\\[-0.3ex]
\small{Chicago, 60616 IL, USA}\\[-0.3ex]
\url{iiyigunl@iit.edu} \\}
\date{First Circulated: May 12, 2012}
\begin{document}
\maketitle

\begin{abstract}

\noindent In this paper we discuss the issue of computation of the bilateral credit valuation adjustment (CVA) under rating triggers, and  in presence of ratings-linked margin agreements. Specifically, we consider collateralized OTC contracts, that are subject to rating triggers, between two parties -- an investor and a counterparty. Moreover, we model the margin process as a functional of the credit ratings of the counterparty and the investor.
We employ a Markovian approach for modeling of the rating transitions of the  two parties to the contract.
In this framework, we derive the representation for bilateral CVA. We also introduce a new component  in the decomposition of the counterparty risky price: namely the rating valuation adjustment (RVA) that accounts for the rating triggers. We give two examples of dynamic collateralization schemes where the margin thresholds are linked to the credit ratings  of the parties. We account for the rehypothecation risk in the presence of independent amounts. Our results are illustrated via computation of various counterparty risk adjustments for a CDS contract and for an IRS contract.
\end{abstract}

{\noindent \small
{\it \bf Keywords:}
counterparty risk, credit valuation adjustment, CVA, OTC contracts, break clauses, additional termination events, rating triggers, dynamic collateralization \\[.5ex]
{\it \bf MSC2010:} 62P05; 91G20; 91B30; 91G40; 97M30}

%\newpage
%\tableofcontents
\newpage
\section{Introduction}
Modeling, managing and mitigating counterparty risk is a crucial task for all financial institutions.
One of the most popular mitigation techniques used by the market participants is including additional termination events (ATE) in OTC transactions. As defined in Section 5(b)(vi) of the ISDA Master Agreement (see \cite{ISDA2002}), ATEs allow institutions to terminate and close out the derivatives transactions with a counterparty if a termination event occurs.
We consider a particular, and in fact the most common, termination event: rating triggers.

A rating trigger is defined as a threshold credit rating level, which is agreed upon the initiation of the contract.
If the credit rating of the counterparty or the investor decreases below the trigger level, before the maturity of the contract, the contract is terminated and closed out.
Therefore, rating triggers provide additional protection from a counterparty with a deteriorating credit rating, by allowing the investor to terminate the contract prior to a default event.
Furthermore, since a significant credit deterioration is usually followed by a default event, adding rating triggers serves as a cushion against any losses resulting from such defaults.
On the other hand, rating-triggers are also very effective in mitigating counterparty credit risk.

Counterparty risk modeling has gained paramount importance since the credit crisis in 2008.
As it is noted in Benjamin \cite{Benjamin2010}, just shy of one-thirds of the losses in the crisis were actually due to realized default events, whereas about two-thirds were due to mark-to-market losses associated with the counterparty credit risk.
Naturally, counterparty credit risk modeling literature has grown significantly.
We refer to Bielecki, Cialenco, and Iyigunler \cite{BCI2011}, Assefa, Bielecki, Crepey and Jeanblanc \cite{Assefa2011}, Brigo, Capponi, Pallavicini and Papatheodorou \cite{BCPP2011} and also Crepey \cite{Crepey12funding1,Crepey12funding2} for recent general results in counterparty risk modeling.

On the contrary, the literature on counterparty risk modeling with rating triggers is very limited.
In Yi \cite{Yi2011}, CVA valuation with rating triggers is studied for optional and mandatory termination events, and a compound Poisson model is introduced for modeling rating transitions and default probabilities.
Zhou \cite{Zhou2010} considers practical problems regarding CVA valuation with additional termination events is considered under a simple model from a practitioner's point of view.
Recently, Mercurio \cite{Mercurio2011} studied a similar problem and introduced a valuation model by proposing several generalizing assumptions to simplify the CVA computations considering the unilateral counterparty risk.
However, a comprehensive approach which involves the joint modeling of rating transitions in a risk-neutral setting and the dynamic, ratings-dependent collateralization has not been studied in the literature.

In this paper we consider the problem of collateralized bilateral CVA valuation with rating triggers and credit migrations.
We first find the CVA representation in presence of rating triggers. We show that the value of the underlying OTC contract needs to be adjusted also for the rating triggers.
This new adjustment term is called as the rating valuation adjustment (RVA). We show that RVA represents the expected loss in case of a default event that is preceded by a trigger event. In the bilateral case, we see that RVA is decomposed into two components: URVA and DRVA, representing the rating valuation adjustments for the counterparty's and the investor's rating triggers.
Furthermore, we consider dynamic collateralization using the rating transitions.
In this framework, the collateral thresholds are defined as the functionals of the current credit ratings of the counterparty and the investor.
In practice such rating-dependent margin agreements are standard and they are described in the Credit Support Annex (CSA).
Moreover, we consider the rehypothecation risk of the collateral in the presence of independent amounts.

We employ the Markovian copula approach for modeling the joint rating transitions and the default probabilities of the counterparty and the investor.
%The applications of the Markov copulae is previously studied in Bielecki, Crepey, Jeanblanc, and Rutkowski \cite{BCJRbasket2006} in the context of basket credit derivatives.
Previously, Bielecki, Vidozzi, and Vidozzi \cite{BieleckiEfficient2006,Bielecki2008a} applied Markov copulae to the collateralized debt obligations and ratings-triggered corporate step-up bonds.
Theoretical aspects of the Markov copulae can be found in Bielecki, Jakubowski, Vidozzi, and Vidozzi \cite{BieleckiJVV2008} and  Bielecki, Jakubowski, and Nieweglowski \cite{BJN2011}.
We  illustrate our results with numerical examples. We analyze the impact of early termination clauses and dynamic collateralization on the bilateral and unilateral CVA, as well as the DVA and RVA in case of a CDS and an IRS contract.

This paper is organized as follows. In Section \ref{section:CVAratings} we present a general framework for the valuation of collateralized credit valuation adjustment in the presence of rating triggers. We study dynamic collateralization in Section \ref{section:collateral}, and we study rehypothecation in Section \ref{section:rehypothecation}. Next, we employ the Markovian copula approach for modeling the joint rating transitions of the counterparty and the investor in Section \ref{section:Copulae}. Finally, we present numerical results in case of a CDS contract and an IRS contract in Section \ref{section:Apps}.

\section{Credit Value Adjustment and Collateralization under Rating Triggers}\label{section:CVAratings}

We consider an OTC contract between two names: the investor and the counterparty. In the model we propose in this paper, the counterparty risk associated with this contract will be sensitive to the current credit worthiness of the two parties. We postulate that the creditworthiness of each party is represented by the same $\cK:=\{ 1,2,\ldots ,K\}$ rating categories. We postulate that the ratings are ordered from the best, i.e. $1$, to the worst, i.e. $K$, with the convention that the level $K$ corresponds to a default.

 To model the evolution of the credit worthiness we introduce two right continuous processes $X^1$ and $X^2$ on $(\Omega ,\cG, \bQ)$, with values in $\cK$, and we denote by $\bX^1$ and $\bX^2$ the associated filtrations: $\bX^i=\left (\cX^i_t\right)_{t \geq 0}$ with $\cX^i_t= \sigma(X^i_u,u \leq t)$ for $t \in \bR_+$, $i=1,2$.\footnote{All filtrations considered in this paper are assumed to satisfy the usual conditions.}
 Processes $X^1$ and $X^2$ represent the evolution of the credit ratings of the counterparty and the investor.
 In what follows we shall make additional specific assumptions about processes $X^i,\ i=1,2.$

We assume that we are given a market filtration $\bF$ containing the information about the relevant market variables (i.e. short rate process), and a filtration $\widetilde\bF$ that contains the information regarding the financial contract underlying  our OTC contract. Accordingly, we define $\bG := \bF \vee \widetilde\bF \vee \bX^1 \vee \bX^2$.

The savings account process $B$ is given as,
\begin{equation*}
B_t :=e^{\int_0^t r_s ds} \, , \quad t \in [0,T] \, ,
\end{equation*}
where the $\bF$-adapted process $r$ models the short-term interest rate. We postulate that $\bQ$ represents a pricing measure corresponding to the discount factor $\beta = B^{-1}.$

As noticed above, the two parties are default prone, and the respective default times are given as
\begin{align*} \label{eq:tau}
\tau_i :=\inf \{ t>0:X_t^i=K\} \, , \quad i=1,2.
\end{align*}
We shall also consider the first default time  $\tau:=\tau_1\wedge \tau_2$. %It is clear that the investor or the counterparty will default as soon as $X^1$ or $X^2$ hits the level $K$.

We denote by $R_i\in [0,1]$ a $\cG_{\tau_i}$--measurable random variable, which represents the recovery rate of party $i=1,2$.
In our model, the recovery rates represent the fraction of the mark-to-market value of the underlying contract recovered from the defaulting names, which appears in the close-out amounts.

We denote by $\bE_{\rho}$ the conditional expectation given $\cG_{\rho}$ under $\bQ$ for any $\bG$ stopping time $\rho$.

Let $D$ represent the counterparty risk-free cumulative dividend process of our OTC contract over a finite time horizon $[0,T]$, which is the ``clean'' version of the contract that does not account for the counterparty risk.\footnote{All cash flows are considered from the point of view of the investor.} We assume that $D$ is of finite variation.

In accordance with the classical risk neutral valuation we define the counterparty risk-free ex-dividend price (\emph{clean price} from now on) process of the contract:

\begin{definition}
The ex-dividend price process of a counterparty risk-free contract is defined as,
\footnote{Required integrability properties are assumed implicitly.}
\begin{equation*}
S_t=B_t\bE_t\Big[\int_{]{t},T]}B_u^{-1}dD_u\Big] \, ,
\end{equation*}
for all $t \in [0,T]$.
\end{definition}

\noindent
The process $S$ is also called the clean mark-to-market process. Let us also define the process $S^{\Delta}:=S + \Delta D$.

We consider collateralized contracts, therefore we define a $\bG$-predictable process $C$ on $[0,T]$ representing cumulative collateral amount in the margin account.
Mechanics and the modeling of the collateral process are discussed in Section \ref{section:collateral}.

\subsection{Pricing Bilateral Counterparty Risk}\label{sec:CVAnorating}
%\ii{This part is either going to be skipped or will be moved to the Appendix}
Let us consider the case $K =2$, therefore allowing only the default and the pre-default states prevail. This case corresponds to the models presented and discussed by Bielecki et al. \cite{BCI2011,Bielecki2011a,Bielecki2011} and Assefa et al. \cite{Assefa2011}.

We denote $H^1_t:=\1_{\{\tau_1 \leq t\}}$ and $H^2_t:=\1_{\{\tau_2 \leq t\}}$ as the default indicator processes of $\tau_1$ and $\tau_2$ respectively. Note that we now have $X_t^1=1+H^1_t$ and $X_t^2=1+H^2_t$.
We also define $\tau := \tau_1 \wedge \tau_2$ as the first default time of the counterparty and the investor. Moreover, we let $H:=\1_{\{\tau \leq t\}}$ be the default indicator process corresponding to $\tau$.

Let $\cD^C$ represent the counterparty risky cumulative dividend processes of the contract that is subject to counterparty default risk. Therefore, given $D$, we define the counterparty risky cumulative dividend process $\cD^C$ as follows.

\begin{definition}\label{def:cumDivNORating}
Counterparty-risky cumulative dividend process has the following form
\begin{align*}
\cD_t &=(1-H_t)D_t+H_tD_{\tau -}+\1_{\{\tau <T\}}(C_{\tau }\1_{\{\tau \leq t\}}+(R_1(S^{\Delta}_{\tau }-C_{\tau})^{+}
-(S^{\Delta}_{\tau }-C_{\tau })^{-})\1_{\{\tau =\tau_1 \leq t\}}  \\[0.05in]
\ \ \ &-(R_2(S^{\Delta}_{\tau }-C_{\tau })^{-}-(S^{\Delta}_{\tau}-C_{\tau })^{+})\1_{\{\tau =\tau_2 \leq t\}}
- (S^{\Delta}_{\tau }-C_{\tau })\1_{\{\tau = \tau_1 = \tau_2 \leq t\}}) \, ,
\end{align*}
for all $t\in[0,T]$.
\end{definition}

We proceed with defining the ex-dividend price of a counterparty risky contract,

\begin{definition}
The ex-dividend price process of a counterparty risky contract is given as,
\begin{equation*}
\cS_t=B_t\bE_t\Big[\int_{]{t},T]}B_u^{-1}d\cD_u \Big] \, ,
\end{equation*}
for all $t \in [0,T]$.
\end{definition}

Having defined a counterparty risk-free and a counterparty risky contracts, we are now interested in the difference between their ex-dividend prices. This difference is called as the \emph{Credit Valuation Adjustment} (CVA). Since we consider bilateral case, both the investor and the counterparty can default on their contractual obligations. Therefore, we refer to the CVA as the \emph{bilateral} credit valuation adjustment.
\begin{definition}\label{CVAnoRating}
The bilateral credit valuation adjustment is defined as,
\begin{equation*}
\text{CVA}_t=S_t-\cS_t \, ,
\end{equation*}
for all $t \in [0,\tau \wedge T]$.
\end{definition}

Bilateral counterparty valuation adjustment process has the following representation.
\begin{proposition}
The credit valuation adjustment process can be represented as
\begin{align}\label{CVA}
\text{CVA}_t = &B_t\bE_t\Big[\1_{\{ \tau =\tau_1\leq T\}}B_{\tau}^{-1}(1-R_1)(S^{\Delta}_{\tau}-C_{\tau })^{+}\Big] \nonumber \\[0.05in]
-&B_t\bE_t\Big[\1_{\{\tau =\tau_2\leq T\}}B_{\tau}^{-1}(1-R_2)(S^{\Delta}_{\tau}-C_{\tau })^{-}\Big] \, ,
\end{align}
for all $t \in [0,\tau \wedge T]$.
\end{proposition}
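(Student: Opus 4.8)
The plan is to work straight from the definition $\text{CVA}_t=S_t-\cS_t$ together with the risk-neutral representations of the two ex-dividend prices, so that everything reduces to the difference of the two cumulative dividend streams:
\begin{equation*}
\text{CVA}_t = B_t\bE_t\Big[\int_{]t,T]}B_u^{-1}\,d(D_u-\cD_u)\Big].
\end{equation*}
First I would read off the structure of $\cD$ from Definition \ref{def:cumDivNORating}: for $u<\tau$ one has $1-H_u=1$ and every indicator vanishes, so $\cD_u=D_u$; on $\{\tau<T\}$ the process $\cD$ has a single jump at $\tau$ and is constant on $[\tau,T]$. Hence on $\{t<\tau\}$ the clean stream contributes $\int_{]t,T]}B_u^{-1}\,dD_u$ while the risky stream contributes $\int_{]t,\tau[}B_u^{-1}\,dD_u+B_\tau^{-1}\Delta\cD_\tau$, and the difference collapses onto $[\tau,T]$ minus the close-out jump:
\begin{equation*}
\int_{]t,T]}B_u^{-1}\,d(D_u-\cD_u) = \1_{\{\tau<T\}}\Big(B_\tau^{-1}\Delta D_\tau+\int_{]\tau,T]}B_u^{-1}\,dD_u-B_\tau^{-1}\Delta\cD_\tau\Big),
\end{equation*}
where $\Delta D_\tau=D_\tau-D_{\tau-}$ and $\Delta\cD_\tau=\cD_\tau-\cD_{\tau-}$ (note $\cD_{\tau-}=D_{\tau-}$).

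The second step is a tower-property argument. Because $t\le\tau$ we have $\cG_t\subseteq\cG_\tau$, and since $\1_{\{\tau<T\}}$, $B_\tau^{-1}$, $\Delta D_\tau$ and $\Delta\cD_\tau$ are all $\cG_\tau$-measurable, conditioning on $\cG_\tau$ lets me replace $\bE_\tau\big[\int_{]\tau,T]}B_u^{-1}\,dD_u\big]$ by $B_\tau^{-1}S_\tau$ from the definition of the clean price. Using $\Delta D_\tau+S_\tau=S^{\Delta}_\tau$ then yields the compact intermediate identity
\begin{equation*}
\text{CVA}_t = B_t\bE_t\Big[\1_{\{\tau<T\}}B_\tau^{-1}\big(S^{\Delta}_\tau-\Delta\cD_\tau\big)\Big].
\end{equation*}
This is the conceptual core: the adjustment is the discounted expected shortfall between the clean close-out value $S^{\Delta}_\tau$ owed at default and the actual close-out payment $\Delta\cD_\tau$.

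Finally I would substitute the explicit form of $\Delta\cD_\tau$ and simplify on the three disjoint events $\{\tau_1<\tau_2\}$, $\{\tau_2<\tau_1\}$ and $\{\tau_1=\tau_2\}$, each intersected with $\{\tau\le T\}$. Writing $Y:=S^{\Delta}_\tau-C_\tau=Y^{+}-Y^{-}$ and tracking the indicators $\1_{\{\tau=\tau_1\}}$, $\1_{\{\tau=\tau_2\}}$, $\1_{\{\tau=\tau_1=\tau_2\}}$ appearing in $\cD$, a short cancellation gives $S^{\Delta}_\tau-\Delta\cD_\tau=(1-R_1)Y^{+}$ on $\{\tau=\tau_1<\tau_2\}$, $=-(1-R_2)Y^{-}$ on $\{\tau=\tau_2<\tau_1\}$, and $=(1-R_1)Y^{+}-(1-R_2)Y^{-}$ on $\{\tau_1=\tau_2\}$. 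Re-aggregating these three cases is exactly equivalent to the two-term expression $\1_{\{\tau=\tau_1\le T\}}(1-R_1)Y^{+}-\1_{\{\tau=\tau_2\le T\}}(1-R_2)Y^{-}$, which is precisely \eqref{CVA}.

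I expect the main obstacle to be the bookkeeping at the default time: correctly isolating the atom of $D$ at $\tau$ so that the dividend paid exactly at default is folded into $S^{\Delta}_\tau$ rather than lost, and keeping the three default scenarios — especially the simultaneous-default event — consistent between the definition of $\cD$ and the final formula. The restriction $t\in[0,\tau\wedge T]$ is essential precisely because the tower step needs $\cG_t\subseteq\cG_\tau$; I would also remark that the harmless mismatch between $\{\tau<T\}$ and $\{\tau\le T\}$ is immaterial under the standing assumption that $D$ carries no atom at the horizon $T$.
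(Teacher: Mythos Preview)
Your proof is correct and follows essentially the same route as the paper. The paper defers this particular proposition to \cite{BCI2011}, but its in-text proof of the analogous result for $\text{CVA}^R$ proceeds identically: compute the difference of the two dividend streams, condition on $\cG_{\tau}$ and use the tower property to recognise $B_\tau^{-1}(S_\tau+\Delta D_\tau)=B_\tau^{-1}S^{\Delta}_\tau$, then carry out the indicator algebra with $Y=Y^+-Y^-$ across the three default scenarios; your write-up is simply a slightly more streamlined version (working directly with the jump $\Delta\cD_\tau$ rather than integrating against $dH_u$ and $d[H,H^i]_u$).
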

A proof of this proposition, where the underlying is assumed to be a CDS contract, can be found in Bielecki et al. \cite{BCI2011}. Note that the bilateral CVA can be decomposed as
\begin{align}
\text{UCVA}_t&=B_t\bE_t[\1_{\{ \tau =\tau_1\leq T\}}B_{\tau }^{-1}(1-R_1)(S^{\Delta}_{\tau}-C_{\tau })^{+}] \, , \label{eq:DVA} \\[0.05in]
\text{DVA}_t&=B_t\bE_t[\1_{\{ \tau =\tau_2\leq T\}}B_{\tau }^{-1}(1-R_2)(S^{\Delta}_{\tau}-C_{\tau })^{-}] \nonumber \, ,
\end{align}
for all $t \in [0,\tau \wedge T]$.
These components represent the two legs of bilateral CVA, namely the Unilateral Credit Valuation Adjustment (UCVA) and the Debt Valuation Adjustment (DVA), representing the expected losses in case of the counterparty's and the investor's defaults, respectively.
\subsection{Pricing Bilateral Counterparty Risk with Rating Triggers}\label{sec:CVArating}
We now proceed with introducing the rating trigger times, and the close-out cash flows in the CVA valuation. We also show how the clean price of our OTC contract can be adjusted for the counterparty risk and the rating triggers.

\subsubsection{Trigger Times}

As we already said, the counterparty risk of the OTC contract that we study in this paper is sensitive to the creditworthiness of the investor and the counterparty.
Specifically, we consider an OTC contract that is subject to a \textit{rating trigger clause}:
\begin{quote}
If the investor's or the counterparty's credit rating deteriorates to or below the \textit{trigger level} (except the default level), the contract is terminated and closed out.
Note that there are no MtM losses associated with the trigger events.
\end{quote}

The trigger levels are set as $K_1$ for the counterparty, and $K_2$ for the investor,\footnote{It is implicitly assumed that $X^i_0<K_i$ for $i=1,2$.}
where $1<K_1,K_2\leq K.$ Let $\tau_i^R$ represent the first time that the i-th party's credit rating crosses the his rating trigger, that is
\begin{align*}
\tau_i^R &:=\inf \{ {t>0:X_t^i\geq K_i\}} \, , \quad i=1,2.
\end{align*}
The corresponding rating trigger event times\footnote{That is, excluding default.} are defined as
\begin{align*}
\widehat{\tau_i}^{R} &:=\inf \{ {t>0:X_t^i \in \{ K_i,K_{i+1},\dots,K-1 \} \}} \, , \quad i=1,2,
\end{align*}
and we set $$\tau^R:=\tau_1^R \wedge \tau_2^R \quad \text{and} \quad \widehat{\tau}^{R}:=\widehat{\tau_1}^{R} \wedge \widehat{\tau_2}^{R} \, .$$
Clearly, $\tau_i^R = \widehat{\tau_i}^{R} \wedge \tau_i$ for $i=1,2$.

We denote by $H^R_t:=\1 _{\{\tau^R \leq t\}}$ and $\widehat{H}^R_t:=\1_{\{\widehat{\tau}^R \leq t\}}$ the rating trigger indicator processes  including and not including the default event, respectively.
%$H^i_t :=\1_{\{X^i_t = K\}}=\1_{\{\tau_i \leq t\}}$ for $i=1,2,$ the default indicator processes of the counterparty and the investor, respectively. We will also need the following definitions

\subsubsection{Cash Flows, Prices and Adjustments}\label{subsec:dividRat}

The close-out portion of the cumulative dividend process of the counterparty risky contract needs to account for the MtM exchange without incurring any losses at a trigger time other than  default. On the other hand,
if a trigger event occurs simultaneously with a default event, the deal will be settled according to the default event.
Consequently, we propose the following definition of the cumulative dividend process of the counterparty risky contract,
\begin{definition}\label{def:cumDivRating}
The counterparty-risky cumulative dividend process of an OTC contract subject to rating triggers is defined as
\begin{align*}
{\cD}^R_t &=(1-H^R_t)D_t
+D_{\tau^R-}H^R_t + \1_{\{\tau^R \leq T\}}\Big( C_{\tau^R}H^R_t \\[0.05in]
& \quad +\left(R_1(S^{\Delta}_{\tau^R} -C_{\tau^R})^{+}-(S^{\Delta}_{\tau^R} -C_{\tau^R})^{-}\right)[H^R,H^1]_t  \\[0.05in]
& \quad -\left(R_2(S^{\Delta}_{\tau^R} -C_{\tau^R})^{-}-(S^{\Delta}_{\tau^R} -C_{\tau^R})^{+}\right)[H^R,H^2]_t \\[0.05in]
& \quad -\left(S^{\Delta}_{\tau^R} -C_{\tau^R}\right)[[H^R,H^1] \, ,H^2]_t
+\left(S^{\Delta}_{\tau^R} -C_{\tau^R}\right)[H^R,\widehat{H}^{R}]_t \Big) \, ,
\end{align*}
for all $t\in[0,T]$.
\end{definition}

Accordingly, the ex-dividend price processes associated with a counterparty risky contract with rating triggers is defined as follows,

\begin{definition}\label{def:CDSwRT}
The ex-dividend price process ${S}_t^R$ of a counterparty risky contract with rating triggers, maturing at time $T$, is defined as
\begin{equation*}
{S}_t^R=B_t\bE_t \Big[ \int_{]t,T]} B_u^{-1} d {\cD}^R_u \Big] \, ,
\end{equation*}
for all $t\in[0,T]$.
\end{definition}

We now introduce the credit valuation adjustment term when the underlying contract is subject to rating triggers.
\begin{definition}
The bilateral credit valuation adjustment with rating triggers is defined as,
\begin{equation}\label{CVAR}
\text{CVA}_t^R=S_t-S_t^R \, ,
\end{equation}
for $t\in[0,\tau^R \wedge T]$.
\end{definition}

The following representation generalizes the results derived in Bielecki et al. \cite{BCI2011}.
\begin{proposition}
The bilateral credit valuation adjustment defined in \eqref{CVAR} can be represented as
\begin{align}\label{CVAR-1}
\text{CVA}_t^R = &B_t\bE_t\Big[\1_{\{\tau^R=\tau_1\leq T\}}B_{\tau^R}^{-1}(1-R_1)(S^{\Delta}_{\tau^R }-C_{\tau^R})^{+}\Big] \nonumber \\[0.05in]
-&B_t\bE_t\Big[\1_{\{ \tau^R=\tau_2\leq T\}}B_{\tau^R}^{-1}(1-R_2)(S^{\Delta}_{\tau^R } -C_{\tau^R})^{-}\Big] \, ,
\end{align}
for $t\in[0,\tau^R \wedge T]$.
\end{proposition}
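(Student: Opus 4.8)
The plan is to expand the adjustment directly from its definition and reduce the resulting stochastic integral to a single random cash flow occurring at the trigger time $\tau^R$. Starting from \eqref{CVAR} and the integral representations of $S_t$ and $S^R_t$, I would write
$$\text{CVA}^R_t = S_t - S^R_t = B_t\bE_t\Big[\int_{]t,T]} B_u^{-1}\,d(D_u - \cD^R_u)\Big],$$
which is valid for $t\in[0,\tau^R\wedge T]$. The first observation is that, by Definition \ref{def:cumDivRating}, $\cD^R_u = D_u$ for $u < \tau^R$, whereas for $u \ge \tau^R$ the process $\cD^R$ is frozen at $\cD^R_{\tau^R} = D_{\tau^R-} + \Delta\cD^R_{\tau^R}$, with $\Delta\cD^R_{\tau^R}$ the close-out jump. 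Hence on $\{\tau^R \le T\}$ the integral collapses to the jump contribution at $\tau^R$ together with the clean dividends accrued strictly after $\tau^R$, and on $\{\tau^R > T\}$ it vanishes.

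Next I would condition on $\cG_{\tau^R}$. Splitting the integral at $\tau^R$ gives, on $\{\tau^R \le T\}$,
$$\int_{]t,T]} B_u^{-1}\,d(D_u - \cD^R_u) = B_{\tau^R}^{-1}\Delta D_{\tau^R} - B_{\tau^R}^{-1}\Delta\cD^R_{\tau^R} + \int_{]\tau^R,T]} B_u^{-1}\,dD_u.$$
Taking $\bE_{\tau^R}[\cdot]$ and using the definition of the clean price, $\bE_{\tau^R}\big[\int_{]\tau^R,T]} B_u^{-1} dD_u\big] = B_{\tau^R}^{-1}S_{\tau^R}$, together with $S^\Delta_{\tau^R} = S_{\tau^R} + \Delta D_{\tau^R}$, the tower property yields
$$\text{CVA}^R_t = B_t\bE_t\Big[\1_{\{\tau^R \le T\}} B_{\tau^R}^{-1}\bigl(S^\Delta_{\tau^R} - \Delta\cD^R_{\tau^R}\bigr)\Big].$$
This is the structural heart of the argument, and it mirrors the $K=2$ computation behind \eqref{CVA}.

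It then remains to identify $S^\Delta_{\tau^R} - \Delta\cD^R_{\tau^R}$. Reading $\Delta\cD^R_{\tau^R}$ off Definition \ref{def:cumDivRating} and using that the indicator processes jump at most once, so that $[H^R,H^i]_{\tau^R} = \1_{\{\tau^R=\tau_i\}}$, $[[H^R,H^1],H^2]_{\tau^R} = \1_{\{\tau^R=\tau_1=\tau_2\}}$ and $[H^R,\widehat{H}^R]_{\tau^R} = \1_{\{\tau^R=\widehat{\tau}^{R}\}}$, the close-out jump is
$$\Delta\cD^R_{\tau^R} = C_{\tau^R} + (R_1\xi^+ - \xi^-)\1_{\{\tau^R=\tau_1\}} - (R_2\xi^- - \xi^+)\1_{\{\tau^R=\tau_2\}} - \xi\,\1_{\{\tau^R=\tau_1=\tau_2\}} + \xi\,\1_{\{\tau^R=\widehat{\tau}^{R}\}},$$
where $\xi := S^\Delta_{\tau^R} - C_{\tau^R}$. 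Substituting $S^\Delta_{\tau^R} = \xi + C_{\tau^R}$ cancels the collateral term, and a direct expansion using $\xi = \xi^+ - \xi^-$ reduces the integrand to $(1-R_1)\xi^+\1_{\{\tau^R=\tau_1\}} - (1-R_2)\xi^-\1_{\{\tau^R=\tau_2\}}$ plus a remainder $\xi\bigl(1 - \1_{\{\tau^R=\tau\}} - \1_{\{\tau^R=\widehat{\tau}^{R}\}}\bigr)$, where $\tau = \tau_1 \wedge \tau_2$. Since $\tau^R = \widehat{\tau}^{R} \wedge \tau$, exactly one of $\{\tau^R = \widehat{\tau}^{R}\}$ and $\{\tau^R = \tau\}$ occurs outside the event $\{\widehat{\tau}^{R} = \tau\}$, so the remainder vanishes once one argues that $\{\widehat{\tau}^{R} = \tau\}$ — a rating trigger of one party coinciding with a default — is $\bQ$-null under the Markovian model of Section \ref{section:Copulae}. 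Plugging the surviving terms back into the expectation gives \eqref{CVAR-1}.

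The main obstacle is precisely this last bookkeeping step: matching the co-jump (``bracket'') terms to the partition of $\{\tau^R \le T\}$ into pure-rating-trigger, single-default and simultaneous-default events, verifying that the correction $-\xi\,\1_{\{\tau^R=\tau_1=\tau_2\}}$ produces exactly $(1-R_1)\xi^+ - (1-R_2)\xi^-$ on $\{\tau_1=\tau_2\}$, and disposing of the degenerate cross-party simultaneities. By contrast, the reduction carried out in the first two paragraphs is routine, provided one is careful to route the dividend jump $\Delta D_{\tau^R}$ into the cum-dividend value $S^\Delta_{\tau^R}$ rather than double-counting it.
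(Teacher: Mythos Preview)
Your argument is essentially the same as the paper's: both compute $B_t\bE_t\big[\int_{]t,T]}B_u^{-1}d(D_u-\cD^R_u)\big]$, condition on $\cG_{\tau^R}$ to turn the post-$\tau^R$ clean dividends into $B_{\tau^R}^{-1}S_{\tau^R}$ (and hence recover $S^\Delta_{\tau^R}$), and then reduce the resulting expression via $\xi=\xi^+-\xi^-$ and inclusion--exclusion on the indicators $\1_{\{\tau^R=\tau_1\}}$, $\1_{\{\tau^R=\tau_2\}}$, $\1_{\{\tau^R=\tau_1=\tau_2\}}$, $\1_{\{\tau^R=\widehat{\tau}^R\}}$. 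The paper carries the integrals term by term before collapsing them at $\tau^R$; you collapse first and then read off $\Delta\cD^R_{\tau^R}$, but the content is identical.

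One point worth noting: the residual term you isolate, $\xi\bigl(1-\1_{\{\tau^R=\tau\}}-\1_{\{\tau^R=\widehat{\tau}^R\}}\bigr)=-\xi\,\1_{\{\tau=\widehat{\tau}^R\}}$, is precisely the term the paper silently drops when it writes $\1_{\{\tau^R\le T\}}\xi-\1_{\{\tau^R\le T\}}\xi$ in its penultimate display. So your explicit appeal to $\bQ(\tau=\widehat{\tau}^R)=0$ is not an extra step beyond the paper --- it is a step the paper takes without comment. You are right that under the Markov copula of Section~\ref{section:Copulae} (where simultaneous jumps of $X^1$ and $X^2$ land in the same state, cf.\ \eqref{Eq:alpha}) a default of one party cannot coincide with a pure rating trigger of the other; in the general setting one should simply read this as an implicit standing assumption behind Definition~\ref{def:cumDivRating}, consistent with the paper's remark that a simultaneous trigger-and-default is ``settled according to the default event''.
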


\begin{proof}
Using Definition \ref{def:cumDivRating}, we get,
\begin{align*}
  dD_t-d{\cD}^R_t &= dD_t- (1-H^R_t)dD_t -D_{t-}dH^R_t +D_{\tau^R-}dH^R_t - \1_{\{\tau^R \leq T\}}C_{\tau^R}dH^R_t  \\[0.05in]
& \quad -\1_{\{\tau^R \leq T\}}\big(R_1(S^{\Delta}_{\tau^R}-C_{\tau^R})^{+}-(S^{\Delta}_{\tau^R}-C_{\tau^R})^{-}\big)d[H^R,H^1]_t \\[0.05in]
& \quad +\1_{\{\tau^R \leq T\}}\big(R_2(S^{\Delta}_{\tau^R}-C_{\tau^R})^{-} -(S^{\Delta}_{\tau^R}-C_{\tau^R})^{+}\big)d[H^R,H^2]_t \\[0.05in]
& \quad +\1_{\{\tau^R \leq T\}}(S^{\Delta}_{\tau^R}-C_{\tau^R})d[[H^R,H^1] \, ,H^2]_t  \\[0.05in]
& \quad -\1_{\{\tau^R \leq T\}}(S^{\Delta}_{\tau^R}-C_{\tau^R})d[H^R,\widehat{H}^R]_t \, ,
\end{align*}
Integrating both sides leads to,
\begin{align*}
  \int_{]{t},T]}B_u^{-1}(dD_u-d{\cD}^R_u) & = \int_{]{t},T]}B_u^{-1}H^R_udD_u - \int_{]{t},T]}B_u^{-1}D_{u-}dH^R_u + \int_{]{t},T]}B_u^{-1}D_{\tau^R-}dH^R_u \\[0.05in]
& \hspace{-2cm} -\int_{]{t},T]}\1_{\{\tau^R \leq T\}}B_u^{-1}\big(R_1(S^{\Delta}_{\tau^R}-C_{\tau^R})^{+}  -(S^{\Delta}_{\tau^R}-C_{\tau^R})^{-}\big)d[H^R,H^1]_u \\[0.05in]
& \hspace{-2cm} +\int_{]{t},T]}\1_{\{\tau^R \leq T\}}B_u^{-1}\big(R_2(S^{\Delta}_{\tau^R}-C_{\tau^R})^{-} -(S^{\Delta}_{\tau^R}-C_{\tau^R})^{+}\big)d[H^R,H^2]_u \\[0.05in]
& \hspace{-2cm} +\int_{]{t},T]}\1_{\{\tau^R \leq T\}}B_u^{-1}(S^{\Delta}_{\tau^R}-C_{\tau^R})d[[H^R,H^1] \, ,H^2]_u  \\[0.05in]
& \hspace{-2cm} -\int_{]{t},T]}\1_{\{\tau^R \leq T\}}B_u^{-1}(S^{\Delta}_{\tau^R}-C_{\tau^R})d[H^R,\widehat{H}^R]_u- \int_{]{t},T]}\1_{\{\tau^R \leq T\}}B_u^{-1}C_{\tau^R}dH^R_u \, .
\end{align*}
Since,
\begin{equation*}
   \int_{]{t},T]}B_u^{-1}D_{\tau^R-}dH^R_u - \int_{]{t},T]}B_u^{-1}D_{u-}dH^R_u = 0 \, ,
\end{equation*}
we obtain,
\begin{align*}
  \int_{]{t},T]}B_u^{-1}(dD_u-d{\cD}^R_u) & = \int_{]{t},T]}B_u^{-1}H^R_udD_u - \int_{]{t},T]}\1_{\{\tau^R \leq T\}}B_u^{-2}C_{\tau^R}dH^R_u \\[0.05in]
& \hspace{-2cm} -\int_{]{t},T]}\1_{\{\tau^R \leq T\}}B_u^{-1}\big(R_1(S^{\Delta}_{\tau^R}-C_{\tau^R})^{+} -(S^{\Delta}_{\tau^R}-C_{\tau^R})^{-}\big)d[H^R,H^1]_u \\[0.05in]
& \hspace{-2cm} +\int_{]{t},T]}\1_{\{\tau^R \leq T\}}B_u^{-1}\big(R_2(S^{\Delta}_{\tau^R}-C_{\tau^R})^{-}  -(S^{\Delta}_{\tau^R}-C_{\tau^R})^{+}\big)d[H^R,H^2]_u \\[0.05in]
& \hspace{-2cm} +\int_{]{t},T]}\1_{\{\tau^R \leq T\}}B_u^{-1}(S^{\Delta}_{\tau^R}-C_{\tau^R})d[[H^R,H^1] \, ,H^2]_u  \\[0.05in]
& \hspace{-2cm} -\int_{]{t},T]}\1_{\{\tau^R \leq T\}}B_u^{-1}(S^{\Delta}_{\tau^R}-C_{\tau^R})d[H^R,\widehat{H}^R]_u \, .
\end{align*}
Conditioning on $\tau^R$, we get
\begin{align}
 \1_{\{t \leq  \tau^R \wedge T\}}\ \bE_{\tau^R} \Big[ \int_{]{t},T]}B_u^{-1}(dD_u-d{\cD}^R_u) \Big]
 &  =\1_{\{t \leq  \tau^R \wedge T\}}\ \bE_{\tau^R} \Bigg[ \int_{]{t},T]}B_u^{-1}H^R_udD_u \nonumber \\[0.05in]
  & \hspace{-2cm} - \int_{]{t},T]}\1_{\{\tau^R \leq T\}}B_u^{-1}C_{\tau^R}dH^R_u \nonumber \\[0.05in]
& \hspace{-2cm} -\int_{]{t},T]}\1_{\{\tau^R \leq T\}}B_u^{-1}\big(R_1(S^{\Delta}_{\tau^R}-C_{\tau^R})^{+}  -(S^{\Delta}_{\tau^R}-C_{\tau^R})^{-}\big)d[H^R,H^1]_u \nonumber \\[0.05in]
& \hspace{-2cm} +\int_{]{t},T]}\1_{\{\tau^R \leq T\}}B_u^{-1}\big(R_2(S^{\Delta}_{\tau^R}-C_{\tau^R})^{-} -(S^{\Delta}_{\tau^R}-C_{\tau^R})^{+}\big)d[H^R,H^2]_u \nonumber \\[0.05in]
& \hspace{-2cm} +\int_{]{t},T]}\1_{\{\tau^R \leq T\}}B_u^{-1}(S^{\Delta}_{\tau^R}-C_{\tau^R})d[[H^R,H^1] \, ,H^2]_u \label{eq:condTau} \\[0.05in]
& \hspace{-2cm} -\int_{]{t},T]}\1_{\{\tau^R \leq T\}}B_u^{-1}(S^{\Delta}_{\tau^R}-C_{\tau^R})d[H^R,\widehat{H}^R]_u \Bigg] \, . \nonumber
\end{align}

Notice that, since $t \in [0, \tau^R \wedge T]$, we have
\begin{align}
   \int_{]t,T]} B_u^{-1}H^R_u dD_u &= \int_{]t,\tau^R[}B_u^{-1}H^R_udD_u+ \int_{[\tau^R,T]}B_u^{-1}H^R_udD_u \nonumber \\[0.05in]
   & =  \int_{[\tau^R,T]}B_u^{-1} H^R_u dD_u \, . \label{eq:StauD}
\end{align}
Therefore,
\begin{align}
 &\1_{\{t \leq  \tau^R \wedge T\}} \bE_{\tau^R} \Big[ \int_{]{t},T]}B_u^{-1}H^R_udD_u \Big]
  =  \1_{\{t \leq  \tau^R \wedge T\}}\bE_{\tau^R} \Big[ \int_{[\tau^R,T]}B_u^{-1}H^R_udD_u \Big]\nonumber \\
   & = \1_{\{t \leq  \tau^R \wedge T\}}\1_{\{\tau^R \leq T\}} B^{-1}_{\tau^R}( S_{\tau^R} + \Delta D_{\tau^R} )
   = \1_{\{t \leq  \tau^R \}}\1_{\{\tau^R \leq T\}} B^{-1}_{\tau^R}( S_{\tau^R} + \Delta D_{\tau^R} )\,  .
\end{align}
Taking conditional expectation given $\cG_t$ and using the tower property in \eqref{eq:condTau} reads
\begin{align}
\1_{\{t \leq  \tau^R\wedge T \}}(S_t-{S}_t^R) &= \1_{\{t \leq  \tau^R \wedge T \}}B_t\bE_t\Big[\int_{]{t},T]}B_u^{-1}(dD_u-d{\cD}^R_u )\Big] \label{eq:condt}  \\[0.05in]
& \hspace{-3cm} =\1_{\{t \leq  \tau^R\wedge T \}} B_t\bE_t\Big[B^{-1}_{\tau^R}\big( \1_{\{\tau^R \leq T\}}(S^{\Delta}_{\tau^R} - C_{\tau^R})
- (R_1(S^{\Delta}_{\tau^R}-C_{\tau^R})^{+} -(S^{\Delta}_{\tau^R}-C_{\tau^R})^{-})\1_{\{\tau^R=\tau_1 \leq T\}} \nonumber \\[0.05in]
& \hspace{-2cm} +(R_2(S^{\Delta}_{\tau^R}-C_{\tau^R})^{-}-(S^{\Delta}_{\tau^R}-C_{\tau^R})^{+})\1_{\{ \tau^R=\tau_2 \leq T\}} \nonumber  \\[0.05in]
& \hspace{-2cm} +(S^{\Delta}_{\tau^R}-C_{\tau^R})\1_{\{\tau^R=\tau_1=\tau_2 \leq T\}}
-(S^{\Delta}_{\tau^R}-C_{\tau^R})\1_{\{\tau^R = \widehat{\tau}^R \leq T\}}\big)\Big] \, . \nonumber
\end{align}
Since
$$(S^Delta_{\tau^R}-C_{\tau^R})=(S^\Delta_{\tau^R}-C_{\tau^R})^{+}-(S^Delta_{\tau^R}-C_{\tau^R})^{-} \, ,$$
it follows that \eqref{eq:condt} is equivalent to
\begin{align*}
\1_{\{t \leq  \tau^R\wedge T \}}(S_t-{S}_t^R) &=\1_{\{t \leq  \tau^R\wedge T \}}B_t\bE_t\Big[B^{-1}_{\tau^R}\big(\1_{\{\tau^R \leq T\}}(S^{\Delta}_{\tau^R}-C_{\tau^R})  \\[0.05in]
& \quad  -(R_1(S^{\Delta}_{\tau^R}-C_{\tau^R})^{+}+(S^{\Delta}_{\tau^R}-C_{\tau^R})
-(S^{\Delta}_{\tau^R}-C_{\tau^R})^{+})\1_{\{\tau^R=\tau_1 \leq T\}}  \\[0.05in]
& \quad +(R_2(S^{\Delta}_{\tau^R}-C_{\tau^R})^{-}-(S^{\Delta}_{\tau^R}-C_{\tau^R})^{-}-(S^{\Delta}_{\tau^R}-C_{\tau^R}))\1_{\{ \tau^R=\tau_2\leq T\}}  \\[0.05in]
& \quad +(S^{\Delta}_{\tau^R}-C_{\tau^R})\1_{\{\tau^R=\tau_1=\tau_2 \leq T \}}
-(S^{\Delta}_{\tau^R}-C_{\tau^R})\1_{\{ \tau^R = \widehat{\tau}^R \leq T\}}\big)\Big] \, .
\end{align*}
After simplifying the terms above, we obtain
\begin{align*}
\1_{\{t \leq  \tau^R\wedge T \}}(S_t-{S}_t^R)  &=\1_{\{t \leq  \tau^R\wedge T \}}B_t\bE_t\Big[B^{-1}_{\tau^R}\big(\1_{\{\tau^R \leq T\}}(S^{\Delta}_{\tau^R}-C_{\tau^R}) \\[0.05in]
& \quad  +(1-R_1)(S^{\Delta}_{\tau^R}-C_{\tau^R})^{+}\1_{\{\tau^R=\tau_1 \leq T\}}
-(S^{\Delta}_{\tau^R}-C_{\tau^R})\1_{\{ \tau^R=\tau_1 \leq T\}}  \\[0.05in]
& \quad -(1-R_2)(S^{\Delta}_{\tau^R}-C_{\tau^R})^{-}\1_{\{ \tau^R=\tau_2 \leq T\}}
-(S^{\Delta}_{\tau^R}-C_{\tau^R})\1_{\{\tau^R=\tau_2 \leq T\}} \\[0.05in]
& \quad +(S^{\Delta}_{\tau^R}-C_{\tau^R})\1_{\{\tau^R=\tau_1=\tau_2 \leq T\}}
-(S^{\Delta}_{\tau^R}-C_{\tau^R})\1_{\{\tau^R = \widehat{\tau}^R \leq T\}}\big)\Big] \, ,
\end{align*}
which is equivalent to
\begin{align*}
\1_{\{t \leq  \tau^R\wedge T \}}(S_t-{S}_t^R)  &= \1_{\{t \leq  \tau^R\wedge T \}}B_t\bE_t \Big [B^{-1}_{\tau^R}\big[\1_{\{\tau^R \leq T\}}(S^{\Delta}_{\tau^R}-C_{\tau^R})-\1_{\{\tau^R \leq T\}}(S^{\Delta}_{\tau^R}-C_{\tau^R})  \\[0.05in]
&  +(1-R_1)(S^{\Delta}_{\tau^R}-C_{\tau^R})^{+}\1_{\{\tau^R=\tau_1\leq T\}}
 -(1-R_2)(S^{\Delta}_{\tau^R}-C_{\tau^R})^{-}\1_{\{\tau^R=\tau_2\leq T\}}\big] \Big] \, .
\end{align*}
Finally, we find that
\begin{align*}
S_t-{S}_t^R &= B_t \bE_t\Big[\1_{\{\tau^R=\tau_1\leq T\}}B_{\tau^R}^{-1}(1-R_1)(S^{\Delta}_{\tau^R}-C_{\tau^R})^{+}\Big]  \\[0.05in]
& \quad -B_t \bE_t\Big[\1_{\{ \tau^R=\tau_2\leq T\}}B_{\tau^R}^{-1}(1-R_2)(S^{\Delta}_{\tau^R}-C_{\tau^R})^{-}\Big] \, ,
\end{align*}
on the set $t \in [0,\tau^R \wedge T]$, which proves our claim.
\end{proof}
\begin{remark}
Note that since there are no losses associated with the trigger events other than defaults, and since CVA (as well as CVA$^R$) only reflects the expected losses, these cases do not appear directly in \eqref{CVAR-1}.
\end{remark}
Now, similar to \eqref{eq:DVA}, we can define
\begin{align*}
\text{UCVA}^R_t&:=B_t\bE_t[\1_{\{ \tau^R =\tau_1\leq T\}}B_{\tau^R }^{-1}(1-R_1)(S^{\Delta}_{\tau^R}-C_{\tau^R })^{+}] \, , \\[0.05in]
\text{DVA}^R_t &:=B_t\bE_t[\1_{\{ \tau^R=\tau_2\leq T\}}B_{\tau^R}^{-1}(1-R_2)(S^{\Delta}_{\tau^R}-C_{\tau^R})^{-}] \, ,
\end{align*}
for $t \in [0,\tau^R \wedge T]$. Therefore, the credit valuation adjustment representation found in \eqref{CVAR-1} can be decomposed as $\text{CVA}^R = \text{UCVA}^R - \text{DVA}^R$.

\begin{remark}
Note that although banks report on DVA (or DVA$^R$ in our case) in their earnings reports, it is not included in determining the capital levels.
This is also stated in \cite[Paragraph 75]{BISbasel3} as
\begin{quote}
Derecognise in the calculation of Common Equity Tier 1, all unrealised gains and
losses that have resulted from changes in the fair value of liabilities that are due to changes
in the bank's own credit risk.
\end{quote}
Therefore, Basel III framework does not allow the banks to account for DVA in their regulatory capital calculations (see also \cite{BISdva} for a detailed discussion).
The main reason of this treatment of DVA in Basel III is to not to allow banks to have the value of their liabilities decrease while their credit risk is increasing.
\end{remark}

It is important to observe the difference between CVA and CVA$^R$ processes, which indicates the change in the CVA due to rating triggers. This leads us to introduce the following concept,
\begin{definition}
The Rating Valuation Adjustment (RVA) process, is defined as
\begin{align}\label{RVA}
\text{RVA}_t &= \text{CVA}_t-\text{CVA}_t^R \, ,
\end{align}
for all $t \in [0,\tau^R \wedge T]$.
\end{definition}

The rating valuation adjustment term defined above has the following representation.
\begin{proposition}\label{proposition:RVA}
The  \textup{RVA} process can be represented as
\begin{align*}
\textup{RVA}_t  &=B_t\bE_t[\1_{\{\tau^R<\tau =\tau_1\leq T\}}B_{\tau^R}^{-1}(1-R_1)(S^{\Delta}_{\tau}-C_{\tau})^{+}]  \\[0.05in]
& \quad -B_t\bE_t[\1_{\{ \tau^R<\tau =\tau_2\leq T\}}B_{\tau^R}^{-1}(1-R_2)(S^{\Delta}_{\tau}-C_{\tau})^{-}] \, ,
\end{align*}
for all $t \in [0,\tau^R \wedge T]$.
\end{proposition}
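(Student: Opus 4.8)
The plan is to start from the definition \eqref{RVA}, $\text{RVA}_t = \text{CVA}_t - \text{CVA}_t^R$, and substitute the two representations \eqref{CVA} and \eqref{CVAR-1}. Since both are sums of two structurally identical legs (the $R_1$/$(\cdot)^{+}$ leg driven by the counterparty and the $R_2$/$(\cdot)^{-}$ leg driven by the investor), I would treat each leg separately. The entire argument then reduces to comparing, for $i=1,2$, the event $\{\tau=\tau_i\leq T\}$ that governs $\text{CVA}_t$ with the event $\{\tau^R=\tau_i\leq T\}$ that governs $\text{CVA}_t^R$.

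First I would record the elementary ordering of the stopping times. Because $\tau_i^R=\widehat{\tau_i}^{R}\wedge\tau_i\leq\tau_i$ and $\tau^R=\tau_1^R\wedge\tau_2^R$, we have $\tau^R\leq\tau_1\wedge\tau_2=\tau$ almost surely. Consequently, on $\{\tau^R=\tau_i\}$ one gets $\tau\leq\tau_i=\tau^R\leq\tau$, so that $\{\tau^R=\tau_i\}=\{\tau^R=\tau=\tau_i\}$; that is, the $\text{CVA}^R$ indicator already forces the trigger time to coincide with the first default time $\tau$ and with $\tau_i$. This pins down the ``no early trigger'' scenario and is the crux of the comparison.

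Next I would split the $\text{CVA}$ indicator according to whether an early trigger occurred, writing $\{\tau=\tau_i\leq T\}=\{\tau^R=\tau=\tau_i\leq T\}\,\sqcup\,\{\tau^R<\tau=\tau_i\leq T\}$. On the first set $\tau^R=\tau$, so the pathwise identities $B_{\tau}=B_{\tau^R}$, $S^{\Delta}_{\tau}=S^{\Delta}_{\tau^R}$, $C_{\tau}=C_{\tau^R}$ hold, and the $i$-th leg of \eqref{CVA} restricted to $\{\tau^R=\tau=\tau_i\leq T\}$ becomes literally the $i$-th leg of \eqref{CVAR-1} (using the identity $\{\tau^R=\tau_i\}=\{\tau^R=\tau=\tau_i\}$ from the previous step). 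Hence these ``no early trigger'' contributions cancel in the difference $\text{CVA}_t-\text{CVA}_t^R$. What survives is precisely the restriction of each $\text{CVA}$ leg to the complementary set $\{\tau^R<\tau=\tau_i\leq T\}$, carrying the discounting and close-out value inherited from \eqref{CVA} and evaluated at $\tau$, which is exactly the asserted representation.

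The main obstacle is the event bookkeeping in the two middle steps: one must verify that the $\text{CVA}^R$ events are genuinely contained in the ``no early trigger'' part of the $\text{CVA}$ events, so that nothing is dropped in the cancellation, and that the overlapping joint-default set $\{\tau_1=\tau_2\}$ is treated identically in both \eqref{CVA} and \eqref{CVAR-1} and thus contributes nothing to $\text{RVA}$. Once the partition $\{\tau=\tau_i\leq T\}=\{\tau^R=\tau=\tau_i\leq T\}\sqcup\{\tau^R<\tau=\tau_i\leq T\}$ and the pathwise identities on $\{\tau^R=\tau\}$ are established, the claim follows by term-by-term subtraction leg by leg, with no further integrability considerations beyond those already implicit in \eqref{CVA} and \eqref{CVAR-1}.
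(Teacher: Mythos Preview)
Your proposal is correct and follows the same route as the paper: subtract \eqref{CVA} and \eqref{CVAR-1} leg by leg and reduce to the indicator identity $\1_{\{\tau=\tau_i\leq T\}}-\1_{\{\tau^R=\tau_i\leq T\}}=\1_{\{\tau^R<\tau=\tau_i\leq T\}}$, which you justify more explicitly than the paper does via $\tau^R\leq\tau$ and $\{\tau^R=\tau_i\}=\{\tau^R=\tau=\tau_i\}$. One minor remark: both your derivation and the paper's own proof naturally produce the discount factor $B_{\tau}^{-1}$ (equivalently $B_{\tau_i}^{-1}$) on the surviving event, so the $B_{\tau^R}^{-1}$ printed in the proposition is a typo in the paper rather than a gap in your argument.
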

\begin{proof}
From \eqref{CVA} and \eqref{CVAR-1} we obtain
\begin{align*}
\text{CVA}_t-\text{CVA}_t^R &= B_t\bE_t\Big[\1_{\{ \tau =\tau_1\leq T\}}B_{\tau}^{-1}(1-R_1)(S^{\Delta}_{\tau}-C_{\tau })^{+}\Big] \\[0.05in]
& \quad - B_t\bE_t\Big[\1_{\{ \tau =\tau_2\leq T\}}B_{\tau}^{-1}(1-R_2)(S^{\Delta}_{\tau}-C_{\tau })^{-}\Big]  \\[0.05in]
& \quad - B_t \bE_t\Big[\1_{\{ \tau^R=\tau_1\leq T\}}B_{\tau^R}^{-1}(1-R_1)(S^{\Delta}_{\tau^R}-C_{\tau^R})^{+}\Big]  \\[0.05in]
& \quad + B_t \bE_t\Big[\1_{\{ \tau^R=\tau_2\leq T\}}B_{\tau^R}^{-1}(1-R_2)(S^{\Delta}_{\tau^R}-C_{\tau^R})^{-}\Big] \, ,
\end{align*}
which can be written as,
\begin{align*}
\text{CVA}_t-\text{CVA}_t^R &= B_t\bE_t\Big[\1_{\{ \tau =\tau_1\leq T\}}B_{\tau_1}^{-1}(1-R_1)(S^{\Delta}_{\tau_1}-C_{\tau_1})^{+}\Big]  \\[0.05in]
& \quad - B_t \bE_t\Big[\1_{\{ \tau = \tau_2\leq T\}}B_{\tau_2}^{-1}(1-R_2)(S^{\Delta}_{\tau_2}-C_{\tau_2})^{-}\Big]  \\[0.05in]
& \quad - B_t \bE_t\Big[\1_{\{ \tau^R=\tau_1\leq T\}}B_{\tau_1}^{-1}(1-R_1)(S^{\Delta}_{\tau_1}-C_{\tau_1})^{+}\Big]  \\[0.05in]
& \quad + B_t \bE_t\Big[\1_{\{ \tau^R=\tau_2\leq T\}}B_{\tau_2}^{-1}(1-R_2)(S^{\Delta}_{\tau_2}-C_{\tau_2})^{-}\Big] \, .
\end{align*}
Therefore, simplifying the terms above yields
\begin{align*}
\text{CVA}_t-\text{CVA}_t^R &=
B_t\bE_t\Big[(\1_{\{ \tau =\tau_1\leq T\}}-\1_{\{ \tau^R = \tau_1\leq T\}})
B_{\tau_1}^{-1}(1-R_1)(S^{\Delta}_{\tau_1}-C_{\tau_1})^{+}\Big]  \\[0.05in]
& \quad - B_t \bE_t\Big[(\1_{\{ \tau = \tau_2\leq T\}} - \1_{\{ \tau^R = \tau_2\leq T\}}) B_{\tau_2}^{-1}(1-R_2)(S^{\Delta}_{\tau_2}-C_{\tau_2})^{-}\Big] \, ,
\end{align*}
which is equivalent to
\begin{align*}
\text{CVA}_t-\text{CVA}_t^R &= B_t\bE_t\Big[\1_{\{ \tau^R < \tau =\tau_1\leq T\}}B_{\tau_1}^{-1}(1-R_1)(S^{\Delta}_{\tau_1}-C_{\tau_1})^{+}\Big]  \\[0.05in]
& \quad - B_t \bE_t\Big[\1_{\{\tau^R < \tau = \tau_2\leq T\}}B_{\tau_2}^{-1}(1-R_2)(S^{\Delta}_{\tau_2}-C_{\tau_2})^{-}\Big] \, ,
\end{align*}
for $t \in [0, \tau^R \wedge T]$, which proves the result in view of \eqref{RVA}.
\end{proof}

\begin{remark}
  Note that RVA can be positive or negative. If RVA is positive then there is a decrease in the bilateral CVA.  If RVA is negative then this indicates an increase in the bilateral CVA due to adding rating triggers. Furthermore, RVA is always non-negative in case of measuring unilateral counterparty risk ($\tau_2 = \infty$).
\end{remark}
Let us define
\begin{align*}
  \text{URVA}_t:& = B_t\bE_t[\1_{\{\tau^R<\tau =\tau_1\leq T\}}B_{\tau^R}^{-1}(1-R_1)(S^{\Delta}_{\tau}-C_{\tau})^{+}] \, ,  \\[0.05in]
  \text{DRVA}_t:& = B_t\bE_t[\1_{\{ \tau^R<\tau =\tau_2\leq T\}}B_{\tau^R}^{-1}(1-R_2)(S^{\Delta}_{\tau}-C_{\tau})^{-}] \, .
\end{align*}
for $t \in [0,\tau^R \wedge T]$. Therefore, RVA has the following decomposition,
\begin{align*}
\text{RVA}_t &= \text{URVA}_t - \text{DRVA}_t \, ,
\end{align*}
for $t \in [0,\tau^R \wedge T]$.
Here URVA represents the expected loss if the counterparty defaults first which is preceded by a rating trigger.
Similarly, DRVA is the expected loss in case the investor defaults first after a rating trigger.
Therefore, including rating triggers provision in an OTC contract provides protection from losses due to default events which happen after a credit downgrade. Accordingly, the value of the contract is adjusted for this protection, as shown in the following

\begin{corollary}
We have the following decomposition for the counterparty risk-free price process
\begin{align*}
S_t &=   S_t^R + \text{CVA}^R_t \\
&=S_t^R + \text{CVA}_t- \text{RVA}_t  \\
&=S_t^R + \text{UCVA}_t - \text{DVA}_t - \text{RVA}_t \\
& = S_t^R + \text{UCVA}_t - \text{DVA}_t - \text{URVA}_t + \text{DRVA}_t \, ,
\end{align*}
for $t \in [0,\tau^R \wedge T ]$.
\end{corollary}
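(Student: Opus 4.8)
The plan is to obtain the four-line identity as a telescoping chain of substitutions, each step invoking one of the definitions or decompositions already recorded above; no new analytic input is required, since the representation formulas \eqref{CVA}, \eqref{CVAR-1} and Proposition \ref{proposition:RVA} have already done all the substantive work.

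First I would establish the top line directly from the definition of the credit valuation adjustment with rating triggers. Rearranging \eqref{CVAR}, namely $\text{CVA}_t^R = S_t - S_t^R$, gives $S_t = S_t^R + \text{CVA}_t^R$ for every $t \in [0,\tau^R \wedge T]$, which is exactly the interval on which $\text{CVA}_t^R$ was defined. Next I would replace $\text{CVA}_t^R$ using the definition of the rating valuation adjustment: by \eqref{RVA} we have $\text{RVA}_t = \text{CVA}_t - \text{CVA}_t^R$, hence $\text{CVA}_t^R = \text{CVA}_t - \text{RVA}_t$, and substituting into the top line yields the second equality. I then expand $\text{CVA}_t = \text{UCVA}_t - \text{DVA}_t$, the decomposition recorded immediately after \eqref{eq:DVA}, to obtain the third line, and finally expand $\text{RVA}_t = \text{URVA}_t - \text{DRVA}_t$, the decomposition stated just before the corollary, to reach the fourth line.

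The only point that requires a moment's care — and it is the closest thing to an obstacle — is the bookkeeping of the common domain of validity. The identity \eqref{CVAR} and the decomposition of $\text{CVA}^R$ hold on $[0,\tau^R \wedge T]$, whereas $\text{CVA}$, $\text{UCVA}$ and $\text{DVA}$ were introduced on the a priori larger set $[0,\tau \wedge T]$. Here I would simply observe that $\tau_i^R = \widehat{\tau_i}^{R} \wedge \tau_i \leq \tau_i$ for $i=1,2$, so that $\tau^R = \tau_1^R \wedge \tau_2^R \leq \tau_1 \wedge \tau_2 = \tau$ and therefore $[0,\tau^R \wedge T] \subseteq [0,\tau \wedge T]$. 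Consequently every quantity appearing in the second and third lines is well-defined on $[0,\tau^R \wedge T]$, the chain of substitutions is legitimate precisely on that interval, and all four expressions agree there, which is the asserted range $t \in [0,\tau^R \wedge T]$.
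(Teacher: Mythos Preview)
Your proof is correct and is essentially the approach the paper takes: the corollary is stated without an explicit proof in the paper, as it follows immediately by chaining the definitions \eqref{CVAR} and \eqref{RVA} together with the decompositions $\text{CVA} = \text{UCVA} - \text{DVA}$ and $\text{RVA} = \text{URVA} - \text{DRVA}$. Your additional remark verifying that $[0,\tau^R \wedge T] \subseteq [0,\tau \wedge T]$ so that all quantities are simultaneously well-defined is a welcome piece of bookkeeping that the paper leaves implicit.
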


\subsection{Dynamic Collateralization}\label{section:collateral}

In bilateral margin agreements, counterparties are required to post collateral as soon as the clean price of the contract exceeds thresholds, which are defined in CSA (see \cite{ISDAcsa94}).
In particular, these thresholds are defined in terms of the credit ratings of the counterparties.
Specifically, the collateral threshold of a counterparty decreases as a result of a credit rating downgrade and increases as a result of a credit rating upgrade. Consequently, a counterparty with higher credit rating will have higher threshold than a counterparty with a lower credit rating.

It is important to note that there is an adverse relation between the margin requirements and the credit ratings.
A credit downgrade along with higher borrowing rates and exposures forces the companies to post increasing amounts of collateral to their counterparties, which can be fatal.
For example, the ratings linked collateral thresholds, coupled with rehypothecation, have been considered to be one of the key drivers of AIG's collapse in 2008.
Before 2007, as a `AAA' rated company, AIG had not been required to post any collateral for most of its derivatives transactions.
However, after several downgrades AIG had posted more than \$40 billion in collateral as of November 2008 (see \cite{ISDAaig2009} for details).

Thus, one of the key issues in modeling of the collateral process\footnote{Since in this paper we only consider symmetric cash flows (form the point of view of both the parties), we only need to model a single collateral process.} is the issue of modeling of the thresholds. In what follows, we shall model the collateral threshold for the counterparty at time $t$, say $\Gamma^1_t$,  as $\Gamma^1_t=\gamma^1(t,X^1_t,S_t)$, where $\gamma^1\, : \, [0,T]\times \cK \times \bR \rightarrow \bR_+$ is a measurable function. Similarly, we shall model the collateral threshold for the investor at time $t$, say $\Gamma^2_t$,  as $\Gamma^2_t=\gamma^2(t,X^2_t,S_t)$, where $\gamma^2\, : \, [0,T]\times \cK \times \bR \rightarrow \bR_-$ is a measurable function.

For a proper modeling of the collateral we need to consider the so called \emph{independent amounts} (i.e. initial margins) posted by the counterparty and the investor by the constants $\beta_1 \in \bR_+$ and $\beta_2 \in \bR_-$, respectively. We also need to consider the so called \emph{minimum transfer amount} (MTA), which is a positive constant denoted by $\theta$ and the \emph{margin period of risk}, which is again a positive constant denoted by $\Delta$.\footnote{We refer the reader to Bielecki et al. \cite{BCI2011,Bielecki2011a} for a detailed discussion and definitions.}

According to the standard industry practice collateral amounts are adjusted at fixed tenor dates, termed {\it margin call dates}. Let us denote the margin call dates by $0<t_1 < \ldots < t_n< T$.
On the margin call date $t_i$, if the exposure is above the counterparty's current threshold, $\Gamma^1_{t_i}$, and if the difference between the current exposure and the collateral amount is greater than the MTA the counterparty posts collateral and updates the margin account; otherwise, no collateral exchange takes place since the transfer amount is less than the MTA.
Likewise, the investor delivers collateral on the margin call date $t_i$, if the exposure is below investor's threshold, $\Gamma^2_{t_i}$, and the difference between the current exposure and the collateral amount is greater than MTA (cf. \cite{ISDA2005}, pages 52--56).
Note that under these covenants a collateral transfer is allowed only if it is greater than the MTA amount.

In accordance with the above discussion the collateral process is modeled as follows in this paper:

We set $C_0=0$. Then, for $i=1,2,\ldots,n$, we define
\begin{align*}
C_{t} &:= \1_{\{S_{t_i}+B_{t_i}(\beta_1-\beta_2)-\Gamma^1_{t_i}-C_{t_i}>\theta\}}(S_{t_i}+B_{t_i}(\beta_1-\beta_2)-\Gamma^1_{t_i}-C_{t_i}) \\[0.05in]
& \quad +\1_{\{S_{t_i}+B_{t_i}(\beta_2-\beta_1)-\Gamma^2_{t_i}-C_{t_i}<-\theta\} }( S_{t_i}+B_{t_i}(\beta_2-\beta_1)-\Gamma^2_{t_i}-C_{t_i}) + C_{t_i},
\end{align*}
for $t\in (t_i,t_{i+1}]$, on the set $\{t_i<\tau^R \}$.
Moreover, $C_t = C_{\tau^R}$ on the set $\{ \tau^R \leq t\leq  \tau^R + \Delta \}$, where $\Delta$ represents the margin period of risk.

Observe that the collateral increments at each margin call date $t_i<\tau^R$ can now be represented as,
\begin{align*}
\Delta C_{t_i}:&=C_{t_i+} - C_{t_i}   \\[0.05in]
&= \1_{\{S_{t_i}+B_{t_i}(\beta_1-\beta_2)-\Gamma^1_{t_i}-C_{t_i}>\theta\}}(S_{t_i}+B_{t_i}(\beta_1-\beta_2)-\Gamma^1_{t_i}-C_{t_i})   \\[0.05in]
& + \1_{\{S_{t_i}+B_{t_i}(\beta_2-\beta_1)-\Gamma^2_{t_i}-C_{t_i}<-\theta\} }( S_{t_i}+B_{t_i}(\beta_2-\beta_1)-\Gamma^2_{t_i}-C_{t_i}) \, .
\end{align*}

 In Section \ref{section:Apps} we assume, for simplicity, that the margin period of risk, independent amounts and minimum transfer amount to be zero. Thus, the collateral amount at time $t$ (from the point of view of the investor) is given as
\begin{align*}
C_t & = \1_{\{S_{t_i}-\Gamma^1_{t_i}>C_{t_i} \}}(S_{t_i}-\Gamma^1_{t_i}-C_{t_i})+\1_{\{S_{t_i}-\Gamma^2_{t_i}< C_{t_i}\} }( S_{t_i}-\Gamma^2_{t_i}-C_{t_i}) + C_{t_i},
\end{align*}
for $t \in (t_i,t_{i+1}]$.
Furthermore, we consider the following structure for the collateral thresholds
\begin{equation*}
 \gamma^i (t,x,s) = \rho^i(t,x) s,\quad \quad i=1,2,
\end{equation*}
where $\rho^i \, : \, [0,T]\times \cK \rightarrow [0,1]$ is a measurable function.
The functions $\rho^1$ and $\rho^2$ represent the \emph{collateral rates} for the counterparty and the investor at time $t$, respectively .
Essentially, the collateral rates indicate the percentage of exposure at time $t$.

We introduce two specifications of collateral rates:\footnote{Recall that the credit ratings of each credit name take values in the set $\cK=\{1,2,\ldots ,K\}$, where  $K$ represents the default and where $1$ represents the highest possible rating. }

\begin{itemize}
\item
The \emph{linear case}:
\begin{equation*}
\rho^i_l(t,x) :=\frac{K-x}{K-1}
\end{equation*}
for all $i=1,2$. In particular, $\rho^i(t,1) =1$ and  $\rho^i(t,K) =0.$

\item
The \emph{exponential case}:
\begin{equation*}
\rho^i_e(t,x):=
\begin{cases}
e^{1-x} & \text{if $x<K$}  \\[0.05in]
0 & \text{if $x=K$}
\end{cases}
\end{equation*}
for all $i=1,2$.
\end{itemize}

Note that in practice, the threshold levels are set in CSA documents for different rating levels. However, here we propose two alternative methods for determining the collateral threshold levels. In the linear case, collateral thresholds increase or decrease linearly with the credit qualities of the counterparties. Similarly, in the exponential case, the collateral thresholds exponentially increase or decrease with the credit ratings. Therefore, the collateral rates in the exponential case are always less than the ones in the linear case, which leads to lower collateral thresholds, and as a result more collateral being kept in the margin account.

\subsection{Rehypothecation Risk}\label{section:rehypothecation}

We now consider the case that the collateral receiver (counterparty or the investor) can rehypothecate the collateral.
Rehypothecation refers to the case that the collateral receiver is able to freely use the collateral in the margin account for investment and funding purposes.

Let us define a $\cG_{\tau_1}$-measurable random variable $R^h_1$ and a $\cG_{\tau_2}$-measurable random variable $R^h_2$ as the recovery rates of the rehypothecated collateral for the investor and the counterparty.
Following Brigo et al. \cite{BCPP2011}, we assume that $R_1 \leq R^h_1$ and $R_2 \leq R^h_2$, since in case of a default the collateral has priority among other liabilities.

Let us now define the cumulative dividend process associated with the counterparty risky contract with rehypothecation.
\begin{definition}\label{def:cumDivRatingRehyp}
Cumulative dividend process of a counterparty risky contract that takes rehypothecation risk into account is represented as,
\begin{align*}
{\cD}^{R,h}_t &=(1-H^R_t)D_t
+D_{\tau^R-}H^R_t + \1_{\{\tau^R \leq T\}}\Big(  \widetilde{C}_{\tau^R}H^R_t \\[0.05in]
& \quad +\left(R_1(S^{\Delta}_{\tau^R} - \widetilde{C}_{\tau^R})^{+}-(S^{\Delta}_{\tau^R} -\widetilde{C}_{\tau^R})^{-}\right)[H^R,H^1]_t  \\[0.05in]
& \quad -\left(R_2(S^{\Delta}_{\tau^R} - \widetilde{C}_{\tau^R})^{-}-(S^{\Delta}_{\tau^R} -\widetilde{C}_{\tau^R})^{+}\right)[H^R,H^2]_t \\[0.05in]
& \quad -\left(S^{\Delta}_{\tau^R} - \widetilde{C}_{\tau^R}\right)[[H^R,H^1] \, ,H^2]_t
+\left(S^{\Delta}_{\tau^R} - \widetilde{C}_{\tau^R}\right)[H^R,\widehat{H}^{R}]_t \Big) \, ,
\end{align*}
for all $t \in [0,T]$,
where
\begin{align*}
  \widetilde{C}_{\tau^R} &= C_{\tau^R}\Big[
\1_{{\tau^R}=\tau_1 \neq \tau_{2}}(R^h_1 \1_{C_\tau^R>0}+\1_{C_{\tau^R} \leq 0}) + \1_{{\tau^R}=\tau_{2} \neq \tau_1}(\1_{C_{\tau^R} > 0}+R^h_{2}\1_{C_{\tau^R}\leq0}) \\
& \quad + \1_{{\tau^R}=\tau_{1}=\tau_2}(R^h_1\1_{C_{\tau^R} >0}+R^h_{2}\1_{C_{\tau^R} \leq 0}) +
\1_{{\tau^R}=\widehat{\tau}^R}\Big].
\end{align*}
\end{definition}

We are now ready to define the ex-dividend price processes associated with a counterparty risky contract with rating triggers and rehypothecation risk.

\begin{definition}\label{def:CDSwRTrhyp}
The ex-dividend price process ${S}^{R,h}$ of a counterparty risky contract maturing at time T, with rating triggers and rehypothecation risk is defined as,
\begin{equation*}
{S}_t^{R,h}=B_t\bE_t \Big[ \int_{]t,T]} B_u^{-1} d {\cD}^{R,h}_u \Big] \, ,
\end{equation*}
for all $t\in[0,T]$.
\end{definition}

Next, we give the definition of credit valuation adjustment of a contract with rating triggers in presence of rehypothecation risk.
\begin{definition}
The credit valuation adjustment with rating triggers taking the rehypothecation risk into account is defined as,
\begin{equation}
\text{CVA}_t^{R,h}=S_t-{S}_t^{R,h} \, , \label{eq:CVArehyp}
\end{equation}
for all $t\in[0,\tau^R \wedge T]$.
\end{definition}

This form of the counterparty-risky cumulative dividend process leads to the following representation for the bilateral CVA.
\begin{proposition}
The bilateral Credit Valuation Adjustment process with rehypothecation risk defined in \eqref{eq:CVArehyp} can be represented as
  \begin{align}
\text{CVA}_t^{R,h} &=B_t\bE_t\Big[\1_{\{ \tau^R=\tau_1 \leq T\}} B^{-1}_{\tau^R}
 (1-R_1)(S^\Delta_{\tau^R}-\widetilde{C}^1_{\tau^R})^{+}\Big] \nonumber  \\[0.05in]
& \quad - B_t\bE_t\Big[\1_{\{ \tau^R=\tau_2 \leq T\}}B^{-1}_{\tau^R} (1-R_2)(S^\Delta_{\tau^R}-\widetilde{C}^2_{\tau^R})^{-} \Big] \label{eq:CVARh} \, ,
\end{align}
for all $t\in[0,\tau^R \wedge T]$, where
\begin{align*}
  \widetilde{C}^1_{\tau^R} &= C_{\tau^R}\Big[
\1_{{\tau^R}=\tau_1 \neq \tau_{2}}(R^h_1 \1_{C_\tau^R>0}+\1_{C_{\tau^R} \leq 0})
 + \1_{{\tau^R}=\tau_{1}=\tau_2}(R^h_1\1_{C_{\tau^R} >0}+R^h_{2}\1_{C_{\tau^R} \leq 0})\Big] \, ,
\end{align*}
and
\begin{align*}
  \widetilde{C}^2_{\tau^R} &= C_{\tau^R}\Big[
 \1_{{\tau^R}=\tau_{2} \neq \tau_1}(\1_{C_{\tau^R} > 0}+R^h_{2}\1_{C_{\tau^R}\leq0})
 + \1_{{\tau^R}=\tau_{1}=\tau_2}(R^h_1\1_{C_{\tau^R} >0}+R^h_{2}\1_{C_{\tau^R} \leq 0}) \Big].
\end{align*}
\end{proposition}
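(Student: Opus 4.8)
The plan is to repeat, essentially verbatim, the argument that produced the representation \eqref{CVAR-1} for $\text{CVA}^R$, observing that the cumulative dividend process ${\cD}^{R,h}$ of Definition \ref{def:cumDivRatingRehyp} is obtained from ${\cD}^R$ of Definition \ref{def:cumDivRating} by the single substitution $C_{\tau^R}\mapsto\widetilde{C}_{\tau^R}$ in every close-out term. First I would form the differential $dD_t-d{\cD}^{R,h}_t$, integrate $B_u^{-1}(dD_u-d{\cD}^{R,h}_u)$ over $]t,T]$, and cancel the pair $\int B_u^{-1}D_{\tau^R-}\,dH^R_u-\int B_u^{-1}D_{u-}\,dH^R_u=0$ as before. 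Applying $\bE_{\tau^R}$ and invoking the clean-price identity \eqref{eq:StauD} (namely $\bE_{\tau^R}\big[\int_{[\tau^R,T]}B_u^{-1}H^R_u\,dD_u\big]=\1_{\{\tau^R\le T\}}B_{\tau^R}^{-1}S^{\Delta}_{\tau^R}$), and then taking $\cG_t$-conditional expectation through the tower property, yields the exact analogue of \eqref{eq:condt} with $\widetilde{C}_{\tau^R}$ in place of $C_{\tau^R}$.

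The second step is the same purely algebraic simplification: write $S^{\Delta}_{\tau^R}-\widetilde{C}_{\tau^R}=(S^{\Delta}_{\tau^R}-\widetilde{C}_{\tau^R})^{+}-(S^{\Delta}_{\tau^R}-\widetilde{C}_{\tau^R})^{-}$ and collect terms. The crucial cancellation is that the full ``MtM-minus-collateral exchange'' term $\1_{\{\tau^R\le T\}}(S^{\Delta}_{\tau^R}-\widetilde{C}_{\tau^R})$ is annihilated by the sum of the individual close-out legs. This rests on the identity $\tau^R=\widehat{\tau}^R\wedge\tau$ with $\tau=\tau_1\wedge\tau_2$, immediate from $\tau_i^R=\widehat{\tau_i}^R\wedge\tau_i$, which partitions $\{\tau^R\le T\}$ into the default-driven event $\{\tau^R=\tau\le T\}=\{\tau^R=\tau_1\le T\}\cup\{\tau^R=\tau_2\le T\}$ and the pure-trigger event $\{\tau^R=\widehat{\tau}^R\le T\}$. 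Inclusion–exclusion on the overlap $\{\tau^R=\tau_1=\tau_2\}$ produces exactly the combination $\1_{\{\tau^R\le T\}}-\1_{\{\tau^R=\tau_1\le T\}}-\1_{\{\tau^R=\tau_2\le T\}}+\1_{\{\tau^R=\tau_1=\tau_2\le T\}}-\1_{\{\tau^R=\widehat{\tau}^R\le T\}}=0$ multiplying $(S^{\Delta}_{\tau^R}-\widetilde{C}_{\tau^R})$. After this cancellation only the two haircut legs survive, giving
\begin{align*}
S_t-{S}_t^{R,h}&=B_t\bE_t\big[\1_{\{\tau^R=\tau_1\le T\}}B_{\tau^R}^{-1}(1-R_1)(S^{\Delta}_{\tau^R}-\widetilde{C}_{\tau^R})^{+}\big]\\
&\quad-B_t\bE_t\big[\1_{\{\tau^R=\tau_2\le T\}}B_{\tau^R}^{-1}(1-R_2)(S^{\Delta}_{\tau^R}-\widetilde{C}_{\tau^R})^{-}\big]
\end{align*}
on $[0,\tau^R\wedge T]$.

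The final step, the only genuinely new point relative to the proof of \eqref{CVAR-1}, is to replace $\widetilde{C}_{\tau^R}$ by $\widetilde{C}^1_{\tau^R}$ under the indicator $\1_{\{\tau^R=\tau_1\le T\}}$ and by $\widetilde{C}^2_{\tau^R}$ under $\1_{\{\tau^R=\tau_2\le T\}}$. Here I would simply inspect Definition \ref{def:cumDivRatingRehyp}: on the set $\{\tau^R=\tau_1\le T\}$ the term carrying $\1_{\{\tau^R=\tau_2\ne\tau_1\}}$ vanishes, and the pure-trigger term carrying $\1_{\{\tau^R=\widehat{\tau}^R\}}$ is supported on $\{\tau^R=\tau_1=\widehat{\tau}^R\}$, which is disjoint from $\{\tau^R=\tau_1\}$ up to a null set, since a default time $\tau_1$ (rating $K$) cannot coincide with a non-default trigger time of the \emph{same} name (rating in $\{K_1,\dots,K-1\}$), the only alternative being an excluded simultaneous cross-name jump. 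Hence $\widetilde{C}_{\tau^R}=\widetilde{C}^1_{\tau^R}$ there, and symmetrically $\widetilde{C}_{\tau^R}=\widetilde{C}^2_{\tau^R}$ on $\{\tau^R=\tau_2\le T\}$, which delivers \eqref{eq:CVARh}. The main obstacle is precisely this last bookkeeping: confirming that the wrong-party and pure-rating-trigger pieces of $\widetilde{C}_{\tau^R}$ drop out under the respective default indicators, so that each rehypothecation haircut is applied with the recovery $R^h_i$ appropriate to the name that actually defaults.
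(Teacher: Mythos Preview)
Your proposal is correct and follows essentially the same route as the paper's own proof: the paper likewise repeats the computation of \eqref{CVAR-1} verbatim with $\widetilde{C}_{\tau^R}$ in place of $C_{\tau^R}$, arriving at the two-leg expression with $\widetilde{C}_{\tau^R}$ and declaring the claim proved. In fact you are slightly more explicit than the paper about the last bookkeeping step---showing that on $\{\tau^R=\tau_i\}$ the wrong-party and pure-trigger pieces of $\widetilde{C}_{\tau^R}$ drop out so that $\widetilde{C}_{\tau^R}=\widetilde{C}^i_{\tau^R}$---which the paper leaves implicit.
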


\begin{proof}
Using Definition \ref{def:cumDivRatingRehyp}, we have
\begin{align*}
  dD_t-d{\cD}^{R,h} _t &= dD_t- (1-H^R_t)dD_t -D_{t-}dH^R_t +D_{\tau^R-}dH^R_t - \1_{\{\tau^R \leq T\}}\widetilde{C}_{\tau^R}dH^R_t  \\[0.05in]
& \quad -\1_{\{\tau^R \leq T\}}\big(R_1(S^{\Delta}_{\tau^R}-\widetilde{C}_{\tau^R})^{+}-(S^{\Delta}_{\tau^R}-\widetilde{C}_{\tau^R})^{-}\big)d[H^R,H^1]_t \\[0.05in]
& \quad +\1_{\{\tau^R \leq T\}}\big(R_2(S^{\Delta}_{\tau^R}-\widetilde{C}_{\tau^R})^{-} -(S^{\Delta}_{\tau^R}-\widetilde{C}_{\tau^R})^{+}\big)d[H^R,H^2]_t \\[0.05in]
& \quad +\1_{\{\tau^R \leq T\}}(S^{\Delta}_{\tau^R}-\widetilde{C}_{\tau^R})d[[H^R,H^1] \, ,H^2]_t  \\[0.05in]
& \quad -\1_{\{\tau^R \leq T\}}(S^{\Delta}_{\tau^R}-\widetilde{C}_{\tau^R})d[H^R,\widetilde{H}^R]_t \, ,
\end{align*}
Integrating both sides leads to,
\begin{align*}
  \int_{]{t},T]}B_u^{-1}(dD_u-d{\cD}^{R,h} _u) & = \int_{]{t},T]}B_u^{-1}H^R_udD_u - \int_{]{t},T]}B_u^{-1}D_{u-}dH^R_u + \int_{]{t},T]}B_u^{-1}D_{\tau^R-}dH^R_u \\[0.05in]
& \hspace{-2cm} -\int_{]{t},T]}\1_{\{\tau^R \leq T\}}B_u^{-1}\big(R_1(S^{\Delta}_{\tau^R}-\widetilde{C}_{\tau^R})^{+}  -(S^{\Delta}_{\tau^R}-\widetilde{C}_{\tau^R})^{-}\big)d[H^R,H^1]_u \\[0.05in]
& \hspace{-2cm} +\int_{]{t},T]}\1_{\{\tau^R \leq T\}}B_u^{-1}\big(R_2(S^{\Delta}_{\tau^R}-\widetilde{C}_{\tau^R})^{-} -(S^{\Delta}_{\tau^R}-\widetilde{C}_{\tau^R})^{+}\big)d[H^R,H^2]_u \\[0.05in]
& \hspace{-2cm} +\int_{]{t},T]}\1_{\{\tau^R \leq T\}}B_u^{-1}(S^{\Delta}_{\tau^R}-\widetilde{C}_{\tau^R})d[[H^R,H^1] \, ,H^2]_u  \\[0.05in]
& \hspace{-2cm} -\int_{]{t},T]}\1_{\{\tau^R \leq T\}}B_u^{-1}(S^{\Delta}_{\tau^R}-\widetilde{C}_{\tau^R})d[H^R,\widetilde{H}^R]_u- \int_{]{t},T]}\1_{\{\tau^R \leq T\}}B_u^{-1}\widetilde{C}_{\tau^R}dH^R_u \, .
\end{align*}
Since,
\begin{equation*}
   \int_{]{t},T]}B_u^{-1}D_{\tau^R-}dH^R_u - \int_{]{t},T]}B_u^{-1}D_{u-}dH^R_u = 0 \, ,
\end{equation*}
we obtain,
\begin{align*}
  \int_{]{t},T]}B_u^{-1}(dD_u-d{\cD}^{R,h} _u) & = \int_{]{t},T]}B_u^{-1}H^R_udD_u - \int_{]{t},T]}\1_{\{\tau^R \leq T\}}B_u^{-2}\widetilde{C}_{\tau^R}dH^R_u \\[0.05in]
& \hspace{-2cm} -\int_{]{t},T]}\1_{\{\tau^R \leq T\}}B_u^{-1}\big(R_1(S^{\Delta}_{\tau^R}-\widetilde{C}_{\tau^R})^{+} -(S^{\Delta}_{\tau^R}-\widetilde{C}_{\tau^R})^{-}\big)d[H^R,H^1]_u \\[0.05in]
& \hspace{-2cm} +\int_{]{t},T]}\1_{\{\tau^R \leq T\}}B_u^{-1}\big(R_2(S^{\Delta}_{\tau^R}-\widetilde{C}_{\tau^R})^{-}  -(S^{\Delta}_{\tau^R}-\widetilde{C}_{\tau^R})^{+}\big)d[H^R,H^2]_u \\[0.05in]
& \hspace{-2cm} +\int_{]{t},T]}\1_{\{\tau^R \leq T\}}B_u^{-1}(S^{\Delta}_{\tau^R}-\widetilde{C}_{\tau^R})d[[H^R,H^1] \, ,H^2]_u  \\[0.05in]
& \hspace{-2cm} -\int_{]{t},T]}\1_{\{\tau^R \leq T\}}B_u^{-1}(S^{\Delta}_{\tau^R}-\widetilde{C}_{\tau^R})d[H^R,\widetilde{H}^R]_u \, .
\end{align*}
Conditioning on $\tau^R$, we get
\begin{align}
 \1_{\{t \leq  \tau^R \wedge T\}}\ \bE_{\tau^R} \Big[ \int_{]{t},T]}B_u^{-1}(dD_u-d{\cD}^{R,h} _u) \Big]
 &  =\1_{\{t \leq  \tau^R \wedge T\}}\ \bE_{\tau^R} \Bigg[ \int_{]{t},T]}B_u^{-1}H^R_udD_u \nonumber \\[0.05in]
  & \hspace{-4cm} - \int_{]{t},T]}\1_{\{\tau^R \leq T\}}B_u^{-1}\widetilde{C}_{\tau^R}dH^R_u \nonumber \\[0.05in]
& \hspace{-4cm} -\int_{]{t},T]}\1_{\{\tau^R \leq T\}}B_u^{-1}\big(R_1(S^{\Delta}_{\tau^R}-\widetilde{C}_{\tau^R})^{+}  -(S^{\Delta}_{\tau^R}-\widetilde{C}_{\tau^R})^{-}\big)d[H^R,H^1]_u \nonumber \\[0.05in]
& \hspace{-4cm} +\int_{]{t},T]}\1_{\{\tau^R \leq T\}}B_u^{-1}\big(R_2(S^{\Delta}_{\tau^R}-\widetilde{C}_{\tau^R})^{-} -(S^{\Delta}_{\tau^R}-\widetilde{C}_{\tau^R})^{+}\big)d[H^R,H^2]_u \nonumber \\[0.05in]
& \hspace{-4cm} +\int_{]{t},T]}\1_{\{\tau^R \leq T\}}B_u^{-1}(S^{\Delta}_{\tau^R}-\widetilde{C}_{\tau^R})d[[H^R,H^1] \, ,H^2]_u \label{eq:condTau} \\[0.05in]
& \hspace{-4cm} -\int_{]{t},T]}\1_{\{\tau^R \leq T\}}B_u^{-1}(S^{\Delta}_{\tau^R}-\widetilde{C}_{\tau^R})d[H^R,\widetilde{H}^R]_u \Bigg] \, . \nonumber
\end{align}

Notice that, since $t \in [0, \tau^R \wedge T]$, we have
\begin{align}
   \int_{]t,T]} B_u^{-1}H^R_u dD_u &= \int_{]t,\tau^R[}B_u^{-1}H^R_udD_u+ \int_{[\tau^R,T]}B_u^{-1}H^R_udD_u \nonumber \\[0.05in]
   & =  \int_{[\tau^R,T]}B_u^{-1} H^R_u dD_u \, . \label{eq:StauD}
\end{align}
Therefore,
\begin{align}
 &\1_{\{t \leq  \tau^R \wedge T\}} \bE_{\tau^R} \Big[ \int_{]{t},T]}B_u^{-1}H^R_udD_u \Big]
  =  \1_{\{t \leq  \tau^R \wedge T\}}\bE_{\tau^R} \Big[ \int_{[\tau^R,T]}B_u^{-1}H^R_udD_u \Big]\nonumber \\
   & = \1_{\{t \leq  \tau^R \wedge T\}}\1_{\{\tau^R \leq T\}} B^{-1}_{\tau^R}( S_{\tau^R} + \Delta D_{\tau^R} )
   = \1_{\{t \leq  \tau^R \}}\1_{\{\tau^R \leq T\}} B^{-1}_{\tau^R}( S_{\tau^R} + \Delta D_{\tau^R} )\,  .
\end{align}
Taking conditional expectation given $\cG_t$ and using the tower property in \eqref{eq:condTau} reads
\begin{align}
\1_{\{t \leq  \tau^R\wedge T \}}(S_t-{S}_t^R) &= \1_{\{t \leq  \tau^R \wedge T \}}B_t\bE_t\Big[\int_{]{t},T]}B_u^{-1}(dD_u-d{\cD}^{R,h} _u )\Big] \label{eq:condt}  \\[0.05in]
& \hspace{-3cm} =\1_{\{t \leq  \tau^R\wedge T \}} B_t\bE_t\Big[B^{-1}_{\tau^R}\big( \1_{\{\tau^R \leq T\}}(S^{\Delta}_{\tau^R} - \widetilde{C}_{\tau^R}) \nonumber \\
& \hspace{-3cm} - (R_1(S^{\Delta}_{\tau^R}-\widetilde{C}_{\tau^R})^{+} -(S^{\Delta}_{\tau^R}-\widetilde{C}_{\tau^R})^{-})\1_{\{\tau^R=\tau_1 \leq T\}} \nonumber \\[0.05in]
& \hspace{-2cm} +(R_2(S^{\Delta}_{\tau^R}-\widetilde{C}_{\tau^R})^{-}-(S^{\Delta}_{\tau^R}-\widetilde{C}_{\tau^R})^{+})\1_{\{ \tau^R=\tau_2 \leq T\}} \nonumber  \\[0.05in]
& \hspace{-2cm} +(S^{\Delta}_{\tau^R}-\widetilde{C}_{\tau^R})\1_{\{\tau^R=\tau_1=\tau_2 \leq T\}}
-(S^{\Delta}_{\tau^R}-\widetilde{C}_{\tau^R})\1_{\{\tau^R = \widetilde{\tau}^R \leq T\}}\big)\Big] \, . \nonumber
\end{align}
Since
$$(S^\Delta_{\tau^R}-\widetilde{C}_{\tau^R})=(S^\Delta_{\tau^R}-\widetilde{C}_{\tau^R})^{+}-(S^\Delta_{\tau^R}-\widetilde{C}_{\tau^R})^{-} \, ,$$
it follows that \eqref{eq:condt} is equivalent to
\begin{align*}
\1_{\{t \leq  \tau^R\wedge T \}}(S_t-{S}_t^R) &=\1_{\{t \leq  \tau^R\wedge T \}}B_t\bE_t\Big[B^{-1}_{\tau^R}\big(\1_{\{\tau^R \leq T\}}(S^{\Delta}_{\tau^R}-\widetilde{C}_{\tau^R})  \\[0.05in]
& \quad  -(R_1(S^{\Delta}_{\tau^R}-\widetilde{C}_{\tau^R})^{+}+(S^{\Delta}_{\tau^R}-\widetilde{C}_{\tau^R})
-(S^{\Delta}_{\tau^R}-\widetilde{C}_{\tau^R})^{+})\1_{\{\tau^R=\tau_1 \leq T\}}  \\[0.05in]
& \quad +(R_2(S^{\Delta}_{\tau^R}-\widetilde{C}_{\tau^R})^{-}-(S^{\Delta}_{\tau^R}-\widetilde{C}_{\tau^R})^{-}-(S^{\Delta}_{\tau^R}-\widetilde{C}_{\tau^R}))\1_{\{ \tau^R=\tau_2\leq T\}}  \\[0.05in]
& \quad +(S^{\Delta}_{\tau^R}-\widetilde{C}_{\tau^R})\1_{\{\tau^R=\tau_1=\tau_2 \leq T \}}
-(S^{\Delta}_{\tau^R}-\widetilde{C}_{\tau^R})\1_{\{ \tau^R = \widetilde{\tau}^R \leq T\}}\big)\Big] \, .
\end{align*}
After simplifying the terms above, we obtain
\begin{align*}
\1_{\{t \leq  \tau^R\wedge T \}}(S_t-{S}_t^R)  &=\1_{\{t \leq  \tau^R\wedge T \}}B_t\bE_t\Big[B^{-1}_{\tau^R}\big(\1_{\{\tau^R \leq T\}}(S^{\Delta}_{\tau^R}-\widetilde{C}_{\tau^R}) \\[0.05in]
& \quad  +(1-R_1)(S^{\Delta}_{\tau^R}-\widetilde{C}_{\tau^R})^{+}\1_{\{\tau^R=\tau_1 \leq T\}}
-(S^{\Delta}_{\tau^R}-\widetilde{C}_{\tau^R})\1_{\{ \tau^R=\tau_1 \leq T\}}  \\[0.05in]
& \quad -(1-R_2)(S^{\Delta}_{\tau^R}-\widetilde{C}_{\tau^R})^{-}\1_{\{ \tau^R=\tau_2 \leq T\}}
-(S^{\Delta}_{\tau^R}-\widetilde{C}_{\tau^R})\1_{\{\tau^R=\tau_2 \leq T\}} \\[0.05in]
& \quad +(S^{\Delta}_{\tau^R}-\widetilde{C}_{\tau^R})\1_{\{\tau^R=\tau_1=\tau_2 \leq T\}}
-(S^{\Delta}_{\tau^R}-\widetilde{C}_{\tau^R})\1_{\{\tau^R = \widetilde{\tau}^R \leq T\}}\big)\Big] \, ,
\end{align*}
which is equivalent to
\begin{align*}
\1_{\{t \leq  \tau^R\wedge T \}}(S_t-{S}_t^R)  &= \1_{\{t \leq  \tau^R\wedge T \}}B_t\bE_t \Big [B^{-1}_{\tau^R}\big[\1_{\{\tau^R \leq T\}}(S^{\Delta}_{\tau^R}-\widetilde{C}_{\tau^R})-\1_{\{\tau^R \leq T\}}(S^{\Delta}_{\tau^R}-\widetilde{C}_{\tau^R})  \\[0.05in]
& \hspace{-1cm} +(1-R_1)(S^{\Delta}_{\tau^R}-\widetilde{C}_{\tau^R})^{+}\1_{\{\tau^R=\tau_1\leq T\}}
 -(1-R_2)(S^{\Delta}_{\tau^R}-\widetilde{C}_{\tau^R})^{-}\1_{\{\tau^R=\tau_2\leq T\}}\big] \Big] \, .
\end{align*}
Finally, we find that
\begin{align*}
S_t-{S}_t^R &= B_t \bE_t\Big[\1_{\{\tau^R=\tau_1\leq T\}}B_{\tau^R}^{-1}(1-R_1)(S^{\Delta}_{\tau^R}-\widetilde{C}_{\tau^R})^{+}\Big]  \\[0.05in]
& \quad -B_t \bE_t\Big[\1_{\{ \tau^R=\tau_2\leq T\}}B_{\tau^R}^{-1}(1-R_2)(S^{\Delta}_{\tau^R}-\widetilde{C}_{\tau^R})^{-}\Big] \, ,
\end{align*}
on the set $t \in [0,\tau^R \wedge T]$, which proves our claim.
\end{proof}

\begin{remark}
Observe that if $R^h_1 =R^h_2 = 1$ then $\text{CVA}^{R,h} = \text{CVA}^R$.
\end{remark}

Next, we consider the rating valuation adjustment in the presence of rehypothecation risk.
\begin{definition}
The Rating Valuation Adjustment process ($\text{RVA}^h$) with rehypothecation risk is defined as
\begin{align*}
\text{RVA}^h_t &= \text{CVA}_t-\text{CVA}_t^{R,h} \, ,
\end{align*}
for $t \in [0, \tau^R \wedge T]$.
\end{definition}

We have the following representation for RVA$^{R,h}$.
\begin{lemma}
RVA$^{R,h}$ can represented as
\begin{align*}
\text{RVA}^h_t &= \text{RVA}_t \\[0.05in]
& \hspace{-2cm}+ B_t\bE_t\Big[\1_{\{ \tau^R =\tau_1 \neq \tau_{2}\leq T\}}B_{\tau_1}^{-1}(1-R_1)
[\1_{C_{\tau_1} > 0}((S^{\Delta}_{\tau^R}-C_{\tau^R})^{+}-(S^\Delta_{\tau_1}-R^h_1C_{\tau_1})^+)]\\[0.05in]
& \hspace{-1cm} +\1_{\{ \tau^R =\tau_1 = \tau_{2}\leq T\}}B_{\tau_1}^{-1}(1-R_1)
[\1_{C_{\tau_1} > 0}((S^{\Delta}_{\tau^R}-C_{\tau^R})^{+}-(S^\Delta_{\tau_1}-R^h_1C_{\tau_1})^+) \\[0.05in]
& \hspace{-0.5cm} +\1_{C_{\tau_1} < 0}((S^{\Delta}_{\tau^R}-C_{\tau^R})^{+}-(S^\Delta_{\tau_1}-R^h_{2}C_{\tau_1})^{+})]\Big]  \\[0.05in]
& \hspace{-2cm} + B_t\bE_t\Big[\1_{\{ \tau^R  =\tau_2\neq \tau_1\leq T\}}B_{\tau_2}^{-1}(1-R_2)
[\1_{C_{\tau_2} < 0}((S^\Delta_{\tau_2}-R^h_{2}C_{\tau_2})^{-}-(S^{\Delta}_{\tau^R}-C_{\tau^R})^{-})]\\[0.05in]
& \hspace{-1cm} + \1_{\{ \tau^R  =\tau_2 = \tau_1\leq T\}}B_{\tau_2}^{-1}(1-R_2)
[\1_{C_{\tau_2} > 0}((S^\Delta_{\tau_2}-R^h_1C_{\tau_2})^{-}-(S^{\Delta}_{\tau^R}-C_{\tau^R})^{-})] \\[0.05in]
& \hspace{-0.5cm} +\1_{C_{\tau_2} < 0}((S^\Delta_{\tau_2}-R^h_{2}C_{\tau_2})^{-}-(S^{\Delta}_{\tau^R}-C_{\tau^R})^{-})\Big] \, ,
\end{align*}
for $t \in [0, \tau^R \wedge T]$.
\end{lemma}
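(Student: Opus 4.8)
The plan is to reduce everything to the already-established representations of $\text{CVA}^R$ and $\text{CVA}^{R,h}$ and then read off the rehypothecation correction term by term. First I would use the definitions of $\text{RVA}$ and $\text{RVA}^h$ to write
$$\text{RVA}^h_t = \text{CVA}_t - \text{CVA}^{R,h}_t = \text{RVA}_t + \big(\text{CVA}^R_t - \text{CVA}^{R,h}_t\big),$$
so that it suffices to compute the difference $\text{CVA}^R_t - \text{CVA}^{R,h}_t$. Subtracting \eqref{CVAR-1} from \eqref{eq:CVARh}, the discount factors $B_{\tau^R}^{-1}$, the weights $(1-R_i)$, and the indicators $\1_{\{\tau^R=\tau_i\leq T\}}$ all coincide, so only the collateral entry changes, giving
\begin{align*}
\text{CVA}^R_t-\text{CVA}^{R,h}_t &= B_t\bE_t\Big[\1_{\{\tau^R=\tau_1\leq T\}}B_{\tau^R}^{-1}(1-R_1)\big((S^{\Delta}_{\tau^R}-C_{\tau^R})^+-(S^{\Delta}_{\tau^R}-\widetilde{C}^1_{\tau^R})^+\big)\Big] \\
&\quad - B_t\bE_t\Big[\1_{\{\tau^R=\tau_2\leq T\}}B_{\tau^R}^{-1}(1-R_2)\big((S^{\Delta}_{\tau^R}-C_{\tau^R})^--(S^{\Delta}_{\tau^R}-\widetilde{C}^2_{\tau^R})^-\big)\Big].
\end{align*}

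Next I would resolve $\widetilde{C}^1$ and $\widetilde{C}^2$ by case analysis. I split $\{\tau^R=\tau_1\leq T\}$ into the disjoint events $\{\tau^R=\tau_1\neq\tau_2\}$ and $\{\tau^R=\tau_1=\tau_2\}$, and similarly for $\tau_2$; on each of these events the defining formula from the preceding proposition collapses $\widetilde{C}^i_{\tau^R}$ to an explicit multiple of $C_{\tau^R}$. Within each event I insert the partition of unity $\1_{C_{\tau^R}>0}+\1_{C_{\tau^R}\leq0}=1$ and substitute the matching haircut: for instance, on $\{\tau^R=\tau_1\neq\tau_2\}$ one has $\widetilde{C}^1_{\tau^R}=R^h_1 C_{\tau^R}$ when $C_{\tau^R}>0$ and $\widetilde{C}^1_{\tau^R}=C_{\tau^R}$ when $C_{\tau^R}\leq0$. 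The key observation is that whenever $\widetilde{C}^i_{\tau^R}=C_{\tau^R}$ the associated difference of positive (resp.\ negative) parts is identically zero, so every such term drops out. Since $\tau^R=\tau_i$ on the relevant event, I then replace $B_{\tau^R}$ by $B_{\tau_i}$ and $C_{\tau^R}$ by $C_{\tau_i}$ in the surviving terms; collecting what is left reproduces the two bracketed expectations in the asserted identity.

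The positive/negative-part algebra is routine; the only genuine bookkeeping hazard is the four-fold event split crossed with the two sign cases for $C_{\tau^R}$, where it is easy to misassign which recovery rate multiplies the collateral. Here I would lean directly on the defining formula for $\widetilde{C}^i_{\tau^R}$, which dictates that $R^h_1$ attaches to $\{C_{\tau^R}>0\}$ and $R^h_2$ to $\{C_{\tau^R}\leq0\}$ on the simultaneous-default event, whereas on each single-default event one of these haircuts is replaced by $1$ (producing a vanishing term). I would also note that replacing $\1_{C_{\tau^R}\leq0}$ by $\1_{C_{\tau^R}<0}$, as the statement does, is immaterial, because at $C_{\tau^R}=0$ every haircut leaves the collateral unchanged and the contribution is zero.
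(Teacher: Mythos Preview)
Your proposal is correct and follows essentially the same route as the paper's proof: both reduce $\text{RVA}^h_t$ to $\text{RVA}_t+(\text{CVA}^R_t-\text{CVA}^{R,h}_t)$, expand $\widetilde{C}^1_{\tau^R}$ and $\widetilde{C}^2_{\tau^R}$ by the event and sign cases, and discard the cases where $\widetilde{C}^i_{\tau^R}=C_{\tau^R}$. The only difference is cosmetic ordering---you isolate $\text{CVA}^R_t-\text{CVA}^{R,h}_t$ at the outset, whereas the paper first writes out $\text{CVA}_t-\text{CVA}^{R,h}_t$ in full and introduces $\text{RVA}_t$ midway---but the underlying computation is identical.
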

\begin{proof}
Using \eqref{CVA} and \eqref{eq:CVARh} we obtain
\begin{align*}
\text{CVA}_t-\text{CVA}_t^{R,h} &= B_t\bE_t\Big[\1_{\{ \tau =\tau_1\leq T\}}B_{\tau}^{-1}(1-R_1)(S^\Delta_{\tau}-C_{\tau })^{+}\Big]  \\[0.05in]
& \quad - B_t\bE_t\Big[\1_{\{ \tau =\tau_2\leq T\}}B_{\tau}^{-1}(1-R_2)(S^\Delta_{\tau}-C_{\tau })^{-}\Big]  \\[0.05in]
& \quad - B_t\bE_t\Big[\1_{\{ \tau^R =\tau_1\leq T\}}B_{\tau^R}^{-1}(1-R_1)(S^\Delta_{\tau^R}-\widetilde{C}^1_{\tau^R})^{+}\Big]  \\[0.05in]
& \quad + B_t\bE_t\Big[\1_{\{ \tau^R  =\tau_2\leq T\}}B_{\tau^R}^{-1}(1-R_2)(S^\Delta_{\tau^R}-\widetilde{C}^2_{\tau^R})^{-}\Big] \, ,
\end{align*}
where
\begin{align*}
  \widetilde{C}^1_{\tau^R} &=
\1_{{\tau^R}=\tau_1 \neq \tau_{2}}(R^h_1C^+_{\tau^R}+C^-_{\tau^R})
 + \1_{{\tau^R}=\tau_{1}=\tau_2}(R^h_1C^+_{\tau^R}+R^h_{2}C^-_{\tau^R}) \, ,
\end{align*}
and
\begin{align*}
  \widetilde{C}^2_{\tau^R} &=
 \1_{{\tau^R}=\tau_{2} \neq \tau_1}(C^+_{\tau^R}+R^h_{2}C^-_{\tau^R})
 + \1_{{\tau^R}=\tau_{1}=\tau_2}(R^h_1C^+_{\tau^R}+R^h_{2}C^-_{\tau^R}).
\end{align*}
Rearranging the terms above yields
\begin{align}
\text{CVA}_t-\text{CVA}_t^{R,h} &= B_t\bE_t\Big[\1_{\{ \tau =\tau_1\leq T\}}B_{\tau_1}^{-1}(1-R_1)(S^\Delta_{\tau_!}-C_{\tau_1 })^{+}\Big] \label{eq:RVAh} \\[0.05in]
& \quad - B_t\bE_t\Big[\1_{\{ \tau =\tau_2\leq T\}}B_{\tau_2}^{-1}(1-R_2)(S^\Delta_{\tau_2}-C_{\tau_2})^{-}\Big] \nonumber \\[0.05in]
& \quad - B_t\bE_t\Big[\1_{\{ \tau^R =\tau_1\leq T\}}B_{\tau_1}^{-1}(1-R_1)(S^\Delta_{\tau_1}-\widetilde{C}^1_{\tau_1})^{+}\Big] \nonumber  \\[0.05in]
& \quad + B_t\bE_t\Big[\1_{\{ \tau^R  =\tau_2\leq T\}}B_{\tau_2}^{-1}(1-R_2)(S^\Delta_{\tau_2}-\widetilde{C}^2_{\tau_2})^{-}\Big] \nonumber \, .
\end{align}
Plugging in the terms $\widetilde{C}^1_{\tau_1}$ and $\widetilde{C}^2_{\tau_2}$ into \eqref{eq:RVAh}, we get
\begin{align*}
\text{CVA}_t-\text{CVA}_t^{R,h} &= B_t\bE_t\Big[\1_{\{ \tau =\tau_1\leq T\}}B_{\tau_1}^{-1}(1-R_1)(S^\Delta_{\tau_!}-C_{\tau_1 })^{+}\Big]  \\[0.05in]
& \hspace{-2.7cm} - B_t\bE_t\Big[\1_{\{ \tau =\tau_2\leq T\}}B_{\tau_2}^{-1}(1-R_2)(S^\Delta_{\tau_2}-C_{\tau_2})^{-}\Big]  \\[0.05in]
& \hspace{-2.7cm} - B_t\bE_t\Big[\1_{\{ \tau^R =\tau_1 \neq \tau_{2}\leq T\}}B_{\tau_1}^{-1}(1-R_1)
[\1_{C_{\tau_1} > 0}(S^\Delta_{\tau_1}-R^h_1C_{\tau_1})^+
+\1_{C_{\tau_1} < 0}(S^\Delta_{\tau_1}-C_{\tau_1})^{+})]\\[0.05in]
& \hspace{-1.7cm} +\1_{\{ \tau^R =\tau_1 = \tau_{2}\leq T\}}B_{\tau_1}^{-1}(1-R_1)
[\1_{C_{\tau_1} > 0}(S^\Delta_{\tau_1}-R^h_1C_{\tau_1})^+
+\1_{C_{\tau_1} < 0}(S^\Delta_{\tau_1}-R^h_{2}C_{\tau_1})^{+})]\Big]  \\[0.05in]
& \hspace{-2.7cm} + B_t\bE_t\Big[\1_{\{ \tau^R  =\tau_2\neq \tau_1\leq T\}}B_{\tau_2}^{-1}(1-R_2)
[\1_{C_{\tau_2} > 0}(S^\Delta_{\tau_2}-C_{\tau_2})^{-}
+\1_{C_{\tau_2} < 0}(S^\Delta_{\tau_2}-R^h_{2}C_{\tau_2})^{-}]\\[0.05in]
& \hspace{-1.7cm} + \1_{\{ \tau^R  =\tau_2 = \tau_1\leq T\}}B_{\tau_2}^{-1}(1-R_2)
[\1_{C_{\tau_2} > 0}(S^\Delta_{\tau_2}-R^h_1C_{\tau_2})^{-}]
+\1_{C_{\tau_2} < 0}(S^\Delta_{\tau_2}-R^h_{2}C_{\tau_2})^{-}\Big] \, .
\end{align*}
It follows from \eqref{RVA} that
\begin{align*}
\text{CVA}_t-\text{CVA}_t^{R,h} &= \text{RVA}_t +
 B_t \bE_t\Big[\1_{\{ \tau^R=\tau_1\leq T\}}B_{\tau^R}^{-1}(1-R_1)(S^{\Delta}_{\tau^R}-C_{\tau^R})^{+}\Big]  \\[0.05in]
&\hspace{-2.7cm} - B_t \bE_t\Big[\1_{\{ \tau^R=\tau_2\leq T\}}B_{\tau^R}^{-1}(1-R_2)(S^{\Delta}_{\tau^R}-C_{\tau^R})^{-}\Big]\\[0.05in]
& \hspace{-2.7cm} - B_t\bE_t\Big[\1_{\{ \tau^R =\tau_1 \neq \tau_{2}\leq T\}}B_{\tau_1}^{-1}(1-R_1)
[\1_{C_{\tau_1} > 0}(S^\Delta_{\tau_1}-R^h_1C_{\tau_1})^+
+\1_{C_{\tau_1} < 0}(S^\Delta_{\tau_1}-C_{\tau_1})^{+})]\\[0.05in]
& \hspace{-1.7cm}+\1_{\{ \tau^R =\tau_1 = \tau_{2}\leq T\}}B_{\tau_1}^{-1}(1-R_1)
[\1_{C_{\tau_1} > 0}(S^\Delta_{\tau_1}-R^h_1C_{\tau_1})^+
+\1_{C_{\tau_1} < 0}(S^\Delta_{\tau_1}-R^h_{2}C_{\tau_1})^{+})]\Big]  \\[0.05in]
& \hspace{-2.7cm} + B_t\bE_t\Big[\1_{\{ \tau^R  =\tau_2\neq \tau_1\leq T\}}B_{\tau_2}^{-1}(1-R_2)
[\1_{C_{\tau_2} > 0}(S^\Delta_{\tau_2}-C_{\tau_2})^{-}
+\1_{C_{\tau_2} < 0}(S^\Delta_{\tau_2}-R^h_{2}C_{\tau_2})^{-}]\\[0.05in]
& \hspace{-1.7cm} + \1_{\{ \tau^R  =\tau_2 = \tau_1\leq T\}}B_{\tau_2}^{-1}(1-R_2)
[\1_{C_{\tau_2} > 0}(S^\Delta_{\tau_2}-R^h_1C_{\tau_2})^{-}]
+\1_{C_{\tau_2} < 0}(S^\Delta_{\tau_2}-R^h_{2}C_{\tau_2})^{-}\Big] \, .
\end{align*}
Finally, we find
\begin{align*}
\text{CVA}_t-\text{CVA}_t^{R,h} &= \text{RVA}_t \\[0.05in]
& \hspace{-2cm}+ B_t\bE_t\Big[\1_{\{ \tau^R =\tau_1 \neq \tau_{2}\leq T\}}B_{\tau_1}^{-1}(1-R_1)
[\1_{C_{\tau_1} > 0}((S^{\Delta}_{\tau^R}-C_{\tau^R})^{+}-(S^\Delta_{\tau_1}-R^h_1C_{\tau_1})^+)]\\[0.05in]
& \hspace{-1cm} +\1_{\{ \tau^R =\tau_1 = \tau_{2}\leq T\}}B_{\tau_1}^{-1}(1-R_1)
[\1_{C_{\tau_1} > 0}((S^{\Delta}_{\tau^R}-C_{\tau^R})^{+}-(S^\Delta_{\tau_1}-R^h_1C_{\tau_1})^+) \\[0.05in]
& \hspace{-0.5cm} +\1_{C_{\tau_1} < 0}((S^{\Delta}_{\tau^R}-C_{\tau^R})^{+}-(S^\Delta_{\tau_1}-R^h_{2}C_{\tau_1})^{+})]\Big]  \\[0.05in]
& \hspace{-2cm} + B_t\bE_t\Big[\1_{\{ \tau^R  =\tau_2\neq \tau_1\leq T\}}B_{\tau_2}^{-1}(1-R_2)
[\1_{C_{\tau_2} < 0}((S^\Delta_{\tau_2}-R^h_{2}C_{\tau_2})^{-}-(S^{\Delta}_{\tau^R}-C_{\tau^R})^{-})]\\[0.05in]
& \hspace{-1cm} + \1_{\{ \tau^R  =\tau_2 = \tau_1\leq T\}}B_{\tau_2}^{-1}(1-R_2)
[\1_{C_{\tau_2} > 0}((S^\Delta_{\tau_2}-R^h_1C_{\tau_2})^{-}-(S^{\Delta}_{\tau^R}-C_{\tau^R})^{-})] \\[0.05in]
& \hspace{-0.5cm} +\1_{C_{\tau_2} < 0}((S^\Delta_{\tau_2}-R^h_{2}C_{\tau_2})^{-}-(S^{\Delta}_{\tau^R}-C_{\tau^R})^{-})\Big] \, ,
\end{align*}
for $t \in [0,\tau^R \wedge T]$.
\end{proof}

\begin{remark}
  Note that RVA$^{h}$ can either be negative or positive. If the difference is positive, then there is a decrease in the bilateral CVA, however if it is negative then there is an increase in the bilateral CVA.
\end{remark}

Let us define
\begin{align*}
  \text{URVA}^h_t:& =
   B_t\bE_t\Big[\1_{\{ \tau^R =\tau_1 \neq \tau_{2}\leq T\}}B_{\tau_1}^{-1}(1-R_1)
[\1_{C_{\tau_1} > 0}((S^{\Delta}_{\tau^R}-C_{\tau^R})^{+}-(S^\Delta_{\tau_1}-R^h_1C_{\tau_1})^+)]\\[0.05in]
& \hspace{-1cm} +\1_{\{ \tau^R =\tau_1 = \tau_{2}\leq T\}}B_{\tau_1}^{-1}(1-R_1)
[\1_{C_{\tau_1} > 0}((S^{\Delta}_{\tau^R}-C_{\tau^R})^{+}-(S^\Delta_{\tau_1}-R^h_1C_{\tau_1})^+) \\[0.05in]
& \hspace{-0.5cm} +\1_{C_{\tau_1} < 0}((S^{\Delta}_{\tau^R}-C_{\tau^R})^{+}-(S^\Delta_{\tau_1}-R^h_{2}C_{\tau_1})^{+})]\Big] \, ,  \\[0.05in]
  \text{DRVA}^h_t:& =
  B_t\bE_t\Big[\1_{\{ \tau^R  =\tau_2\neq \tau_1\leq T\}}B_{\tau_2}^{-1}(1-R_2)
[\1_{C_{\tau_2} < 0}((S^\Delta_{\tau_2}-R^h_{2}C_{\tau_2})^{-}-(S^{\Delta}_{\tau^R}-C_{\tau^R})^{-})]\\[0.05in]
& \hspace{-1cm} + \1_{\{ \tau^R  =\tau_2 = \tau_1\leq T\}}B_{\tau_2}^{-1}(1-R_2)
[\1_{C_{\tau_2} > 0}((S^\Delta_{\tau_2}-R^h_1C_{\tau_2})^{-}-(S^{\Delta}_{\tau^R}-C_{\tau^R})^{-})] \\[0.05in]
& \hspace{-0.5cm} +\1_{C_{\tau_2} < 0}((S^\Delta_{\tau_2}-R^h_{2}C_{\tau_2})^{-}-(S^{\Delta}_{\tau^R}-C_{\tau^R})^{-})\Big] \, .
\end{align*}
for $t \in [0,\tau^R \wedge T]$. Therefore, RVA$^h$ has the following decomposition,
\begin{align*}
\text{RVA}^h_t &= \text{RVA}_t + \text{URVA}^h_t + \text{DRVA}^h_t \, ,
\end{align*}
for $t \in [0,\tau^R \wedge T]$.

Here URVA$^h$ represents the expected loss if the counterparty defaults first which is preceded by a rating trigger.
Similarly, DRVA$^h$ is the expected loss in case the investor defaults first after a rating trigger.
Therefore, including rating triggers provision in an OTC contract provides protection from losses due to default events which happen after a credit downgrade. Accordingly, the value of the contract is adjusted for this protection, as shown in the following

\begin{corollary}
We have the following decomposition for the counterparty risk-free price process
\begin{align*}
S_t &=   S_t^{R,h} + \text{CVA}^{R,h}_t \\
&=S_t^{R,h} + \text{CVA}_t- \text{RVA}^h_t  \\
&=S_t^{R,h} + \text{UCVA}_t - \text{DVA}_t - \text{RVA}^h_t \\
& = S_t^{R,h} + \text{UCVA}_t - \text{DVA}_t - \text{RVA}_t - \text{URVA}^h_t - \text{DRVA}^h_t \, ,
\end{align*}
for $t \in [0,\tau^R \wedge T ]$.
\end{corollary}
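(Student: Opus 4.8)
The plan is to prove the chain of equalities by unwinding the definitions and decompositions already established, reading them as identities valid on the set $t\in[0,\tau^R\wedge T]$. No new computation is required: every equality is a one-line substitution, and the entire analytic content is already carried by the preceding Proposition and Lemma.

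First I would establish the opening line $S_t = S_t^{R,h} + \text{CVA}^{R,h}_t$ directly from the definition of the credit valuation adjustment with rehypothecation in \eqref{eq:CVArehyp}, namely $\text{CVA}_t^{R,h}=S_t-{S}_t^{R,h}$, simply rearranged. Next, I would pass to the second line by invoking the definition $\text{RVA}^h_t = \text{CVA}_t-\text{CVA}_t^{R,h}$, which gives $\text{CVA}_t^{R,h}=\text{CVA}_t-\text{RVA}^h_t$; substituting this into the first line yields $S_t = S_t^{R,h} + \text{CVA}_t - \text{RVA}^h_t$. For the third line I would use the bilateral decomposition $\text{CVA}_t = \text{UCVA}_t-\text{DVA}_t$ recorded in \eqref{eq:DVA}, replacing $\text{CVA}_t$ accordingly.

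Finally, for the last line I would substitute the decomposition of $\text{RVA}^h_t$ obtained just before the statement, $\text{RVA}^h_t = \text{RVA}_t + \text{URVA}^h_t + \text{DRVA}^h_t$, so that $-\text{RVA}^h_t = -\text{RVA}_t - \text{URVA}^h_t - \text{DRVA}^h_t$. Combining this with the third line produces the claimed four-term expansion. Throughout, I would keep every identity restricted to $t\in[0,\tau^R\wedge T]$, which is the common domain on which all the cited definitions and the Lemma are stated, so no issue of domain mismatch arises.

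The only point requiring care — and it is genuinely the single place a sign slip could occur — is tracking the signs when passing from $\text{CVA}^{R,h}_t = \text{CVA}_t - \text{RVA}^h_t$ and then inserting $\text{RVA}^h_t = \text{RVA}_t + \text{URVA}^h_t + \text{DRVA}^h_t$: the plus signs inside the $\text{RVA}^h$ decomposition become the minus signs in the final display. There is no substantive obstacle here; the statement is a bookkeeping corollary whose force lies entirely in having previously identified the four adjustment quantities $\text{UCVA}_t,\ \text{DVA}_t,\ \text{URVA}^h_t,\ \text{DRVA}^h_t$ as the correct building blocks.
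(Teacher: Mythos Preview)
Your proposal is correct and matches the paper's approach: the paper states this as a corollary without proof, since each line follows immediately from the definitions and decompositions already established (the definition of $\text{CVA}^{R,h}$, the definition of $\text{RVA}^h$, the decomposition $\text{CVA}=\text{UCVA}-\text{DVA}$, and the decomposition $\text{RVA}^h=\text{RVA}+\text{URVA}^h+\text{DRVA}^h$). Your write-up simply makes these substitutions explicit, which is all that is needed.
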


\section{Markovian Approach for Rating-Based Pricing}\label{section:Copulae}

In this section, we employ Markov copulae for modeling the rating transitions in our framework.
Our approach is based on the studies of Bielecki et al. \cite{BCJRbasket2006,BieleckiEfficient2006,Bielecki2008a, BieleckiJVV2008, BJN2011}.

\subsection{Markov Copulae for the Multivariate Markov Chains}
Let us first consider two Markov chains ${X}^1$ and ${X}^2$ on $(\Omega,\cF,\bP)$ with the infinitesimal generators
$A^1:=[a^1_{ij}]$ and $A^2:=[a^2_{hk}]$, respectively.
%Here we define a joint credit migration process ${X}:=({X}^1,{X}^2)$, where ${X}^1$ and ${X}^2$ represents the credit ratings process of the counterparty and the investor, respectively.
%Thus, the state space of ${X}$ is $\cX=\cK \times \cK$ with cardinality $K_2$.
%Denote by $A^X=[a^X_{ih,jk}]_{i,h,j,k\in\cK}$ the infinitesimal generator of a matrix  ${X}$.

In what follows, we work under the following assumption.
\begin{assumption}\label{asmp:copula}
\textup{The system of equations,
\begin{eqnarray}
\sum_{k \in \cK}a^X_{ih,jk} &  =& a^1_{ij} \, , \ \quad \forall i,j,h \in \cK,\; i \neq j,\;  \label{Eq:Mcop1}  \\[0.1in]
\sum_{j \in \cK}a^X_{ih,jk} & =& a^2_{hk} \, , \quad \forall i,h,k \in \cK,\; h \neq k,\; \label{Eq:Mcop2}
\end{eqnarray}
%where
%\begin{align}
%a^1_{ii}&=- \sum_{j \in \cK, j \neq i} a^1_{ij} \, ,  \\[0.1in]
%a^1_{hh}&=- \sum_{k \in \cK, h \neq k} a^2_{hk} \, , \label{Eq:Mcop3}
%\end{align}
has a positive solution.}
\end{assumption}
The proof of the following proposition can be found in \cite{Bielecki2008a}.

\begin{proposition}
If Assumption~\ref{asmp:copula} is satisfied, then
$A^X=[a^X_{ih,jk}]_{i,h,j,k \in \cK}$ (where diagonal elements are defined appropriately) satisfies the conditions for a generator matrix of a bivariate time-homogeneous Markov chain, say $X=(Y^1,Y^2)$, whose components are Markov chains with the same laws as ${X}^1$ and ${X}^2$.
%
%
%then each marginal process ${X}^1,\ {X}^2$ is a time-homogeneous Markov chain with respect to its own filtration. Furthermore, the infinitesimal generators of ${X}^1$ and ${X}^2$ are $A^1=[a^1_{ij}]$ and $A^2=[a^2_{hk}]$ defined through \eqref{Eq:Mcop1}--\eqref{Eq:Mcop3}, respectively.
\end{proposition}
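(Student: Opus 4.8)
The plan is to proceed in two stages: first to check that $A^X$, completed with the natural diagonal entries, is a bona fide generator of a continuous-time Markov chain on the product state space $\cK \times \cK$, and then to verify that each coordinate of the resulting chain is itself a Markov chain, with generator $A^1$ and $A^2$ respectively. The first stage is essentially bookkeeping, while the second carries the real content.

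For the first stage I would take the off-diagonal entries of $A^X$ to be the solution furnished by Assumption~\ref{asmp:copula}, and define the diagonal entries by $a^X_{ih,ih} := -\sum_{(j,k)\neq(i,h)} a^X_{ih,jk}$. With this convention the row sums of $A^X$ vanish by construction, and the off-diagonal entries are non-negative precisely because the solution provided by Assumption~\ref{asmp:copula} is positive. These are exactly the defining properties of a $Q$-matrix on the finite space $\cK\times\cK$, so the standard existence theory yields a time-homogeneous Markov chain $X=(Y^1,Y^2)$ with generator $A^X$.

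The substance of the argument is marginal consistency, which I would obtain through the martingale problem attached to $A^X$. For an arbitrary $f$ on $\cK$, set $g(i,h):=f(i)$, a function of the first coordinate alone. Applying the generator and summing over the second target coordinate gives
\begin{equation*}
(A^X g)(i,h) = \sum_{j,k} a^X_{ih,jk}\, f(j) = \sum_{j} f(j)\sum_{k} a^X_{ih,jk}.
\end{equation*}
For $j\neq i$ the inner sum equals $a^1_{ij}$ by \eqref{Eq:Mcop1}, while for $j=i$ the vanishing of the row sum together with the fact that $A^1$ is itself a generator forces $\sum_{k} a^X_{ih,ik} = a^1_{ii}$. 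Hence $(A^X g)(i,h) = (A^1 f)(i)$, which depends on $i$ only. It follows that $f(Y^1_t) - f(Y^1_0) - \int_0^t (A^1 f)(Y^1_s)\,ds$ is a martingale with respect to the filtration generated by the pair $X=(Y^1,Y^2)$, and the argument for the second coordinate is identical, using \eqref{Eq:Mcop2} in place of \eqref{Eq:Mcop1}.

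The delicate final point — and the one I expect to be the main obstacle — is to descend from the filtration generated by the pair $X$ to the smaller filtration generated by $Y^1$ alone. This is where one must rule out that the coupling encoded in $A^X$ lets the future of $Y^1$ depend on the past of $Y^2$. Since the displayed process is adapted to the filtration of $Y^1$ and is a martingale relative to the larger filtration of $X$, the tower property shows it is also a martingale relative to the filtration of $Y^1$; thus $Y^1$ solves the martingale problem for $A^1$ and is therefore a Markov chain with generator $A^1$, i.e.\ with the same law as $X^1$. The marginalization identities \eqref{Eq:Mcop1}--\eqref{Eq:Mcop2} are exactly what makes the reduced drift $(A^1 f)$ a function of the first coordinate only, and hence what makes this reduction of filtration go through.
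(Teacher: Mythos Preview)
The paper does not actually supply its own proof of this proposition; it only records that ``The proof of the following proposition can be found in \cite{Bielecki2008a}.'' So there is no in-paper argument to compare against, and your task is essentially to provide what the authors omitted.

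Your proposal is correct and follows the standard route for this type of result. The generator verification is routine, and your marginalization computation $(A^X g)(i,h) = (A^1 f)(i)$ is exactly the intended use of \eqref{Eq:Mcop1}--\eqref{Eq:Mcop2}; the handling of the diagonal case $j=i$ via the row-sum constraint is clean. The filtration-reduction step is also fine: a process adapted to the smaller filtration that is a martingale in the larger one is automatically a martingale in the smaller one by the tower property. The only point I would make more explicit is the final inference: you should state clearly that on a finite state space the martingale problem for a bounded $Q$-matrix is well-posed, so that solving it forces the law of $Y^1$ on path space to coincide with that of the Markov chain generated by $A^1$; since the Markov property is a property of the law, this gives both that $Y^1$ is Markov in its own filtration and that it has the same distribution as $X^1$. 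As written you assert this in one breath, and a reader might wonder whether you are silently assuming something.
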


Hence, the resulting matrix $A^X=[a^X_{ih,jk}]_{i,h,j,k \in \cK}$ satisfies the conditions for a generator matrix of a bivariate time-homogeneous Markov chain, whose marginals are Markov chains with the same distributions as ${X}^1$ and ${X}^2$.
Therefore, the system \eqref{Eq:Mcop1}--\eqref{Eq:Mcop2} serves as a Markov copula between the Markovian margins ${Y}^1$, ${Y}^2$ and the bivariate Markov chain ${X}$.

Note that the system \eqref{Eq:Mcop1}--\eqref{Eq:Mcop2} can contain more unknowns than the number of equations, therefore being underdeteremined.
%Therefore, it is possible to have several bivariate Markov chains $X$ with the given margins ${X}^1$ and ${X}^2$.
Therefore, as it is proposed by Bielecki et al. \cite{Bielecki2008a}, we impose additional constraints on the variables in the system \eqref{Eq:Mcop1}--\eqref{Eq:Mcop2}.
We postulate that
\begin{equation}
a^X_{ih,jk}=
\begin{cases}\label{Eq:alpha}
  0 \, ,&  \text{if} \; i \neq j, h \neq k, j \neq k  \\[0.05in]
  \alpha \min(a^1_{ij},a^2_{hk}) \, ,& \text{if} \; i \neq j, h \neq k, j=k
\end{cases}
\end{equation}
where $\alpha \in [0,1]$. % is a modeling parameter.
Using the constraints \eqref{Eq:alpha} the system  \eqref{Eq:Mcop1}--\eqref{Eq:Mcop2} becomes fully decoupled, and we can obtain the generator of the joint process.

We interpret the constraint \eqref{Eq:alpha} as follows.
${Y}^1$ and ${Y}^2$ migrate according to their marginal laws.
Nevertheless, they can have the same values.
The intensity of migrating to the same rating category is measured by the parameter $\alpha$.
If $\alpha = 0$, then the components ${Y}^1$ and ${Y}^2$ of ${X}$ migrate independently.
However, if $\alpha = 1$, the tendency of ${Y}^1$ and ${Y}^2$ migrating to the same categories is at maximum.

\subsection{Markovian Changes of Measure}
Since rating transition matrices indicate the historical default probabilities, we need switch to the risk-neutral probabilites.
In practice, the change of measure is done such a way that the resulting risk-neutral probabilities are consistent with the default probabilities inferred from the quoted CDS spreads.
We need to apply changes of measure, while preserving Markovian structure of the model ${X}$.
Therefore, the process ${X}$, which is Markovian under the statistical measure, will remain Markovian under the risk-neutral measure as well.

Let $Y$ be a Markov process under $\bP$ with generator $A$ and domain $\cD(A)$ and define
\begin{equation*}
  M^f_t := \frac{f(Y_t)}{f(Y_0)}e^{-\int_0^t{\frac{Af(Y_s)}{f(Y_s)}ds}} \, .
\end{equation*}

The following definition is borrowed from \cite{palmowski02}.
\begin{definition}
  A strictly positive function $f \in \cD(A)$ is a good function if $M^f_t$ is a true (genuine) martingale with mean $1$ as $\bE_{\bP}(M^f_t)=1$.
\end{definition}
Let $f \in \cD(A)$ and $h$ be a good function and define
\begin{equation*}
  A^hf :=  h^{-1}A(fh) - fA(h) \, .
\end{equation*}

The proof of the following Theorem can be found in \cite{palmowski02}.
\begin{theorem}
  Let $\bQ^h$ be the probability measure associated to the density process $M^h_t$. Then $Y$ is a Markov process under $\bQ^h$ with extended generator $(A^h,\cD(A))$.
\end{theorem}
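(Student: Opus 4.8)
The plan is to characterize $Y$ under $\bQ^h$ through its Dynkin martingales and to reduce every statement about $\bQ^h$ to one about $\bP$ via the density process $M^h$. Recall that $\bQ^h$ is defined on each $\cF_t$ by $d\bQ^h|_{\cF_t} = M^h_t\, d\bP|_{\cF_t}$, so the abstract Bayes rule gives, for any adapted process $N$, that $N$ is a $\bQ^h$-(local) martingale if and only if $N M^h$ is a $\bP$-(local) martingale. Thus it suffices to produce, for every $f \in \cD(A)$, the appropriate Dynkin-type martingale under $\bQ^h$ and to read off $A^h$ as its generator.

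Concretely, I would fix $f \in \cD(A)$ and set
$$K^f_t := f(Y_t) - f(Y_0) - \int_0^t A^h f(Y_s)\, ds,$$
with the goal of showing that $K^f M^h$ is a $\bP$-local martingale; by the Bayes reduction this says exactly that $f(Y_t) - \int_0^t A^h f(Y_s)\,ds$ is a $\bQ^h$-martingale, i.e. that $A^h$ is the extended generator of $Y$ under $\bQ^h$. I would write $M^h_t = h(Y_t)\,h(Y_0)^{-1} e^{-L_t}$ with $L_t := \int_0^t h(Y_s)^{-1} Ah(Y_s)\, ds$, observe that $L$ has finite variation with $de^{-L_t} = -h(Y_t)^{-1}Ah(Y_t)\,e^{-L_t}\,dt$, and use that both $(fh)(Y_t) - \int_0^t A(fh)(Y_s)\,ds$ and $h(Y_t) - \int_0^t Ah(Y_s)\,ds$ are $\bP$-martingales (Dynkin's formula applied to $fh,\, h \in \cD(A)$).

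The computation then proceeds by integration by parts. Expanding $K^f_t M^h_t$ and differentiating, the crucial point is that the finite-variation drift produced by the product $(fh)(Y_t)e^{-L_t}$ equals $h(Y_0)^{-1} e^{-L_t}\big(A(fh)(Y_t) - f(Y_t)\,Ah(Y_t)\big)\,dt = M^h_t\, A^h f(Y_t)\,dt$, which exactly cancels the drift obtained by differentiating $\big(\int_0^t A^h f(Y_s)\,ds\big) M^h_t$. This cancellation is forced by the defining relation $A^h f = h^{-1}\big(A(fh) - f\,Ah\big)$. What survives are stochastic integrals against the two $\bP$-martingales above, so $K^f M^h$ is a $\bP$-local martingale.

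Finally I would promote the local martingale to a genuine martingale — this is precisely where the good-function hypothesis on $h$ (together with the implicit integrability assumed on $f$) is used to control $\bE_{\bP}[\,\cdot\,]$ — and invoke the Bayes rule to conclude that $K^f$ is a true $\bQ^h$-martingale for every $f \in \cD(A)$. Since $M^h$ is a multiplicative functional of $Y$, this transport of the Dynkin martingale property shows that $Y$ solves the martingale problem for $(A^h,\cD(A))$ under $\bQ^h$, whence the Markov property persists and the extended generator is identified as $A^h$. I expect the main obstacle to be the bookkeeping in the integration-by-parts step: verifying $fh \in \cD(A)$, separating finite-variation from martingale contributions, and confirming the exact drift cancellation; the remaining technical hurdle is the passage from local to true martingales, which is exactly what the notion of a good function is designed to guarantee.
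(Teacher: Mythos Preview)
The paper does not supply its own proof of this theorem; it simply cites Palmowski and Rolski \cite{palmowski02} (``The proof of the following Theorem can be found in \cite{palmowski02}''). Your sketch is correct and is, in fact, essentially the argument given in that reference: reduce $\bQ^h$-martingale statements to $\bP$-martingale statements via the density $M^h$, then verify by an integration-by-parts / product-rule computation that the Dynkin martingale for $f$ with drift $A^h f$ becomes, after multiplication by $M^h$, a $\bP$-local martingale, the key algebraic identity being $A^h f = h^{-1}\big(A(fh) - f\,Ah\big)$. The only caveats worth flagging are the ones you already identify: one needs $fh \in \cD(A)$ (Palmowski--Rolski build this into their setup), and the upgrade from local to true martingale is exactly where the ``good function'' hypothesis earns its keep.
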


If $Y$ is a finite state Markov chain, then we have the following result.
\begin{corollary}
  Let $Y$ be a finite state Markov chain on $\cK$ with cardinality $\cK$ and generator $A = a_{ij}$ and let $h = (h_1,\dots,h_K)$ be a positive vector. Then $Y$ is a Markov process under $\bQ^h$ with generator $A^h = [a_{ij}h_jh^{-1}_i]$.
\end{corollary}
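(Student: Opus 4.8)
The plan is to read off the claim as the finite-state specialization of the preceding Theorem. On a finite set $\cK$ every function is identified with a vector in $\bR^K$, and the generator acts by matrix multiplication, $(Af)(i)=\sum_{k\in\cK}a_{ik}f(k)$; thus the abstract operator $A^h$ becomes an honest $K\times K$ matrix once its entries are identified.

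First I would verify that a strictly positive vector $h=(h_1,\dots,h_K)$ is a good function, so that the Theorem applies. Because $\cK$ is finite, both $h$ and $h^{-1}$ are bounded and $Ah/h$ is bounded, so the local martingale $M^h_t$ is genuinely a martingale with $\bE_{\bP}(M^h_t)=1$ on each bounded horizon; this is essentially the only place where finiteness of the state space is used. The Theorem then gives that $Y$ is Markov under $\bQ^h$ with extended generator $(A^h,\cD(A))$, and since $\cD(A)=\bR^K$ the extended generator is represented by a bona fide generator matrix.

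Next I would compute the entries of that matrix by applying $A^hf=h^{-1}A(fh)-fA(h)$ to the basis functions $f=\1_{\{j\}}$ (equivalently, by expanding $A^hf$ for a general $f$ and collecting the coefficient of $f(j)$ in $(A^hf)(i)$). With $(A(fh))(i)=\sum_k a_{ik}h_kf(k)$ and $(Ah)(i)=\sum_k a_{ik}h_k$, the off-diagonal coefficient for $j\neq i$ comes out to $a_{ij}h_jh_i^{-1}$, which is precisely the asserted form of $A^h$.

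The point demanding the most care — and the one I would treat as the crux — is the diagonal. Exactly as in the earlier construction of $A^X$, the matrix $[a_{ij}h_jh_i^{-1}]$ is to be read with its diagonal \emph{defined appropriately}: the entry $a^h_{ii}$ is fixed not by the raw formula but by the requirement that each row of $A^h$ sum to zero, i.e.\ $a^h_{ii}=-\sum_{j\neq i}a_{ij}h_jh_i^{-1}$, ensuring $A^h$ is conservative. With the off-diagonal entries produced by the operator formula and the diagonal pinned down in this way, $A^h$ is a genuine generator matrix of the stated form, which completes the proof.
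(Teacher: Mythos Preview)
The paper does not supply a proof of this corollary; it is stated as an immediate specialization of the preceding theorem (itself quoted from Palmowski--Rolski), and the paper then moves on to apply it. Your proposal correctly fills in the omitted details: verifying that a strictly positive vector on a finite state space is a good function, invoking the theorem, and reading off the matrix entries of $A^h$ by applying the operator formula to indicator functions. This is exactly the intended argument, and your attention to the diagonal convention is consistent with how the paper itself handles generator matrices elsewhere (cf.\ the explicit diagonal specification given for the risk-neutral generator immediately after the corollary).

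One minor caveat: the operator formula as printed in the paper, $A^hf=h^{-1}A(fh)-fA(h)$, appears to be missing a factor of $h^{-1}$ in the second term (the Palmowski--Rolski formula is $A^hf=h^{-1}A(fh)-fh^{-1}A(h)$). This does not affect the off-diagonal entries you compute, but with the corrected formula the diagonal actually comes out automatically as $-\sum_{j\neq i}a_{ij}h_jh_i^{-1}$, so no separate ``define appropriately'' step is strictly needed. Either way your conclusion is correct.
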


Using the above corollary, we can change the measure from the statistical measure $\bP$ to a risk-neutral measure $\bQ$ using a vector $h = (h_{11},h_{22}\dots,h_{KK}) \in \bR^{K}$, so that the process ${X}$ will be a time-homogeneous Markov chain under $\bQ$. In this case, the infinitesimal generator under $\bQ$ is found as
\begin{equation*}
  {A}^X = [{a}_{ih,jk}] \, ,
\end{equation*}
where
\begin{equation*}
{a}_{ih,jk}:=
\begin{cases}
a_{ih,jk} \frac{h_{jk}}{h_{ih}} & \text{if $ih \neq jk$} \, ,  \\[0.05in]
-\sum_{ih \neq jk}a_{ih,jk} \frac{h_{jk}}{h_{ih}} & \text{if $ih = jk$} \, .
\end{cases}
\end{equation*}

In Bielecki et al \cite{Bielecki2008a}, it is suggested that the vector $h_{ij}$ can be chosen as
\begin{equation*}
  h_{ij} = e^{\alpha_1i +\alpha_2j} \, ,  \qquad i,j \in \cK \, ,
\end{equation*}
where the parameters $\alpha_1$ and $\alpha_2$ can be estimated through calibration.

\section{Applications}\label{section:Apps}
In this section, we illustrate our results in the context of a CDS and an IRS contract.
We postulate that our CDS and IRS contracts are subject to rating triggers, so that they are terminated in case a trigger event occurs.
We compute the adjustments we discussed previously; namely, CVA, DVA, URVA and DRVA of the contracts for different rating trigger levels.
Moreover, we compare CVA$^R$ and CVA values and find the impact of adding rating triggers on the adjustments.

For the sake of simplicity, we carry out our analysis with $K = 4$ rating categories: A, B, C and D.
The level A represents the highest rating level, whereas D corresponds to the default state.
We assume that the counterparty initially has rating $A$. In what follows, we suppose that the 1-year rating transition matrix is given in Table \ref{table:P1}.
\begin{table}[!h]
\caption{Counterparty's rating transition matrix: $P^1$}
\smallskip
\centering
\begin{tabular}{cccccc}
\hline \hline
  & A & B & C & D \\[0.05in] \hline
  A & 0.9 & 0.08 & 0.017 & 0.003  \\[0.05in]
  B & 0.05 & 0.85 & 0.09 & 0.01  \\[0.05in]
  C & 0.01 & 0.09 & 0.8 & 0.1 \\[0.05in]
  D &  0 & 0 & 0 & 1  \\[0.05in]
  \hline
\end{tabular}
\label{table:P1}
\end{table}

Moreover, we assume that the current rating of the investor is $A$. Investor's 1-year rating transition matrix is assumed to be given as in Table \ref{table:P2}.
\begin{table}[!h]
\caption{Investor's rating transition matrix: $P^2$}
\smallskip
\centering
\begin{tabular}{cccccc}
\hline \hline
  & A & B & C & D \\[0.05in] \hline
 A & 0.8 & 0.1 & 0.05 & 0.05  \\[0.05in]
  B  & 0.04 & 0.9 & 0.03 & 0.03  \\[0.05in]
   C & 0.015 & 0.1 & 0.7 & 0.185  \\[0.05in]
   D & 0 & 0 & 0 & 1  \\[0.05in]
  \hline
\end{tabular}
\label{table:P2}
\end{table}

We assume that the rating transition matrices given above are already risk-neutral, therefore we set $\alpha_1 = \alpha_2 = 0$.
We also assume deterministic recovery rates; $R_1 = R_2 =0.4$ and $R^h_1 = R^h_2 = 1$.

\subsection{CVA of an IRS with Rating Triggers}
In this section, we compute the CVA, DVA, and RVA of a fixed-for-float payer $10$-year IRS contract with \$1 notional, in presence of rating triggers as break clauses.
We assume that the payments are done every quarter, and the fixed leg pays the swap rate, while the floating leg pays the LIBOR rate.
We also assume that the swap is initiated at $T_0:=0$ and we denote by $T_1 < T_2 < \dots < T_n$, the collection of payment dates and $S$ by the fixed rate.

As we noted above, the rating transition matrices of the counterparty and the investor is given by $P^1$ and $P^2$, respectively.

The cumulative dividend process of the IRS contract at time $T_i$ is given by
\begin{align*}
  D_{T_i} &= \sum_{k=1}^i (L(T_k)-S)\delta_k \, ,
\end{align*}
where $L(T_i)$ is time-$T_i$ LIBOR rate and $\delta_k = T_k - T_{k-1}$ for $k = 1,2,\dots,n$. We also suppose that the instantaneous interest rate $r$ follows
\begin{align*}
  dr_t = (\theta - \alpha \, r_t)dt + \sigma dW_t
\end{align*}
where we set $r_0=0.05$, $\theta = 0.1$, $\alpha = 0.05$ and $\sigma = 0.01$. We find the corresponding swap rate as $S = 0.0496$.

We carry out our analysis for uncollateralized, linearly collateralized and exponentially collateralized cases for $\alpha = 0$ and $\alpha = 1$.
Our results are displayed in Tables \ref{table:IRS1}--\ref{table:IRS6}.
We observe that the initial URVA values decrease with the decreasing counterparty trigger levels, which we denote by $K_1$.
Similarly, the initial DRVA values also decrease when we decrease the investor's trigger level, which is denoted by $K_2$.
However, the RVA values, which indicate the total bilateral adjustment due to the additional rating triggers, do not follow a certain pattern.
For example, in Table \ref{table:IRS1}, although we decrease the trigger levels from $K_1 = B$, $K_2 = B$ to $K_1 = C$, $K_2 = C$, the corresponding RVA values do not necessarily decrease, as opposed to URVA and DRVA values.
We also observe from in Tables \ref{table:IRS1}--\ref{table:IRS6} that adding bilateral rating triggers can actually decrease the initial bilateral CVA values (in absolute values), compared to the case with no rating triggers, which is $K_1 = D$ or $K_2 = D$.
For instance, in Table \ref{table:IRS1}, the absolute value of CVA$^R$ with no rating triggers is almost three times greater than the absolute value of CVA$^R$ with $K_1 = B$ and $K_2 = B$.
In other words, in this case adding rating triggers decreases the absolute value of the bilateral CVA by nearly 60\%.
Detailed information about the change in the CVA$^R$ values due to rating triggers can be seen in Figures \ref{CVArating1}--\ref{CVArating3a}
Also, it can be seen from the Tables \ref{table:IRS2},\ref{table:IRS4} and \ref{table:IRS6}, where $\alpha = 1$, that the URVA and DRVA values are slightly higher compared to the values in Tables \ref{table:IRS1},\ref{table:IRS3} and \ref{table:IRS5}, where $\alpha = 0$.

\begin{table}[tH]
\caption{CVA and RVA (1\$ $\times 10^{-3}$) components of an IRS, $\alpha=0$, No collateralization}
\smallskip
\centering
\renewcommand{\arraystretch}{1.3}
\begin{tabular}{cccccccc}
\hline \hline
    $K_1$ & $K_2$ & URVA  & DRVA & RVA & UCVA$^R$ & DVA$^R$ & CVA$^R$ \\
    \hline
  B & B & 1.21214  & 7.46661  & -6.25446  & 0.20602 & 3.66672  & -3.46070  \\[0.03in]
  B & C & 1.13272  & 7.60683  & -6.47411  & 0.24496 & 3.94986  & -3.70490  \\[0.03in]
  C & B & 0.90323 & 7.47735  & -6.57412  & 0.51707 & 4.04024  & -3.52316  \\[0.03in]
  C & C & 0.91109  & 7.61633  & -6.70523  & 0.47848 & 3.90228  & -3.42379  \\[0.03in]
  B & D & 1.20710  & 0 & 1.20710  & 0.23716 & 11.1861 & -10.9490 \\[0.03in]
  D & B & 0 & 7.77754  & -7.77754  & 1.50820  & 3.48440  & -1.97620  \\[0.03in]
  C & D & 0.83628 & 0 & 0.83628 & 0.58417 & 11.4550 & -10.8708 \\[0.03in]
  D & C & 0 & 7.76986  & -7.76986  & 1.38051  & 3.87084  & -2.49033 \\[0.03in]
  D & D & 0 & 0 & 0 & 1.40033  & 11.4083 & -10.0080\\[0.03in]
  \hline
\end{tabular}
\label{table:IRS1}
\end{table}

\begin{table}[H]
\caption{Mitigation in the CVA of an IRS, $\alpha=0$, No collateralization}
\smallskip
\centering
\renewcommand{\arraystretch}{1.3}
\begin{tabular}{cccccccc}
\hline \hline
   (B,B)  & (B,C) & (C,B) &  (C,C) & (B,D) & (D,B) & (C,D) & (D,C)\\
    \hline
   65.42\% & 62.98 \%  & 64.80\% & 65.79\%  & -9.40\% &  80.25\% & -8.61\% & 75.12\% \\[0.03in]
  \hline
\end{tabular}
\label{table:IRS1g}
\end{table}

 \begin{figure}[H]
\centering
\begin{tabular}{c}
\epsfig{file=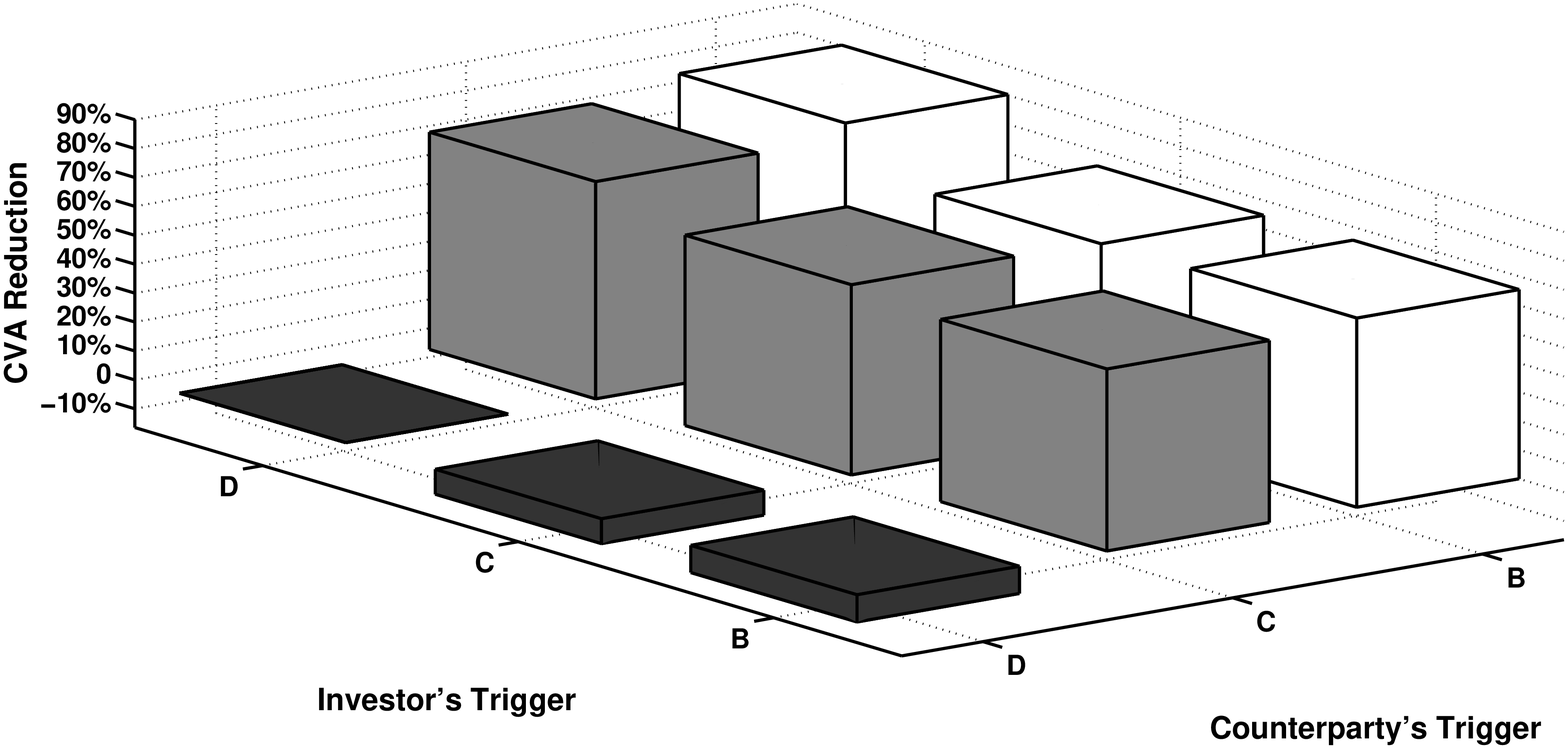,width=0.9\linewidth,clip=}
\end{tabular}
\caption{Mitigation in the CVA of an IRS (in \%), $\alpha=0$, No collateralization}
\label{CVArating1}
\end{figure}

\begin{table}[tH]
\caption{CVA and RVA (1\$ $\times 10^{-3}$) components of an IRS, $\alpha=1$, No collateralization}
\smallskip
\centering
\renewcommand{\arraystretch}{1.3}
\begin{tabular}{cccccccc}
\hline \hline
$K_1$ & $K_2$ & URVA  & DRVA & RVA & UCVA$^R$ & DVA$^R$ & CVA$^R$ \\
    \hline
  B & B & 1.56101  & 8.90492  & -7.34393  & 0.24788  & 3.44390  & -3.19610 \\[0.03in]
  B & C & 1.39405  & 8.25898  & -6.86493  & 0.14690  & 3.46560  & -3.31870 \\[0.03in]
  C & B & 1.01004  & 8.91694  & -7.90690  & 0.17087  & 3.39857  & -3.22770 \\[0.03in]
  C & C & 1.21133  & 8.26267  & -7.05133  & 0.49359  & 3.60699  & -3.11339 \\[0.03in]
  B & D & 1.48947  & 0 & 1.48947  & 0.03862 & 12.3056  & -12.2669\\[0.03in]
  D & B & 0 & 8.97150  & -8.97150  & 1.83154  & 3.28835  & -1.45680 \\[0.03in]
  C & D & 0.99948 & 0 & 0.99948 & 0.06116 & 12.4500 & -12.3889 \\[0.03in]
  D & C & 0 & 8.45006  & -8.45006  & 1.63383  & 3.15669  & -1.52285 \\[0.03in]
  D & D & 0 & 0 & 0 & 1.68138  & 12.3790 & -10.6977\\[0.03in]
  \hline
\end{tabular}
\label{table:IRS2}
\end{table}

\begin{table}[H]
\caption{Mitigation in the CVA of an IRS (in \%), $\alpha=1$, No collateralization}
\smallskip
\centering
\renewcommand{\arraystretch}{1.3}
\begin{tabular}{cccccccc}
\hline \hline
   (B,B)  & (B,C)   & (C,B)  &  (C,C)  & (B,D)   & (D,B)   & (C,D)   & (D,C)   \\
    \hline
   70.12\% & 68.98\%  & 69.83\% & 70.90\%  & -14.67\% &  86.38\% & -15.81\% & 85.76\%  \\[0.03in]
  \hline
\end{tabular}
\label{table:IRS2g}
\end{table}

 \begin{figure}[H]
\centering
\begin{tabular}{c}
\epsfig{file=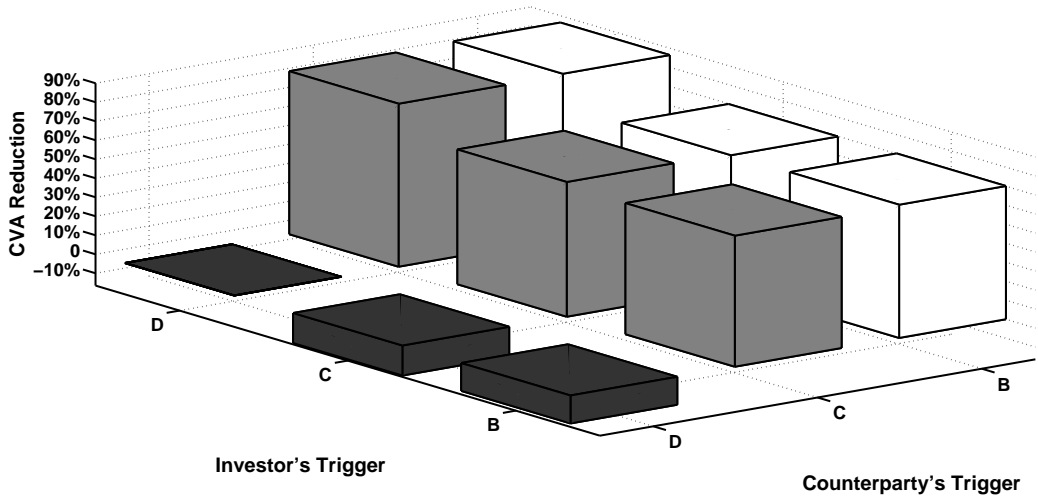,width=0.90\linewidth,clip=}
\end{tabular}
\caption{Mitigation in the CVA of an IRS (in \%), $\alpha=1$, No collateralization}
\label{CVArating1a}
\end{figure}
Moreover, we see that the UCVA$^R$ values start increasing as we lower the counterparty trigger levels. We also observe that the DVA$^R$ values also increase with the decreasing trigger levels for the investor. However, the CVA$^R$ values, that is the bilateral CVA, do not change significantly unless we set $K_1 = D$ or $K_2 = D$, which essentially means elimination of rating triggers.

\begin{table}[tH]
\caption{CVA and RVA (1\$ $\times 10^{-3}$) components of an IRS, $\alpha=0$, Linear collateral rate: $\rho^i_l$}
\smallskip
\centering
\renewcommand{\arraystretch}{1.3}
\begin{tabular}{cccccccc}
\hline \hline
$K_1$ & $K_2$ & URVA  & DRVA & RVA & UCVA$^R$ & DVA$^R$ & CVA$^R$ \\
    \hline
  B & B & 0.65278  & 4.25839  & -3.60560  & 0.16258  & 1.79760  & -1.63502  \\[0.03in]
  B & C & 0.60496  & 3.17065  & -2.56569  & 0.18720  & 3.02070  & -2.83349 \\[0.03in]
  C & B & 0.42969  & 4.23383  & -3.80413  & 0.42137  & 1.88225  & -1.46087 \\[0.03in]
  C & C & 0.43839  & 3.20526  & -2.76687  & 0.38831  & 3.01006  & -2.62174 \\[0.03in]
  B & D & 0.67613  & 0 & 0.67613  & 0.18604  & 6.04750  & -5.86146 \\[0.03in]
  D & B & 0 & 4.37787  & -4.37787  & 0.91087  & 1.68992  & -0.77911  \\[0.03in]
  C & D & 0.43038  & 0 & 0.43038  & 0.46165  & 6.08349  & -5.62184 \\[0.03in]
  D & C & 0 & 3.26057  & -2.43012  & 0.83045  & 2.97872  & -2.14826  \\[0.03in]
  D & D & 0 & 0 & 0 & 0.83666 & 6.11896  & -5.28229  \\[0.03in]
  \hline
\end{tabular}
\label{table:IRS3}
\end{table}

\begin{table}[H]
\caption{Mitigation in the CVA of an IRS (in \%), $\alpha=0$, Linear collateral rate: $\rho^i_l$}
\smallskip
\centering
\renewcommand{\arraystretch}{1.3}
\begin{tabular}{cccccccc}
\hline \hline
     (B,B)  & (B,C) & (C,B) &  (C,C) & (B,D) & (D,B) & (C,D) & (D,C)  \\
    \hline
   69.05\%  & 46.36\%   & 72.34\% &  50.37\%  & -10.97\% &  85.25\% & -6.43\% & 59.33\%  \\[0.03in]
  \hline
\end{tabular}
\label{table:IRS3g}
\end{table}

 \begin{figure}[H]
\centering
\begin{tabular}{c}
\epsfig{file=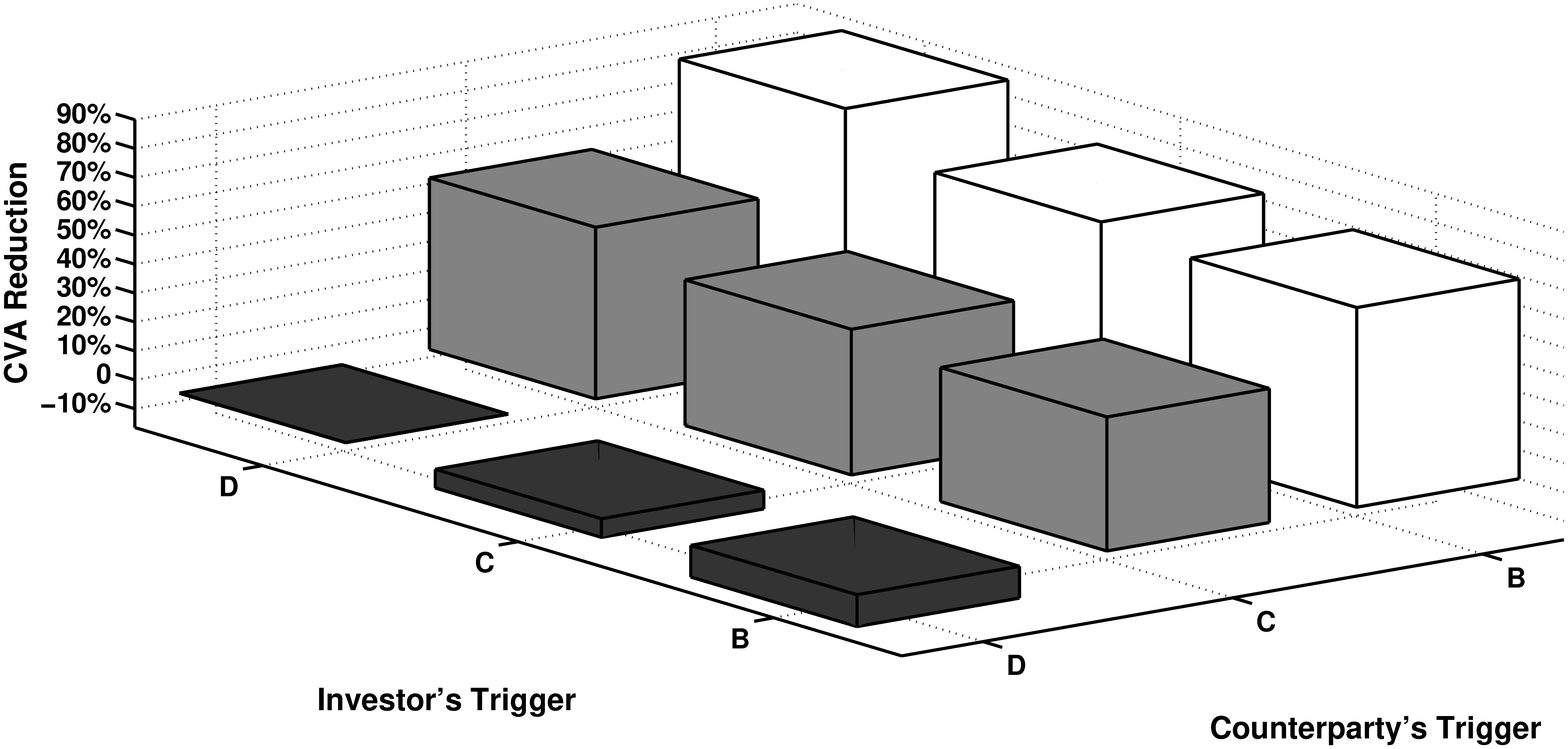,width=0.90\linewidth,clip=}
\end{tabular}
\caption{Mitigation in the CVA of an IRS (in \%), $\alpha=0$, Linear collateral rate: $\rho^i_l$}
\label{CVArating2}
\end{figure}

We note that DRVA values are equal to zero whenever $K_2 = D$. This is because by setting the investors trigger to level D, we simply do not to have any ratings adjustments for the investor. Similarly, we see that the URVA values are equal to zero where $K_1 = D$. Naturally, the case $K_1 = D$ and $K_2 = D$ corresponds to the CVA computation without any rating triggers.

\begin{table}[tH]
\caption{CVA and RVA (1\$ $\times 10^{-3}$) components of an IRS, $\alpha=1$, Linear collateral rate: $\rho^i_l$}
\smallskip
\centering
\renewcommand{\arraystretch}{1.3}
\begin{tabular}{cccccccc}
\hline \hline
$K_1$ & $K_2$ & URVA  & DRVA & RVA & UCVA$^R$ & DVA$^R$ & CVA$^R$ \\
    \hline
  B & B & 0.90873  & 4.95255  & -4.04382  & 0.21863  & 1.63945  & -1.42081 \\[0.03in]
  B & C & 0.77176  & 3.57184  & -2.80007  & 0.14690  & 2.65780  & -2.51090 \\[0.03in]
  C & B & 0.50178  & 4.97124  & -4.46946  & 0.17087  & 1.61150  & -1.44063 \\[0.03in]
  C & C & 0.61942  & 3.58533  & -2.96590  & 0.41343  & 2.74593  & -2.33249 \\[0.03in]
  B & D & 0.85003  & 0 & -0.85003  & 0.03862  & 6.60363  & -6.56500 \\[0.03in]
  D & B & 0 & 5.02376  & -5.02376  & 1.12596  & 1.52697  & -0.40100 \\[0.03in]
  C & D & 0.48810  & 0 & 0.48810  & 0.06116  & 6.68342  & -6.62226 \\[0.03in]
  D & C & 0 & 3.65312  & -3.65312  & 0.97908  & 2.41592  & -1.43683 \\[0.03in]
  D & D & 0 & 0 & 0 &  1.06488  & 6.75473  & -5.68985 \\[0.03in]
  \hline
\end{tabular}
\label{table:IRS4}
\end{table}

\begin{table}[H]
\caption{Mitigation in the CVA of an IRS (in \%), $\alpha=1$, Linear collateral rate: $\rho^i_l$}
\smallskip
\centering
\renewcommand{\arraystretch}{1.3}
\begin{tabular}{cccccccc}
\hline \hline
   (B,B)  & (B,C)   & (C,B)  &  (C,C)  & (B,D)   & (D,B)   & (C,D)   & (D,C)  \\
    \hline
   75.03\% & 55.84\%  & 74.68\% & 59.01\%  & -15.38\% &  92.95\% & -16.39\% & 74.75\% \\[0.03in]
  \hline
\end{tabular}
\label{table:IRS4g}
\end{table}

 \begin{figure}[H]
\centering
\begin{tabular}{c}
\epsfig{file=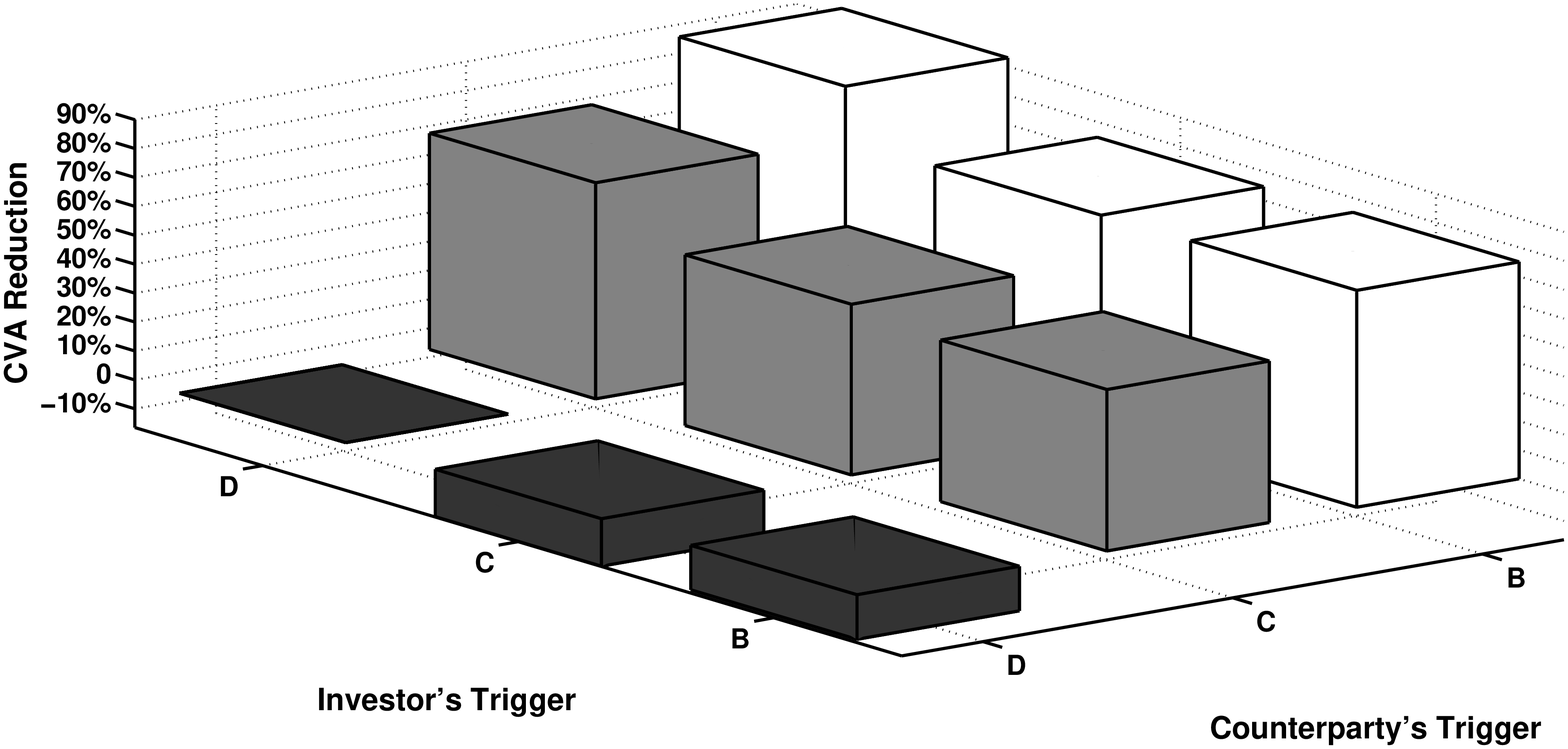,width=0.90\linewidth,clip=}
\end{tabular}
\caption{Mitigation in the CVA of an IRS (in \%), $\alpha=1$, Linear collateral rate: $\rho^i_l$}
\label{CVArating2a}
\end{figure}

\begin{table}[tH]
\caption{CVA and RVA (1\$ $\times 10^{-3}$) components of an IRS, $\alpha=0$, Exponential collateral rate: $\rho^i_e$}
\smallskip
\centering
\renewcommand{\arraystretch}{1.3}
\begin{tabular}{cccccccc}
\hline \hline
$K_1$ & $K_2$ & URVA  & DRVA & RVA & UCVA$^R$ & DVA$^R$ & CVA$^R$ \\
    \hline
  B & B & 0.37784  & 2.47245  & -2.09460  & 0.14968  & 1.24248  & -1.09280 \\[0.03in]
  B & C & 0.33067  & 1.54366  & -1.21298  & 0.17005  & 2.18783  & -2.01777 \\[0.03in]
  C & B & 0.22534  & 2.40717  & -2.18183  & 0.33559  & 1.24133  & -0.90573 \\[0.03in]
  C & C & 0.23310  & 1.57223  & -1.33913  & 0.30749  & 2.21030  & -1.90281 \\[0.03in]
  B & D & 0.38613  &0  & 0.38613  & 0.17085  & 3.69667  & -3.52582 \\[0.03in]
  D & B & 0 & 2.53262  & -2.53262  & 0.58359  & 1.15705  & -0.57345 \\[0.03in]
  C & D & 0.23556  & 0 & 0.23556  & 0.35182  &  3.65194  & -3.30011 \\[0.03in]
  D & C & 0 & 1.60587  & -1.60587  & 0.54372  & 2.17905  & -1.63533 \\[0.03in]
  D & D & 0 & 0 & 0 & 0.54108  & 3.70621  & -3.16512 \\[0.03in]
  \hline
\end{tabular}
\label{table:IRS5}
\end{table}

\begin{table}[H]
\caption{Mitigation in the CVA of an IRS (in \%), $\alpha=0$, Exponential collateral rate: $\rho^i_e$}
\smallskip
\centering
\renewcommand{\arraystretch}{1.3}
\begin{tabular}{cccccccc}
\hline \hline
     (B,B)   & (B,C)    & (C,B)  &  (C,C)     & (B,D)  & (D,B)    & (C,D)  & (D,C)  \\
    \hline
   65.47\%  & 36.25\%    & 71.38\%  &  39.88\%  & -11.40\% &  81.88\%  &   -4.27\% & 48.33\%  \\[0.03in]
  \hline
\end{tabular}
\label{table:IRS5g}
\end{table}

 \begin{figure}[H]
\centering
\begin{tabular}{c}
\epsfig{file=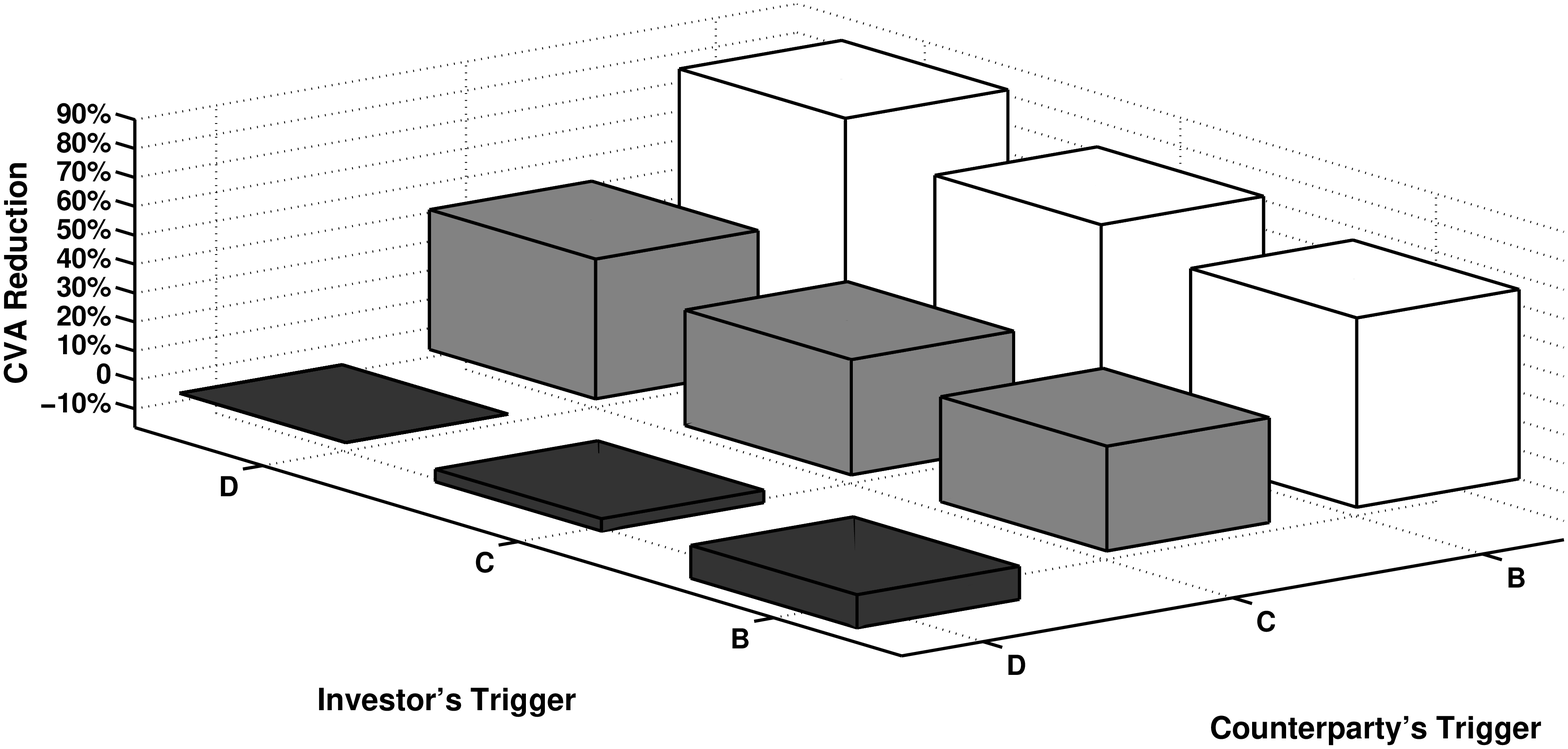,width=0.90\linewidth,clip=}
\end{tabular}
\caption{Mitigation in the CVA of an IRS (in \%), $\alpha=0$, Exponential collateral rate: $\rho^i_e$}
\label{CVArating3}
\end{figure}

\begin{table}[tH]
\caption{CVA and RVA (1\$ $\times 10^{-3}$) components of an IRS, $\alpha=1$, Exponential collateral rate: $\rho^i_e$}
\smallskip
\centering
\renewcommand{\arraystretch}{1.3}
\begin{tabular}[c]{cccccccc}
\hline \hline
$K_1$ & $K_2$ & URVA  & DRVA & RVA & UCVA$^R$ & DVA$^R$ & CVA$^R$ \\
    \hline
  B & B & 0.54239  & 2.8266  & -2.2842  & 0.20994  & 1.1035  & -0.89356 \\[0.03in]
  B & C & 0.43664  & 1.7910  & -1.3544  & 0.14690  & 1.9337  & -1.7868 \\[0.03in]
  C & B & 0.26953  & 2.8503  & -2.5807  & 0.17087  & 1.0807  & -0.90987 \\[0.03in]
  C & C & 0.34294  & 1.8238  & -1.4809  & 0.34158  & 1.9741  & -1.6325 \\[0.03in]
  B & D & 0.50140  & 0 & 0.50140  & 0.038628  & 3.9588  & -3.9202 \\[0.03in]
  D & B & 0 & 2.8784  & -2.8784  & 0.76099  & 1.0038  & -0.24285 \\[0.03in]
  C & D & 0.26414  & 0 & 0.26414  & 6.1162  & 3.9915  & -3.9304 \\[0.03in]
  D & C & 0 & 1.8357  & -1.8357  & 0.61978  & 1.7519  & -1.1321 \\[0.03in]
  D & D & 0 & 0 & 0 & 0.73056  & 4.0954  & -3.3648 \\[0.03in]
  \hline
\end{tabular}
\label{table:IRS6}
\end{table}

\begin{table}[H]
\caption{Mitigation in the CVA of an IRS (in \%), $\alpha=1$, Exponential collateral rate: $\rho^i_e$}
\smallskip
\centering
\renewcommand{\arraystretch}{1.3}
\begin{tabular}{cccccccc}
\hline \hline
   (B,B)  & (B,C)   & (C,B)  &  (C,C)  & (B,D)   & (D,B)   & (C,D)   & (D,C)  \\
    \hline
   73.44\% & 46.90\% & 72.96\% & 51.48\% & -16.50\% &  92.78\% & -16.81\% & 66.35\%  \\[0.03in]
  \hline
\end{tabular}
\label{table:IRS6g}
\end{table}

 \begin{figure}[H]
\centering
\begin{tabular}{c}
\epsfig{file=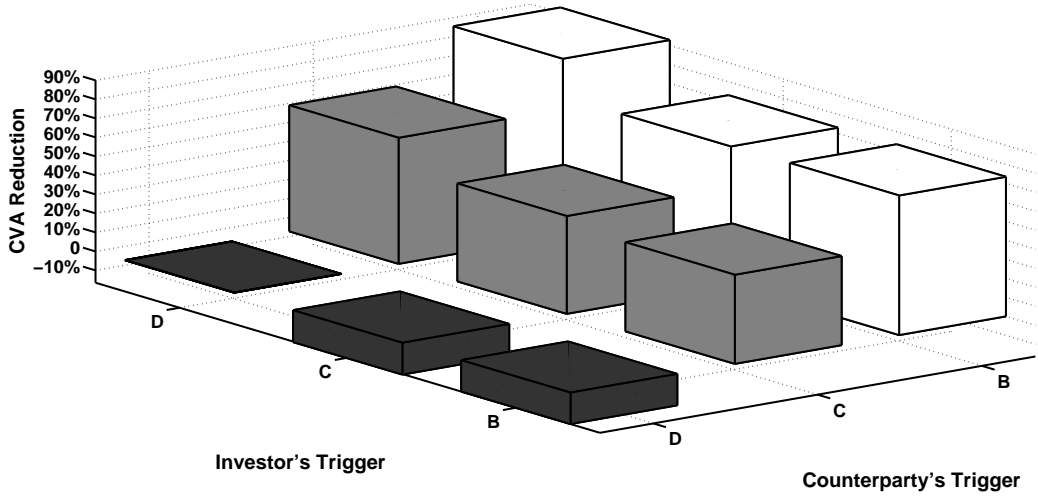,width=0.90\linewidth,clip=}
\end{tabular}
\caption{Mitigation in the CVA of an IRS (in \%), $\alpha=1$, Exponential collateral rate: $\rho^i_e$}
\label{CVArating3a}
\end{figure}
%\clearpage

\subsection{CVA of a CDS with Rating Triggers}
In this section, we compute the CVA, DVA, and RVA of a CDS contract in presence of rating triggers as break clauses.
Recall that $D$ represents the counterparty risk-free cumulative dividend process of a contract.
We assume that the reference entity is free of any trigger events.
We denote by $\tau_3$ the default time of the reference entity and $R_3$ the recovery rate of the reference entity.
We assume that the CDS contract has spread $\kappa$, expires at $T$ and has nominal value of $1$.
Consequently, the cumulative dividend process of the CDS contract is given by
\begin{align*}
  D_t &= (1-R_3)\1_{\{\tau_3 \leq t \}} -  \kappa(t \wedge T \wedge \tau) \, ,
\end{align*}
for all $t \in [0,T]$.
We also assume that the underlying entity's 1-year rating transition matrix is given as in Table \ref{table:P3}.

\begin{table}[!h]
\caption{Underlying entity's rating transition matrix: $P^3$}
\smallskip
\centering
\begin{tabular}{cccccc}
\hline \hline
  & A & B & C & D \\[0.05in] \hline
 A & 0.95 & 0.03 & 0.019 & 0.001  \\[0.05in]
  B  & 0.04 & 0.85 & 0.107 & 0.003  \\[0.05in]
   C & 0.01 & 0.19 & 0.791 & 0.009  \\[0.05in]
   D & 0 & 0 & 0 & 1  \\[0.05in]
  \hline
\end{tabular}
\label{table:P3}
\end{table}

Similar to the IRS example, we carry out our analysis for uncollateralized, linearly collateralized and exponentially collateralized CDS contracts where $\alpha = 0$ and $\alpha = 1$.
We display our results in Tables \ref{table:CDS1}--\ref{table:CDS6}.

The initial URVA values increase with the increasing counterparty trigger levels, and the initial DRVA values increase with the increasing investor trigger levels.
However, the absolute values of the RVA numbers can increase or decrease with the changing trigger levels.
For example, in Table \ref{table:CDS1}, although we decrease the trigger levels from $K_1 = B$, $K_2 = B$ to $K_1 = C$, $K_2 = C$, the corresponding RVA values (in absolute terms) do not necessarily decrease, compared to the URVA and DRVA values.

It can also be observed from in Tables \ref{table:CDS1}--\ref{table:CDS6} that bilateral rating triggers can actually decrease the initial bilateral CVA values (in absolute values).
For instance, the absolute value of CVA$^R$ in Table \ref{table:CDS1} with no rating triggers is almost three times greater than the absolute value of CVA$^R$ with $K_1 = B$ and $K_2 = B$.
In addition, the UCVA$^R$ values with no rating triggers are also almost three times greater than the UCVA$^R$ values with $K_1 = B$ and $K_2 = B$. Similarly, the DVA$^R$ values with no rating triggers are almost four times greater than the UCVA$^R$ values with $K_1 = B$ and $K_2 = B$.
In other words, in in Table \ref{table:CDS1}, adding rating triggers decreases the UCVA$^R$ value by nearly 60\%, DVA$^R$ value by nearly 75\%, and the absolute value of the bilateral CVA$^R$ by nearly 60\%. Figures \ref{CVArating4}--\ref{CVArating6a} present the changes in the CVA values for each set of rating triggers.

Also, it can be seen from the Tables \ref{table:CDS2},\ref{table:CDS4} and \ref{table:CDS6}, where $\alpha = 1$, that the URVA and DRVA values are slightly higher compared to the values in Tables \ref{table:CDS1},\ref{table:CDS3} and \ref{table:CDS5}, where $\alpha = 0$.

Moreover, it can be seen from Tables \ref{table:CDS1}--\ref{table:CDS6} the UCVA$^R$ values start increasing as we lower the counterparty trigger levels. Similarly, the DVA$^R$ values also increase with the decreasing trigger levels for the investor. However, the CVA$^R$ values do not change significantly unless we set $K_1 = D$ or $K_2 = D$, or eliminate the rating triggers.

\begin{table}[tH]
\caption{CVA and RVA (1\$ $\times 10^{-3}$) components of a CDS, $\alpha=0$, No collateralization}
\smallskip
\centering
\renewcommand{\arraystretch}{1.3}
\begin{tabular}[c]{cccccccc}
\hline \hline
$K_1$ & $K_2$ & URVA  & DRVA & RVA & UCVA$^R$ & DVA$^R$ & CVA$^R$ \\
    \hline
  B & B & 3.49263  & 35.1382  & -31.6455  & 2.19456  & 17.2300  & -15.0354 \\[0.03in]
  B & C & 3.42122  & 34.1546  & -30.7334  & 2.57973  & 1.63840  & -13.8043 \\[0.03in]
  C & B & 1.85901  & 34.6593  & -32.8003  & 4.24977  & 15.4880  & -11.2382 \\[0.03in]
  C & C & 1.88374  & 34.1047  & -34.2210  & 4.29976  & 18.6530  & -14.3532 \\[0.03in]
  B & D & 3.20465  & 0 & 3.20465  & 2.13971  & 50.1943  & -48.0545 \\[0.03in]
  D & B & 0 & 37.4041  & -37.4041  & 5.55227  & 15.0221  & -9.46991 \\[0.03in]
  C & D & 1.86450  & 0 & 1.86450  & 3.82623  & 51.2675  & -47.4413 \\[0.03in]
  D & C &  0 & 34.2917  & -34.2917  & 5.60762  & 16.0293  & -10.4217 \\[0.03in]
  D & D & 0 & 0 & 0 & 5.95988 & 53.8175  & -47.8576 \\[0.03in]
  \hline
\end{tabular}
\label{table:CDS1}
\end{table}

\begin{table}[H]
\caption{Mitigation in the CVA of a CDS (in \%), $\alpha=0$, No collateralization}
\smallskip
\centering
\renewcommand{\arraystretch}{1.3}
\begin{tabular}{cccccccc}
\hline \hline
     (B,B)   & (B,C)    & (C,B)    &  (C,C)     & (B,D)  & (D,B)     & (C,D)  & (D,C) \\
    \hline
   68.58\%  & 71.15\%   &  76.52\%  &  70.01\%   &  -0.41\% &  80.21\%  &  0.87\% & 78.22\%  \\[0.03in]
  \hline
\end{tabular}
\label{table:CDS1g}
\end{table}

 \begin{figure}[H]
\centering
\begin{tabular}{c}
\epsfig{file=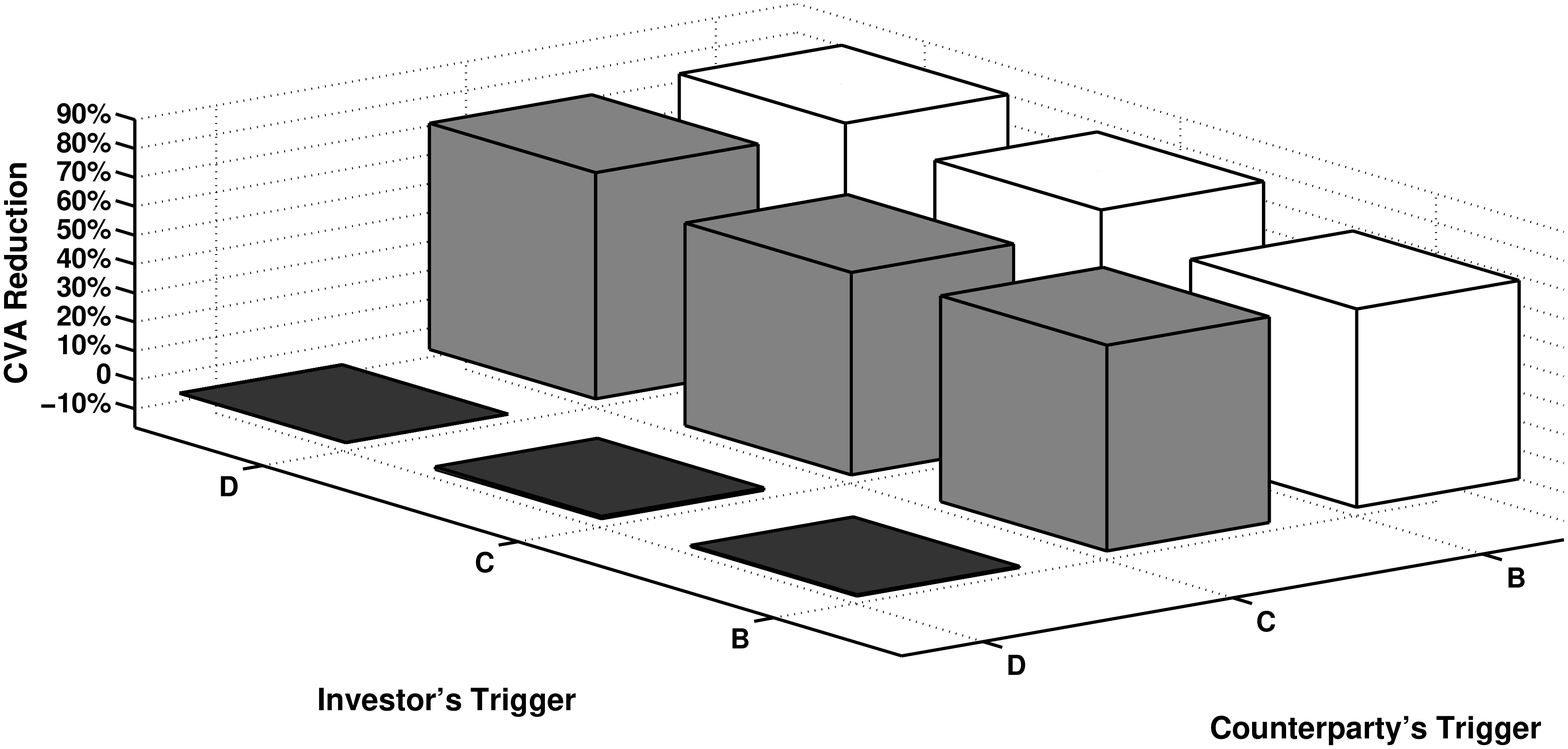,width=0.90\linewidth,clip=}
\end{tabular}
\caption{Mitigation in the CVA of a CDS (in \%), $\alpha=0$, No collateralization}
\label{CVArating4}
\end{figure}

The DRVA values are equal to zero whenever $K_2 = D$, and the URVA values are equal to zero where $K_1 = D$, since the rating triggers are set to the default levels.

\begin{table}[tH]
\caption{CVA and RVA (1\$ $\times 10^{-3}$) components of a CDS, $\alpha=1$, No collateralization}
\smallskip
\centering
\renewcommand{\arraystretch}{1.3}
\begin{tabular}[c]{cccccccc}
\hline \hline
$K_1$ & $K_2$ & URVA  & DRVA & RVA & UCVA$^R$ & DVA$^R$ & CVA$^R$ \\
    \hline
  B & B & 5.50082  & 37.8154  & -32.3146  & 2.55096  & 17.8533  & -15.3024 \\[0.03in]
  B & C & 5.10706  & 36.0968  & -30.9897  & 2.08154  & 15.6063  & -13.5248 \\[0.03in]
  C & B & 3.00767  & 37.8859  & -34.8782  & 1.94981  & 16.9493  & -14.9994 \\[0.03in]
  C & C & 3.08081  & 37.4707  & -34.3899  & 4.65988  & 17.8180  & -13.1581 \\[0.03in]
  B & D & 5.54867  & 0 & 5.54867  & 3.18332  & 47.0143  & -43.8309 \\[0.03in]
  D & B & 0 & 38.7167  & -38.7167  & 32.1677  & 17.7398  & -14.5230 \\[0.03in]
  C & D & 2.98910  & 0 & 2.98910  & 5.39572  & 50.2480  & -44.8523 \\[0.03in]
  D & C & 0 & 37.6789  & -37.6789  & 7.41472  & 18.1496  & -10.7349 \\[0.03in]
  D & D & 0 & 0 & 0 & 8.96186  & 56.8067  & -47.8448 \\[0.03in]
  \hline
\end{tabular}
\label{table:CDS2}
\end{table}

\begin{table}[H]
\caption{Mitigation in the CVA of a CDS (in \%), $\alpha=1$, No collateralization}
\smallskip
\centering
\renewcommand{\arraystretch}{1.3}
\begin{tabular}{cccccccc}
\hline \hline
   (B,B)  & (B,C)   & (C,B)  &  (C,C)  & (B,D)   & (D,B)   & (C,D)   & (D,C)  \\
    \hline
   68.02\% & 71.73\%  & 68.65\% & 72.50\%  & 8.39\% &  69.65\% & 6.25\% & 77.56\% \\[0.03in]
  \hline
\end{tabular}
\label{table:CDS2g}
\end{table}

 \begin{figure}[H]
\centering
\begin{tabular}{c}
\epsfig{file=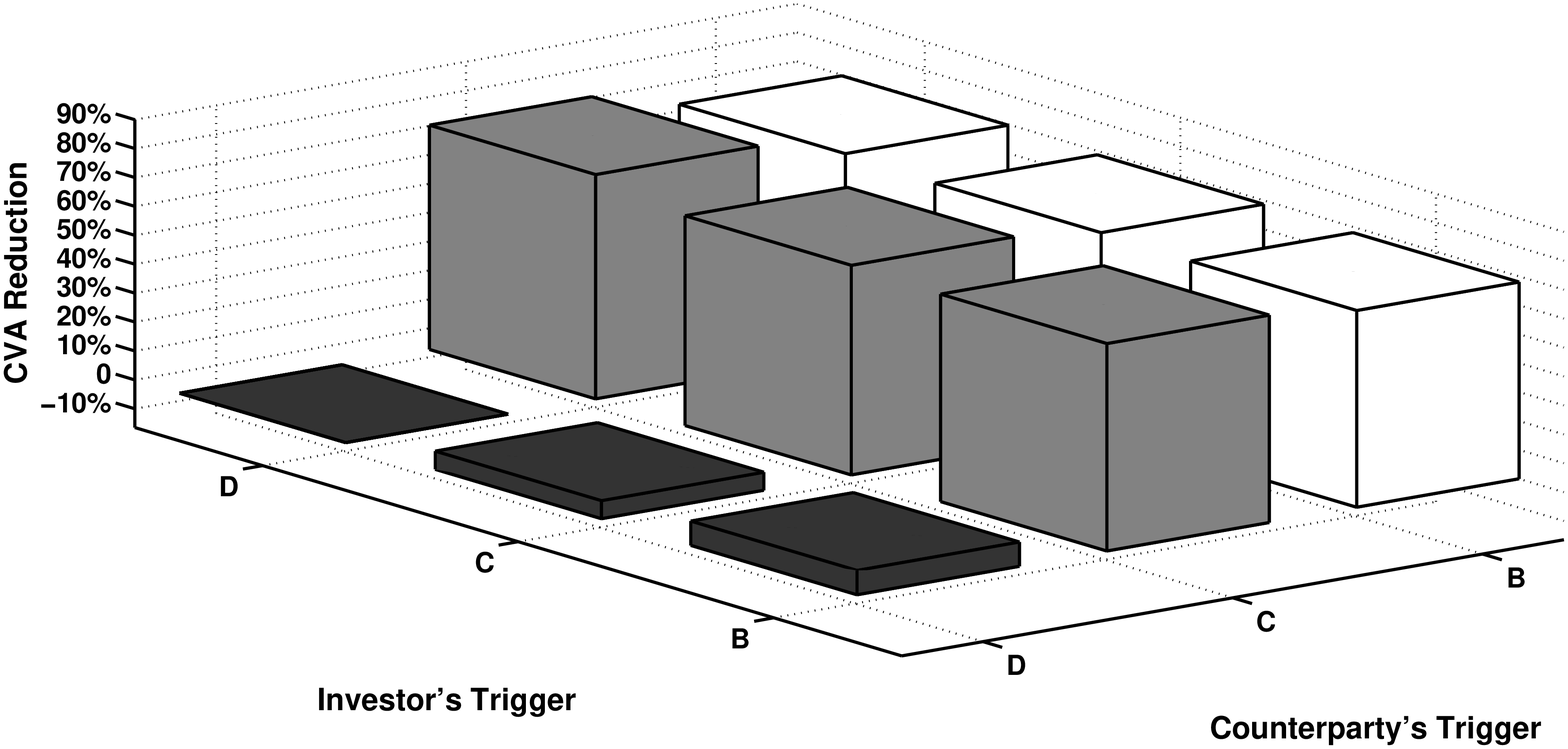,width=0.90\linewidth,clip=}
\end{tabular}
\caption{Mitigation in the CVA of a CDS (in \%), $\alpha=1$, No collateralization}
\label{CVArating4a}
\end{figure}

\begin{table}[tH]
\caption{CVA and RVA (1\$ $\times 10^{-3}$) components of a CDS, $\alpha=0$, Linear collateral rate: $\rho^i_l$}
\smallskip
\centering
\renewcommand{\arraystretch}{1.3}
\begin{tabular}[c]{cccccccc}
\hline \hline
$K_1$ & $K_2$ & URVA  & DRVA & RVA & UCVA$^R$ & DVA$^R$ & CVA$^R$ \\
    \hline
  B & B & 1.82768  & 19.3189  & -17.4912  & 1.85172  & 8.54391  & -6.69219 \\[0.03in]
  B & C & 2.20520  & 15.4545  & -13.2493  & 2.43304  & 12.2874  & -9.85440 \\[0.03in]
  C & B & 0.70377 & 19.7287  & -19.0249  & 3.23993  & 8.37492  & -5.13498 \\[0.03in]
  C & C & 0.75460  & 14.9528  & -14.1982  & 3.8625  & 13.734  & -9.8717 \\[0.03in]
  B & D & 1.75961  & 0 & 1.75961  & 2.01114  & 28.6036  & -26.5925 \\[0.03in]
  D & B & 0 & 21.3386  & -21.3386  & 3.94483  & 8.31585  & -4.37102 \\[0.03in]
  C & D & 0.83638  & 0 & 0.83638  & 3.20253  & 28.1376  & -24.9351 \\[0.03in]
  D & C & 0 & 14.3316  & -14.3316  & 3.96853  & 12.6725  & -8.70399 \\[0.03in]
  D & D & 0 & 0 & 0 & 4.08131  & 30.1705  & -26.0892 \\[0.03in]
  \hline
\end{tabular}
\label{table:CDS3}
\end{table}

\begin{table}[H]
\caption{Mitigation in the CVA of a CDS (in \%), $\alpha=0$, Linear collateral rate: $\rho^i_l$}
\smallskip
\centering
\renewcommand{\arraystretch}{1.3}
\begin{tabular}{cccccccc}
\hline \hline
     (B,B)   & (B,C)    & (C,B)    &  (C,C)     & (B,D)  & (D,B)     & (C,D)  & (D,C)  \\
    \hline
   74.35\%  & 62.23\%  &  80.32\%  &  62.16\%   &  -1.93\% &  83.25\%  &  4.42\% & 66.64\%   \\[0.03in]
  \hline
\end{tabular}
\label{table:CDS3g}
\end{table}

 \begin{figure}[H]
\centering
\begin{tabular}{c}
\epsfig{file=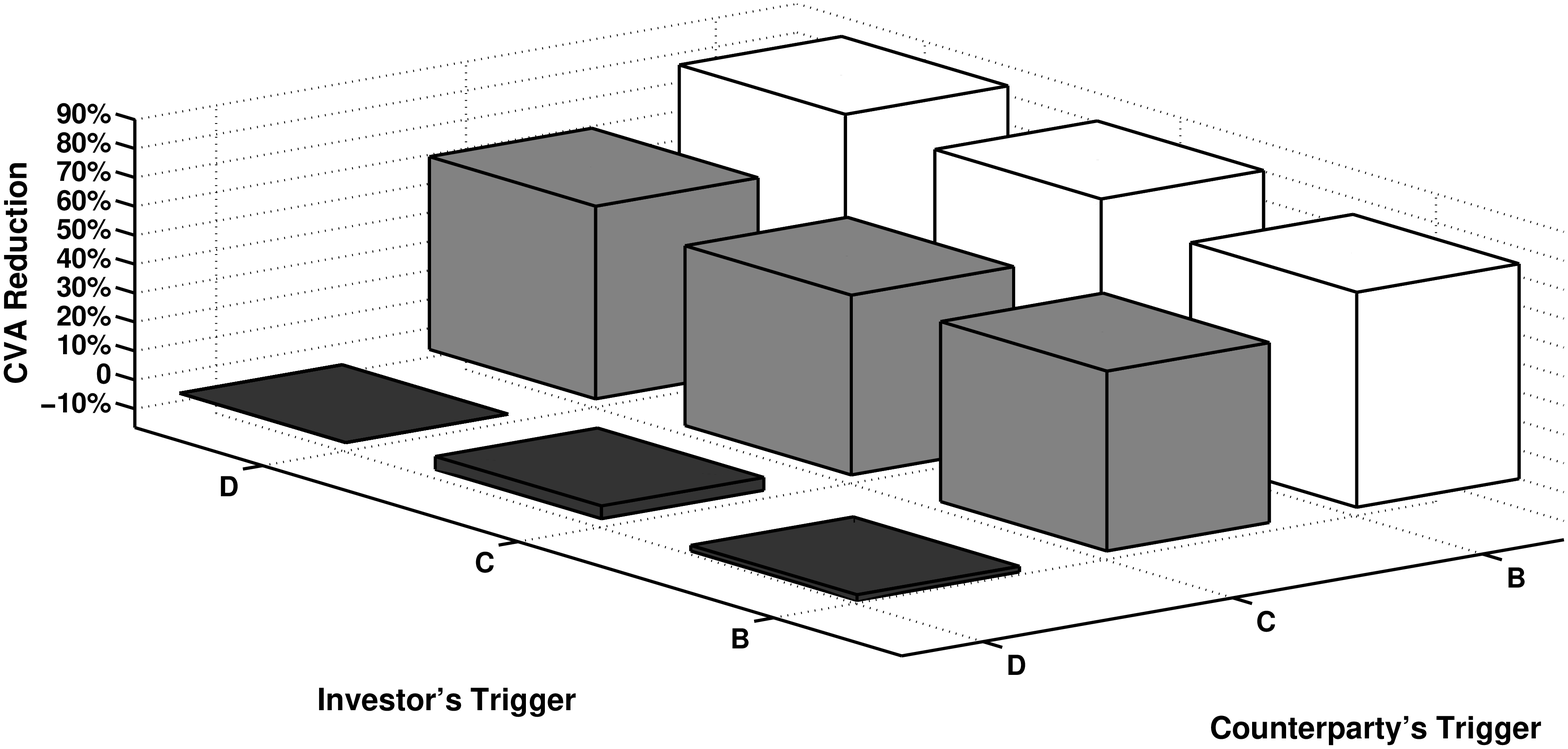,width=0.90\linewidth,clip=}
\end{tabular}
\caption{Mitigation in the CVA of a CDS (in \%), $\alpha=0$, Linear collateral rate: $\rho^i_l$}
\label{CVArating5}
\end{figure}

\begin{table}[tH]
\caption{CVA and RVA (1\$ $\times 10^{-3}$) components of a CDS, $\alpha=1$, Linear collateral rate: $\rho^i_l$}
\smallskip
\centering
\renewcommand{\arraystretch}{1.3}
\begin{tabular}[c]{cccccccc}
\hline \hline
$K_1$ & $K_2$ & URVA  & DRVA & RVA & UCVA$^R$ & DVA$^R$ & CVA$^R$ \\
    \hline
  B & B & 3.20625  & 20.8636  & -17.6574  & 2.04873  & 9.17156  & -7.12283 \\[0.03in]
  B & C & 3.20643  & 15.2936  & -12.0872  & 2.09685  & 12.1584  & -10.0616 \\[0.03in]
  C & B & 1.27267  & 22.9444  & -21.6717  & 2.03668  & 8.75374  & -6.71706 \\[0.03in]
  C & C & 1.10322  & 15.9222  & -14.8190  & 3.53085  & 12.8222  & -9.29137 \\[0.03in]
  B & D & 3.20643  & 0 & 3.20643  & 2.46990  & 25.4939  & -23.0240 \\[0.03in]
  D & B & 0 & 22.9444  & -22.9444  & 2.18442 & 8.81189  & -6.62742 \\[0.03in]
  C & D & 1.38153  & 0 & 1.38153  & 4.55091  & 27.8220  & -23.2711 \\[0.03in]
  D & C & 0 & 16.3974  & -16.3974  & 5.33103  & 14.4275  & -9.09652 \\[0.03in]
  D & D & 0 & 0 & 0 & 5.86474  & 29.8551  & -23.9904 \\[0.03in]
  \hline
\end{tabular}
\label{table:CDS4}
\end{table}

\begin{table}[H]
\caption{Mitigation in the CVA of a CDS (in \%), $\alpha=1$, Linear collateral rate: $\rho^i_l$}
\smallskip
\centering
\renewcommand{\arraystretch}{1.3}
\begin{tabular}{cccccccc}
\hline \hline
   (B,B)  & (B,C)   & (C,B)  &  (C,C)  & (B,D)   & (D,B)   & (C,D)   & (D,C)  \\
    \hline
   70.31\% & 58.06\%  & 72\% & 61.27\%  & 4.03\% &  72.37\% & 3\% & 62.08\%  \\[0.03in]
  \hline
\end{tabular}
\label{table:CDS4g}
\end{table}

 \begin{figure}[H]
\centering
\begin{tabular}{c}
\epsfig{file=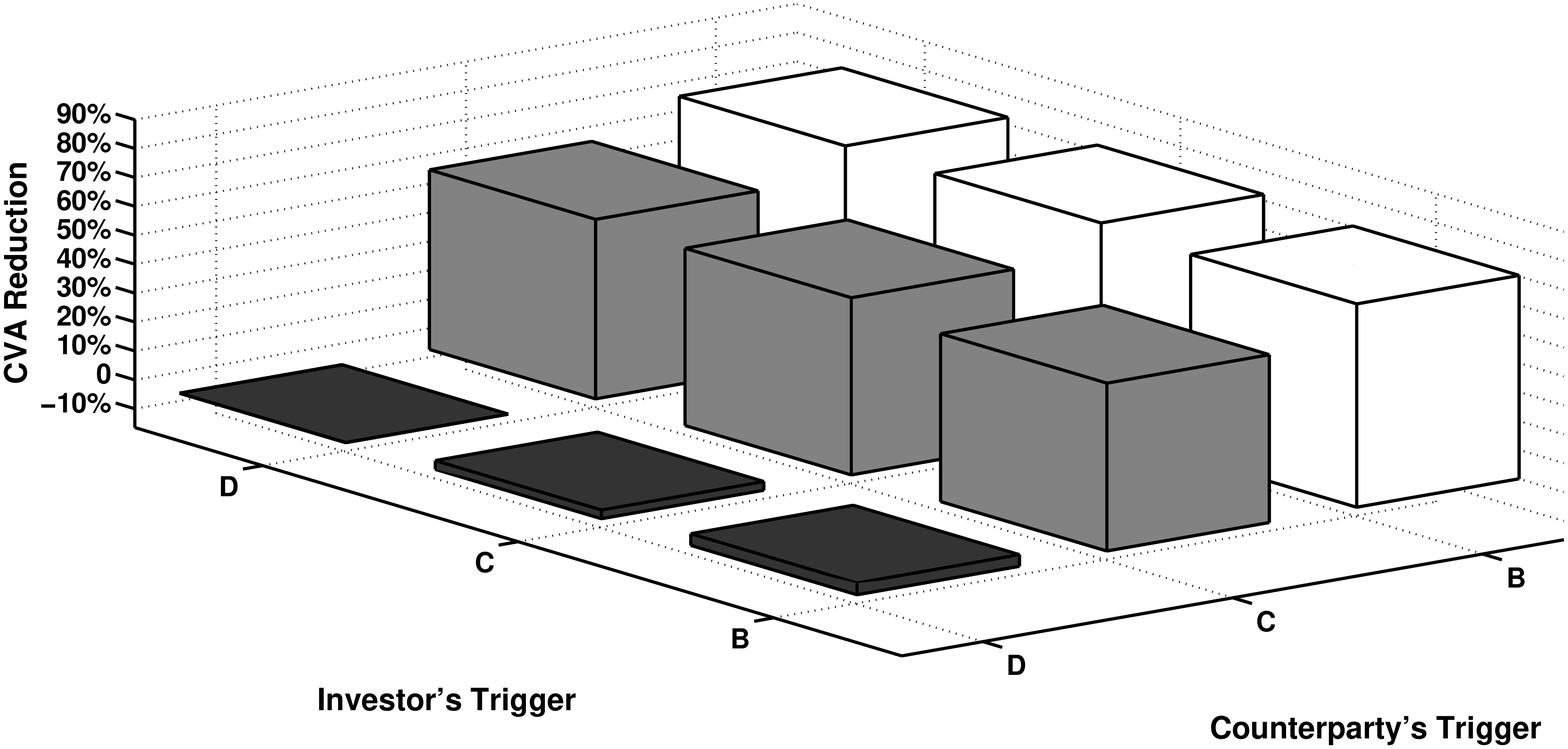,width=0.90\linewidth,clip=}
\end{tabular}
\caption{Mitigation in the CVA of a CDS (in \%), $\alpha=1$, Linear collateral rate: $\rho^i_l$}
\label{CVArating5a}
\end{figure}

\begin{table}[tH]
\caption{CVA and RVA (1\$ $\times 10^{-3}$) components of a CDS, $\alpha=0$, Exponential collateral rate: $\rho^i_e$}
\smallskip
\centering
\renewcommand{\arraystretch}{1.3}
\begin{tabular}[c]{cccccccc}
\hline \hline
$K_1$ & $K_2$ & URVA  & DRVA & RVA & UCVA$^R$ & DVA$^R$ & CVA$^R$ \\
    \hline
  B & B & 0.98820  & 11.0877  & -10.0994  & 1.79938  & 6.15293  & -4.35355 \\[0.03in]
  B & C & 1.26294  & 7.94327  & -6.68033  & 2.33252  & 9.20794  & -6.87542 \\[0.03in]
  C & B & 0.33328  & 11.3906  & -11.0573  & 2.71652  & 6.23845  & -3.52192 \\[0.03in]
  C & C & 0.34274  & 7.53087  & -7.18812  & 3.14017  & 10.4728  & -7.33268 \\[0.03in]
  B & D & 0.98106  & 0 & 0.98106  & 1.93606  & 18.0396  & -16.1036 \\[0.03in]
  D & B & 0 & 12.6680  & -12.6680  & 2.98248  & 6.16772  & -3.18523 \\[0.03in]
  C & D & 0.39375  & 0 & 0.39375  & 2.75329  & 17.6440  & -14.8907 \\[0.03in]
  D & C & 0 & 7.15044  & -7.15044  & 3.01623  & 9.48455  & -6.46832 \\[0.03in]
  D & D & 0 & 0 & 0 & 3.19195  & 18.7888  & -15.5969 \\[0.03in]
  \hline
\end{tabular}
\label{table:CDS5}
\end{table}

\begin{table}[H]
\caption{Mitigation in the CVA of a CDS (in \%), $\alpha=0$, Exponential collateral rate: $\rho^i_e$}
\smallskip
\centering
\renewcommand{\arraystretch}{1.3}
\begin{tabular}{cccccccc}
\hline \hline
     (B,B)   & (B,C)    & (C,B)    &  (C,C)    & (B,D) & (D,B)     & (C,D)  & (D,C)  \\
    \hline
   72.09\%  & 55.92\%  &  77.42\%  & 52.98\%  & -3.25\% & 79.58\%   & 4.53\% & 58.53\%  \\[0.03in]
  \hline
\end{tabular}
\label{table:CDS5g}
\end{table}

 \begin{figure}[H]
\centering
\begin{tabular}{c}
\epsfig{file=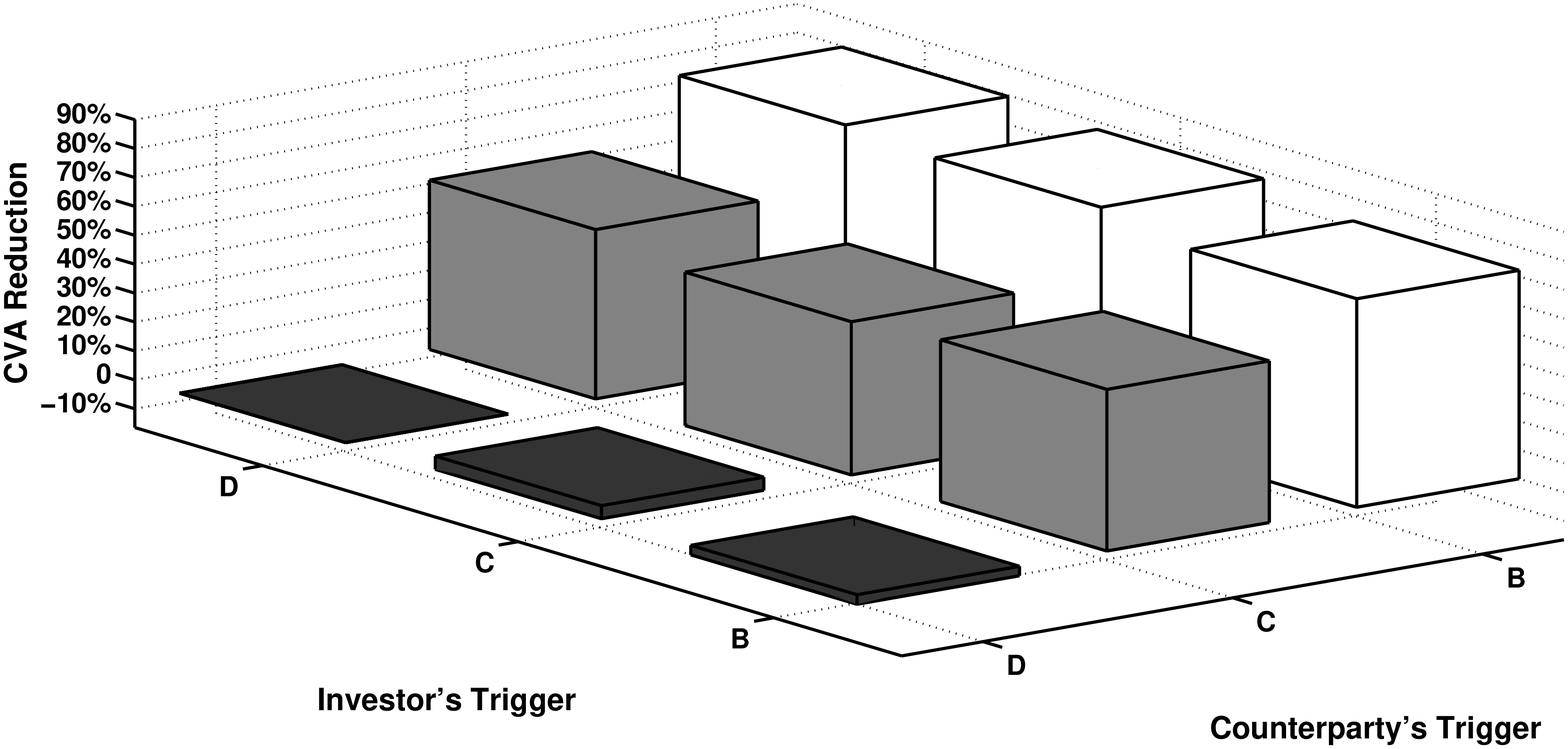,width=0.90\linewidth,clip=}
\end{tabular}
\caption{Mitigation in the CVA of a CDS (in \%), $\alpha=0$, Exponential collateral rate: $\rho^i_e$}
\label{CVArating6}
\end{figure}

\begin{table}[tH]
\caption{CVA and RVA (1\$ $\times 10^{-3}$) components of a CDS, $\alpha=1$, Exponential collateral rate: $\rho^i_e$}
\smallskip
\centering
\renewcommand{\arraystretch}{1.3}
\begin{tabular}[c]{cccccccc}
\hline \hline
$K_1$ & $K_2$ & URVA  & DRVA & RVA & UCVA$^R$ & DVA$^R$ & CVA$^R$ \\
    \hline
  B & B & 1.80010  & 11.7752  & -9.97518  & 2.84797  & 6.92726  & -4.07929 \\[0.03in]
  B & C & 1.88120  & 8.29421  & -6.41300  & 2.17836  & 8.83311  & -6.65475 \\[0.03in]
  C & B & 0.64039  & 13.1186  & -12.4782  & 2.07924  & 6.89925  & -4.82001 \\[0.03in]
  C & C & 0.62521  & 8.91043  & -8.28522  & 3.26987  & 10.3745  & -7.10464 \\[0.03in]
  B & D & 1.69567  & 0 & 1.69567  & 2.62547  & 15.2304  & -12.6049 \\[0.03in]
  D & B & 0 & 11.7182  & -11.7182  & 2.76178  & 7.08910  & -4.32731 \\[0.03in]
  C & D & 0.747189  & 0 & 0.747189  & 4.13449  & 16.4391  & -12.3047 \\[0.03in]
  D & C & 0 & 8.78431  & -8.78431  & 3.61762  & 10.3593  & -6.74173 \\[0.03in]
  D & D & 0 & 0 & 0 & 4.08585  & 19.2965  & -15.2107 \\[0.03in]
  \hline
\end{tabular}
\label{table:CDS6}
\end{table}

\begin{table}[H]
\caption{Mitigation in the CVA of a CDS (in \%), $\alpha=1$, Exponential collateral rate: $\rho^i_e$}
\smallskip
\centering
\renewcommand{\arraystretch}{1.3}
\begin{tabular}{cccccccc}
\hline \hline
   (B,B)  & (B,C)   & (C,B)  &  (C,C)  & (B,D)   & (D,B)   & (C,D)   & (D,C)  \\
    \hline
   73.18\% & 56.25\%  & 68.31\% & 53.29\%  & 17.13\% &  71.55\% & 19.11\% & 55.68\% \\[0.03in]
  \hline
\end{tabular}
\label{table:CDS6g}
\end{table}

\begin{figure}[H]
\centering
\begin{tabular}{c}
\epsfig{file=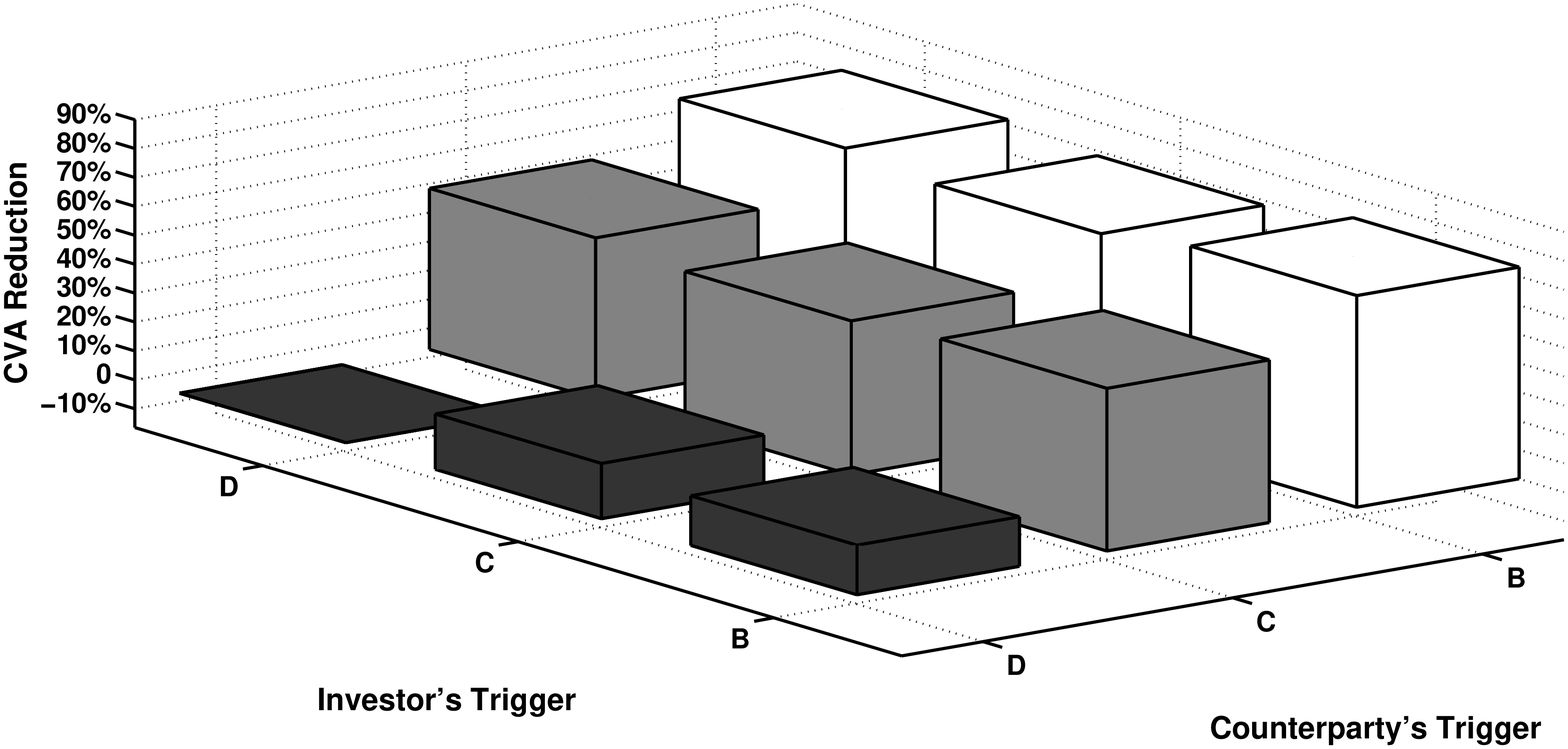,width=0.90\linewidth,clip=}
\end{tabular}
\caption{Mitigation in the CVA of a CDS (in \%) (in \%), $\alpha=1$, Exponential collateral rate: $\rho^i_e$}
\label{CVArating6a}
\end{figure}

\bibliographystyle{alpha}
%\bibliography{MathFinanceMasterCopy}
%\bibliography{iyigunler_MF}

\end{document}